\newcites{response}{References}
\newtheorem{theorem}{Theorem}
\newtheorem{lemma}{Lemma}
\newtheorem{proposition}{Proposition}
\theoremstyle{definition}
\theoremstyle{remark}
\newtheorem*{remark}{Remark}
\newcommand{\argmin}{\mathop{\rm argmin}}
\newcommand{\argmax}{\mathop{\rm argmax}}
\newcommand{\Prob}{\mathbb{P}}
\newcommand{\Expect}{\mathbb{E}}
\newcommand{\rank}{\mathop{\sf rank}}
\newcommand{\diag}{\mathop{\text{diag}}}
\newcommand{\supp}{{\rm supp}}
\newcommand{\D}{\ensuremath{\mathcal{D}}}
\newcommand{\M}{\mathcal{M}}
\newcommand{\Nm}{\mathcal{N}}
\newcommand{\Ws}{\ensuremath{W^{\mathrm{S}}}}
\newcommand{\Wbc}{\ensuremath{W^{\mathrm{BC}}}}
\newcommand{\Wst}{\ensuremath{W^{\mathrm{S} \tau}}}
\newcommand{\Wbct}{\ensuremath{W^{\mathrm{BC} \tau}}}
\newcommand{\s}{\ensuremath{\mathrm{S}}}
\newcommand{\T}{\ensuremath{^{\mathcal{T}}}}
\newcommand{\Zt}{\ensuremath{\tilde{Z}\T}}
\newcommand{\bc}{\ensuremath{\mathrm{BC}}}
\newcommand{\stau}{\ensuremath{\mathrm{S} \tau}}
\newcommand{\bct}{\ensuremath{\mathrm{BC} \tau}}
\newcommand{\Fs}{\ensuremath{\mathcal{F}}}
\newcommand{\Vs}{\ensuremath{V_s}}
\newcommand{\A}{\ensuremath{\mathrm{BC\star}}} 
\newcommand{\B}{\ensuremath{\mathrm{S\star}}} 
\newcommand{\upa}{\ensuremath{^{\A}}}
\newcommand{\upb}{\ensuremath{^{\B}}}
\newcommand{\C}{\star}
\newcommand{\upc}{\ensuremath{^{\C}}}
\def\D{\mathcal{D}}
\def\F{\mathcal{F}}
\def\E{\mathbb{E}}
\def\R{\mathbb{R}}
\def\S{S}
\def\diag{\mathrm{diag}}
\def\sign{\mathrm{sign}}
\def\vecs{s}
\begin{document}

\def\spacingset#1{\renewcommand{\baselinestretch}%
{#1}\small\normalsize} \spacingset{1}

\title{\bf Controlling the False Discovery Rate in Transformational Sparsity: Split Knockoffs}
\author{Yang Cao$^1$, Xinwei Sun$^2$\thanks{\url{sunxinwei@fudan.edu.cn}} and Yuan Yao$^1$\thanks{\url{yuany@ust.hk}}
    ~\\
    $^1$Hong Kong University of Science and Technology \\
    $^2$Fudan University
    }
    \date{}
\maketitle
\bigskip

\begin{abstract}
    Controlling the False Discovery Rate (FDR) in a variable selection procedure is critical for reproducible discoveries, and it has been extensively studied in sparse linear models. However, it remains largely open in scenarios where the sparsity constraint is not directly imposed on the parameters but on a linear transformation of the parameters to be estimated. Examples of such scenarios include total variations, wavelet transforms, fused LASSO, and trend filtering. In this paper, we propose a data-adaptive FDR control method, called the \emph{Split Knockoff} method, for this transformational sparsity setting. The proposed method exploits both variable and data splitting. The linear transformation constraint is relaxed to its Euclidean proximity in a lifted parameter space, which yields an orthogonal design that enables the orthogonal Split Knockoff construction. To overcome the challenge that exchangeability fails due to the heterogeneous noise brought by the transformation, new inverse supermartingale structures are developed via data splitting for provable FDR control without sacrificing power. Simulation experiments demonstrate that the proposed methodology achieves the desired FDR and power. We also provide an application to Alzheimer's Disease study, where atrophy brain regions and their abnormal connections can be discovered based on a structural Magnetic Resonance Imaging dataset (ADNI).
\end{abstract}

\noindent%
{\it Keywords:} False Discovery Rate, Split Knockoff, Transformational Sparsity, Alzheimer's Disease
\vfill

\newpage

\tableofcontents

\newpage

\section{Introduction}
Variable selection or sparse model selection is a fundamental problem in statistical research. Equipped with the wide spread of modern data acquisition facilities, one can simultaneously measure a large number of covariates or features and it is desired to discover a relatively small amount of dominant factors governing the variations. In many scenarios, such a sparsity constraint does not always rely on a small number of measured covariates or features, but is about some transformations, often linear, of parameters. For instance, in signal processing such as images, sparsity of edges or jumps lies in wavelet transforms \citep{donoho1995adapting} or total variations \citep{ROF92,CDOS12}; in genomic studies of human cancer, sparsity in 1-D fused LASSO \citep{tibshirani2005sparsity} is associated with abnormality of copy numbers of genome orders in comparative genomic hybridization (CGH) data, a valuable way of understanding human cancer; in trend filtering \citep{kim2009ell_1}, sparsity lies in the change point detection of piece-wise linear time series; in statistical ranking, sparsity of graph gradients disclose the partial orders or candidate groups in ties \citep{SplitLBI,huang2020boosting}. 

In this paper, consider the following \emph{transformational sparsity} or  \emph{structural sparsity} problem in a linear regression where a linear transformation of parameters is sparse.
\begin{align}
  y = X\beta^*+\varepsilon,\
  \gamma^* = D\beta^*,\label{eq: model}
\end{align}
where $y\in \R^n$ is the response vector, $X\in \mathbb{R}^{n\times p}$ is the design matrix, $\beta^*\in \R^p$ is the unknown coefficient vector, $D\in \R^{m\times p}$ is the linear transformer, $\gamma^*\in \R^m$ is the sparse vector, 
and $\varepsilon\sim \Nm(0,\sigma^2 I_n)$ is Gaussian noise. 
Our purpose is to recover the support set of $\gamma^*$. For shorthand notations, we define the nonnull set $\S_1=\supp(\gamma^*)=\{i:\gamma^*_i\neq 0\}$, and the null set $\S_0=\{i:\gamma^*_i=0\}$. Note that if we take $m=p$ and $D=I_p$, this model is degenerated into the standard sparse linear regression. Hence model \eqref{eq: model} can be viewed as a generalization of the traditional sparse regression problem.

\paragraph*{Example.} Considering brain imaging data analysis for Alzheimer's Disease, $y$ represents the Alzheimer’s Disease Assessment Scale (ADAS) of patients, $X_{i,j}$ measures the gray matter volume of brain region $j$ in the cerebrum brain of subject $i$. Taking the identity matrix $D=I$ (where $m=p=90$), one searches for highly atrophy brain regions for AD; taking $D$ as the graph gradient operator on the brain region connectivity graph (where $m=463>p=90$), abnormal connections of brain regions due to disease progression are discovered. Sparsity associated with various linear transformations above discloses both important lesion regions that undergo severe damages in disease progression and highly differential connections that link stable regions to Hippocampus, one of the most important regions accounting for Alzheimer's Disease \citep{juottonen1999comparative}. Such discoveries based on the methodology in this paper are illustrated by Figure~\ref{fig: ad reg and con}, whose details will be discussed in Section~\ref{Sec: applications}.

\begin{figure}[!ht]
\centering
\includegraphics[width=\textwidth]{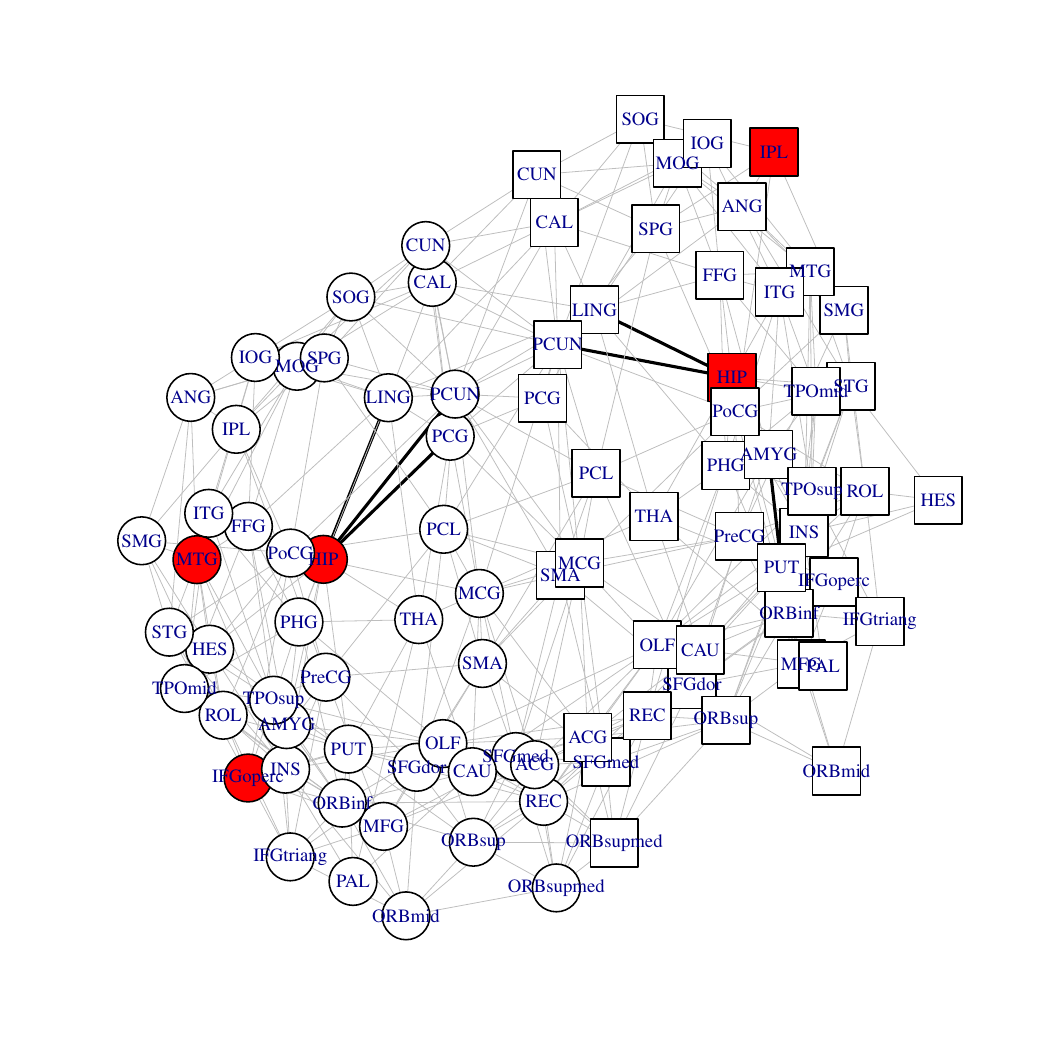}
\caption{Selected brain regions and connections in Alzheimer's Disease. Each vertex represents a Cerebrum brain region in Automatic Anatomical Labeling (AAL) atlas \citep{tzourio2002automated}, whose abbreviations and full names are provided in Table~\ref{tab:name of region}. Here, vertices with a circle shape represent the left brain regions, while the ones with a square shape represent the right brain regions. An edge connects two vertices if and only if the associated two brain regions are adjacent. The method in this paper selects red vertices as significantly degenerate lesion regions and bold edges for highly differential connections.} 
\label{fig: ad reg and con}
\end{figure}

 To evaluate the accuracy of an empirically discovered set $\hat{S}$ of non-null parameters, the false discovery rate (FDR) is the goal of this paper in favor of the reproducibility of discoveries. Formally speaking, FDR of a selection set $\hat{S}$ is defined as
    \begin{align}
    \label{eq: fdr}
        \mathrm{FDR} = & \E \left[\frac{|\{i:i\in \hat{S}\cap \S_0\}|}{|\hat{S}|\vee 1}\right].
    \end{align}
    
The problem of controlling the False Discovery Rate (FDR) has been widely studied since \cite{benjamini1995controlling}. In particular, \cite{barber2015controlling} recently proposed the knockoff method for sparse linear regression when $D=I_p$, with a theoretical guarantee of an upper bound of FDR. This work has been further applied to group sparse models, multi-task regression models \citep{dai2016knockoff}, Huber's robust regression for outlier detection \citep{xu2016false}, and high-dimensional scenarios \citep{barber2019knockoff}. In particular, \cite{candes2016panning} proposed Model-X knockoff for random design, and \citep{barber2020robust} showed that Model-X knockoff is robust to estimation error of random design parameters. Deep knockoff \citep{romano2019deep} has been developed for nonparametric random designs, and side information is considered in \cite{Ren20_side}. Derandomized Knockoffs have been proposed in \cite{Ren21_jasa}. However, for the general constraint $\gamma=D\beta$, it is not clear how to apply such knockoff methods to transformational sparsity, except for some special cases that can be reduced to sparse linear regression (see Section~\ref{sec:genlasso} for details).

To fill this gap, we propose a data-adaptive selection method for the transformational sparsity problem by developing a new method in the framework of knockoffs, called the \emph{Split Knockoff} method. In our approach, the linear submanifold constraint $\gamma=D\beta$ is relaxed into its Euclidean neighborhood with a proximity parameter $\nu$, which is known as the variable splitting technique in optimization. This leads to an orthogonal design as well as orthogonal knockoff copies. To overcome the challenge of exchangeability failure, the whole dataset is further split into two independent subsets, with $\beta$ being estimated on one and sparse $\gamma$ selected on the other with new designs of Split Knockoff statistics. Equipped with orthogonal Split Knockoff copies, the crucial new statistics design has signs as independent Bernoulli random variables, decoupled from magnitudes, that enables the inverse supermartingale inequalities to guarantee FDR control. It also enables us to handle some traditional types of knockoff statistics whose signs and magnitudes are dependent, via filtration refinement. The $\nu$-relaxation above can be used for power optimization as a trade-off between incoherence improvement and loss of weak signals.

The methodology is validated by simulation experiments and applied to the study of Alzheimer's Disease, where atrophy brain regions and their abnormal connections are successfully discovered.

\subsection{Organization of the paper}

\begin{itemize}

    \item In Section \ref{sec:splitknockoff}, we introduce the methodology of Split Knockoffs.
    \item In Section \ref{sec:mainresults}, the FDR of Split Knockoffs is shown under control, with an outline on how to achieve this by developing some new inverse supermartingale constructions.
    \item In Section \ref{sec: hd}, a high dimensional generalization of the methodology is discussed.
    \item In Section \ref{sec: discussion}, several important topics are discussed: (i) why Knockoffs with generalized LASSO fail in general settings of transformational sparsity; (ii) the main challenge in establishing theoretical FDR control of Split Knockoffs, the failure of exchangeability; (iii) how the $\nu$-relaxation affects the selection power in terms of the model selection consistency of Split LASSO regularization paths.
    \item In Section \ref{Sec: simulation results}, simulation experiments are conducted with a comparison of the performance of Split Knockoffs in terms of FDR and power.
    \item In Section \ref{Sec: applications}, Alzheimer's Disease is studied based on brain imaging data, where Split Knockoffs discovers the abnormal lesion regions and their connections in brains.
    \item In Section \ref{sec: conclusion}, conclusions and future directions are discussed.
\end{itemize}
Details for proofs, as well as additional results on the application of Alzheimer's disease are provided in supplementary material sections.

\section{The Split Knockoff Method}

\label{sec:splitknockoff}

Let's start with the generalized LASSO \citep{tibshirani2011solution}, the most popular method to handle the transformational sparsity. Recall that the generalized LASSO solves the following optimization problem with $\lambda>0$, 
\begin{equation} \label{eq:gen_lasso}
    \min_\beta\ \frac{1}{2n}\|y-X\beta\|_2^2+\lambda\|\gamma\|_1,\ \ \ \ \mbox{subject to $\gamma = D\beta$.} 
\end{equation}
However, the major hurdle to prohibit a knockoff design for adaptive variable selection lies in the linear constraint $\gamma = D\beta$ under general $D$, see Section \ref{sec:genlasso} for detailed discussions. 

To overcome this challenge, a natural treatment here is to relax the linear constraint $\gamma = D\beta$ to its Euclidean neighbourhood (proximity) in the lifted parameter space $(\beta,\gamma)$, rendering an unconstrained optimization problem,
\begin{equation} \label{eq:split_lasso}
    \min_{\beta,\gamma} \frac{1}{2n}\|y - X\beta\|_2^2 + \frac{1}{2\nu}\|D\beta - \gamma\|_2^2 + \lambda\|\gamma\|_1,  \ \ \ \lambda>0, 
\end{equation}
where $\nu>0$ is a parameter to control the Euclidean gap between $D\beta$ and $\gamma$. In other words, we shall allow model parameters to vary in the neighborhood or proximity of the linear subspace $D\beta = \gamma$, where the larger is $\nu$, the larger is the relaxation proximity. Such a $\nu$-relaxation renders an \emph{orthogonal design} (identity matrix) for $\gamma$, leading to \emph{orthogonal Split Knockoff} features that are crucial for FDR control, as well as improving the 
selection power to identify strong nonnull features in the presence of noise, as we shall see below. Such a treatment is known as the variable splitting in optimization, hence \eqref{eq:split_lasso} is called \emph{Split LASSO} in this paper.

Another critical treatment in our methodology is that, instead of using the whole dataset, we randomly split the data into two independent subsets, with one to estimate non-sparse intercept parameter $\beta$ and the other to construct knockoffs for the selection of sparse parameter $\gamma$. Such a treatment is crucial to enable independent signs of Split Knockoff statistics, as Bernoulli random variables, recovering supermartingale structures for provable FDR control. 

The detailed procedure goes as follows.

\subsection{Data Splitting and Intercept Estimation on the First Dataset}

\label{sec: intercept est}

\paragraph*{Data Splitting.} The dataset $\D=(X, y)$ is randomly split into two subsets as $\D_1=(X_1, y_1)$ and $\D_2=(X_2, y_2)$ with $n_1$ and $n_2$ samples respectively, where $n_1+n_2=n$, and $n_2\ge m+p$.\footnote{When this constraint is not satisfied, one can use the first dataset for variable screening to reduce the dimension as presented in Section \ref{sec: hd}.}

\paragraph*{Estimation of Intercept ($\widehat{\beta}(\lambda)$) with $\D_1$.}

In the following, we present two typical choices of $\widehat{\beta}(\lambda)$ estimated with $\D_1$. Both choices are later shown to have theoretical FDR control (detailed in Theorem \ref{theorem: fdr} in Section \ref{sec:mainresults}) and demonstrate the desired selection power empirically (see Sections \ref{Sec: simulation results} and \ref{Sec: applications} for more details).

\begin{enumerate}
    \item Take $\widehat{\beta}(\lambda)$ from the Split LASSO regularization path \eqref{eq:split_lasso}, based on the first dataset $\D_1=(X_1, y_1)$. 
    \item Alternatively, to maximize power, one can take $\widehat{\beta}(\lambda)=\widehat{\beta}_{\hat{\lambda},\hat{\nu}}$ as an optimal estimator with minimal cross-validation loss (with respect to the parameters $\lambda$ and $\nu$) on the Split LASSO path.
\end{enumerate}

In fact, the intercept $\widehat{\beta}(\lambda)$ determined by $\D_1$ can be any continuous function with respect to $\lambda$ that satisfies\footnote{This condition is proposed to ensure (by Proposition \ref{prop: zero probability event}) that $Z$ defined in Section \ref{sec: feature and knockoff sig} won't be infinite.} $\lim_{\lambda\to\infty}\frac{\widehat{\beta}(\lambda)}{\lambda} = 0$ to achieve theoretical FDR control, as stated in Theorem \ref{theorem: fdr}. However, we recommend learning $\widehat{\beta}(\lambda)$ from $\D_1$ to demonstrate the desired selection power in addition to FDR control. Examples of $\widehat{\beta}(\lambda)$ that satisfy the above conditions include constant estimators and the solution paths provided above.

\subsection{Construction of Knockoff Matrix with the Second Dataset}
\label{sec: construction}

Now we construct fake knockoff features based on the second dataset $\D_2=(X_2, y_2)$. 
For this purpose, the transformational sparsity model \eqref{eq: model} on the second dataset $\D_2=(X_2, y_2)$ can be rewritten as the following linear regression in the lifted parameter $(\beta^*,\gamma^*)$ space with heterogeneous noise: 
\begin{equation} \label{model}
  \tilde{y}=A_\beta\beta^*+A_\gamma\gamma^*+\tilde\varepsilon,
\end{equation}
where we denote $\varepsilon_2$ to be the Gaussian 
noise in $y_2$ and
\begin{align}
\tilde{y}=
\left(
\begin{array}{c}
\frac{y_2}{\sqrt{n_2}} \\
0_m
\end{array}
\right),\ 
A_\beta=
\left(
\begin{array}{c}
\frac{X_2}{\sqrt{n_2}} \\
\frac{D}{\sqrt{\nu}} 
\end{array}
\right),\ 
A_\gamma=
\left(
\begin{array}{c}
0_{n_2\times m} \\
-\frac{I_m}{\sqrt{\nu}} 
\end{array}
\right),\ 
\tilde{\varepsilon}=
\left(
\begin{array}{c}
\frac{\varepsilon_2}{\sqrt{n_2}} \\
0_m 
\end{array}
\right).\label{eq: features}
\end{align}

\paragraph*{Knockoff Construction on $\D_2$.} The split knockoff copy matrix $\tilde{A}_\gamma$ satisfies
\begin{align}
    \tilde{A}_\gamma^T\tilde{A}_\gamma = A_\gamma^TA_\gamma,\ 
    A_\beta^T\tilde{A}_\gamma = A_\beta^TA_\gamma,\ A_\gamma^T\tilde{A}_\gamma =  A_\gamma^TA_\gamma-\diag(\vecs),\label{eq: copy}
\end{align}
where $\vecs\in\mathbb R^m$ is some non-negative vector. Since the original features $A_\gamma$ is an \emph{orthogonal design}, the Split Knockoff matrix, $\tilde{A}_\gamma$, is thus an \emph{orthogonal matrix}, that imitates the inner product or angles of features in $A_\gamma$, but is different to the original features as much as possible. In particular, the existence of the nonsparse intercept features further requires that $\tilde{A}_\gamma$ imitates the inner product or angles between $A_\gamma$ and $A_\beta$. Explicit solutions of Equation \eqref{eq: copy} when $n_2\ge m+ p$ are discussed in Section \ref{sec: details for copy}. 
For convenience, $\tilde{A}_\gamma$ is partitioned according to that of $A_\gamma$, $\tilde{A}_\gamma^T=(\tilde{A}_{\gamma,1}^T;\tilde{A}_{\gamma,2}^T)^T$, i.e. $\tilde{A}_{\gamma,1}\in\R^{n_2\times m}$ is the submatrix consisting of the first $n_2$ rows of $\tilde{A}_{\gamma}$ and $\tilde{A}_{\gamma,2}\in\R^{m\times m}$ is the remaining submatrix.

\subsection{Feature and Knockoff Significance}

\label{sec: feature and knockoff sig}
Given $(\tilde{y},A_\beta, A_\gamma,\tilde{A}_\gamma)$ and $\widehat{\beta}(\lambda)$, we are now ready to compute feature and knockoff significance statistics.\footnote{Here, we compute the regularization paths on $A_\gamma$ and $\tilde{A}_\gamma$ separately rather than jointly, which can lower the computational cost at lower dimensionality.} 

\paragraph*{Feature Significance ($Z$) on $\D_2$.}

\begin{enumerate}
\item Compute the Split LASSO regularization path for $\gamma$,
    \begin{align} 
        \gamma(\lambda)&:=\arg\min_{\gamma} \frac12\|\tilde y-A_\beta\widehat{\beta}(\lambda)-A_\gamma\gamma\|_2^2+\lambda\|\gamma\|_1, \ \ \ \lambda>0.\label{eq:gamma} 
\end{align}
\item Define the feature significance as the supremum of $\lambda>0$ on the regularization path $\gamma_i(\lambda)$ such that $\gamma_i(\lambda)$ is nonzero for all $i\in\{1,2,\cdots, m\}$:
\begin{align}
    Z_i=  \sup\left\{\lambda>0:\gamma_i(\lambda)\neq0\right\}, \label{def: Z}
\end{align}
or $Z_i = 0$ if the set $\left\{\lambda>0:\gamma_i(\lambda)\neq0\right\}$ is empty.
\end{enumerate}

\paragraph*{Knockoff Significance ($\tilde{Z}$) on $\D_2$.}

\begin{enumerate}
\item Compute the Split LASSO regularization path for $\tilde{\gamma}$,
   \begin{align}
    \tilde{\gamma}(\lambda) :=\arg \min_{\tilde{\gamma}} \frac12\|\tilde y-A_\beta\widehat{\beta}(\lambda)-\tilde{A}_\gamma\tilde{\gamma}\|_2^2+\lambda\|\tilde{\gamma}\|_1, \ \ \ \lambda>0.\label{eq:t_gamma}
    \end{align}
\item Define the knockoff significance as the supremum of $\lambda>0$ on the regularization path $\tilde{\gamma}_i(\lambda)$ such that $\tilde{\gamma}_i(\lambda)$ is nonzero, for all $i\in\{1,2,\cdots, m\}$:
\begin{align}
    \tilde{Z}_i=  \sup\left\{\lambda>0:\tilde\gamma_i(\lambda)\neq0\right\}. \label{def: tZ}
\end{align}
or $\tilde{Z}_i = 0$ if the set $\left\{\lambda>0:\tilde\gamma_i(\lambda)\neq0\right\}$ is empty.
\end{enumerate}

In summary, Equation \eqref{eq:gamma} and \eqref{eq:t_gamma} define regularization paths $\gamma(\lambda)$ and $\tilde{\gamma}(\lambda)$ that typically transition from zero to nonzero as $\lambda$ decreases from $+\infty$ to 0. The supremums of $\lambda$ for which $\gamma(\lambda)\neq 0$ or $\tilde{\gamma}(\lambda)\neq 0$ indicate the importance of $\gamma$ and $\tilde{\gamma}$, respectively. The larger these supremums, the more important the corresponding features are. These supremums are recorded as the feature significance $Z$ or the knockoff significance $\tilde{Z}$. 

In addition to recording the supremums of $\lambda$ for which $\gamma(\lambda)\neq 0$ or $\tilde{\gamma}(\lambda)\neq 0$ as significance statistics, the signs of $\gamma(\lambda)$ and $\tilde{\gamma}(\lambda)$ at such supremums can also be recorded. This allows for a \emph{truncation of sign mismatch} on $\tilde{Z}$ to define a new knockoff significance statistic.

Formally speaking, for all $i$, define the reference signs $r_i$ and $\tilde{r}_i$ as the signs of $\gamma_i(\lambda)$ and $\tilde{\gamma}_i(\lambda)$ ``upon'' becoming nonzero (i.e. at $Z$ and $\tilde{Z}$) when $\lambda$ decreases from $+\infty$ to 0. That is, 
\begin{align}
    &r_i:=\left\{
    \begin{array}{ccl}
        1   &   & \mbox{if }Z_i> 0\mbox{ and }\limsup_{\lambda\to Z_i^-}\sign(\gamma_i(\lambda)) = 1,\\
        0   &   &  \mbox{if }Z_i= 0,\\
        -1  &   & \mbox{if }Z_i> 0\mbox{ and }\liminf_{\lambda\to Z_i^-}\sign(\gamma_i(\lambda)) = -1,\label{def: equivalent def r}\\
    \end{array} \right.\\
    &\tilde r_i:=\left\{\begin{array}{ccl}
        1   &   & \mbox{if }\tilde{Z}_i> 0 \mbox{ and }\limsup_{\lambda\to \tilde Z_i^-}\sign(\tilde\gamma_i(\lambda)) = 1,\\
        0   &   &  \mbox{if }\tilde Z_i= 0,\\
        -1  &   & \mbox{if }\tilde Z_i> 0\mbox{ and }\liminf_{\lambda\to \tilde Z_i^-}\sign(\tilde\gamma_i(\lambda)) = -1.\\
    \end{array}\right.\label{def: equivalent def tilde r}
\end{align}
Proposition \ref{prop: well definedness} in Section \ref{sec:mainresults} provides an equivalent definition of $r$ and $\tilde{r}$ through the Karush-Kuhn-Tucker (KKT) conditions of Equation \eqref{eq:gamma} and Equation \eqref{eq:t_gamma}.

Using $r$ and $\tilde{r}$, the \emph{sign mismatch truncation} of $\tilde{Z}$ is defined by
\begin{equation} \label{eq:truncation}
    \tau(\tilde{Z}):=\tilde{Z}\odot {\bf 1}\{r = \tilde{r}\}
\end{equation}
where the symbol $\odot$ stands for Hadamard product.
Such a truncation sets zeroes those knockoff significance values if their signs differ to their associated feature significance values. 
As we shall see later in this paper, the truncated knockoff significance may reduce the conservativeness in selections compared with the untruncated one.

\subsection{Three Types of Split Knockoff Statistics}
\label{sec: w stat}
We introduce a family of Split Knockoff $W$ statistics. 

\paragraph*{Split Knockoff Statistics.}
\begin{enumerate}
    \item $\Ws:=Z\odot \sign(Z -\tilde{Z} )\footnote{It is shown in Proposition \ref{prop: zero probability event} that $\{\exists i: Z_i=\tilde{Z}_i>0\}$ is a zero probability event. For simplicity of notations, the event $\{\exists i: Z_i=\tilde{Z}_i>0\}$ will be omitted throughout this paper.},$ where S refers to ``Split''.
    \item $\Wst:=Z\odot \sign(Z -\tau(\tilde{Z}) ),$ where S$\tau$ refers to applying truncation \eqref{eq:truncation} on $\Ws$.
    \item $\Wbc:= (Z \vee \tilde{Z}) \odot\sign(Z -\tilde{Z} ),$ where BC refers to the original definition adopted by Barber-Cand\`{e}s in \cite{barber2015controlling}.
\end{enumerate}

All three versions will be handled in a unified framework in this paper. 
For shorthand notation, we use $W\upc$ to represent any one in the family, where $\C\in\{\s, \stau, \bc\}$. In all cases, as we wish to select $i$ when $W_i\upc$ is large and positive. Let $q$ be our target FDR, two data-dependent threshold rules on a pre-set nominal level $q$ are defined as 
\begin{equation*}
  \mbox{(Split Knockoff)}\ \ \ T_q\upc=\min\left\{\lambda\in\mathcal{W}\upc:\frac{|\{i:W_i\upc\le-\lambda\}|}{1\vee|\{i:W_i\upc\ge \lambda\}|}\le q\right\}, 
\end{equation*}
\begin{equation*}
  \mbox{(Split Knockoff+)}\ \ \ T_q\upc=\min\left\{\lambda\in\mathcal{W}\upc:\frac{1+|\{i:W_i\upc\le-\lambda\}|}{1\vee|\{i:W_i\upc\ge \lambda\}|}\le q\right\},
\end{equation*}
or $T_q\upc=+\infty$ if this set is empty, where $\mathcal{W}\upc = \{|W_j\upc|: j = 1, 2, \cdots, m\}\backslash\{0\}$.
In all cases, the selector is defined as
\begin{equation}
  \hat{S}\upc=\{i:W_i\upc\ge T_q\upc\}.\nonumber
\end{equation}

The following proposition summarizes an important inclusive relationship among the three selectors.
\begin{proposition}[Inclusion Property of Selectors]
    \label{prop: relations}
    For the selectors $\hat{S}^{\bc}$, $\hat{S}^{\s}$ and $\hat{S}^{\stau}$, there holds
    \begin{align*}
        \hat{S}^{\bc}\subseteq \hat{S}^{\s}\subseteq \hat{S}^{\stau}.
    \end{align*}
\end{proposition}
In other words, $\hat{S}^{\stau}$ achieves the highest selection power among these selectors while $\hat{S}^{\bc}$ is the most conservative. This relationship is also validated by simulation experiments in Section \ref{Sec: simulation results}. The proof of Proposition \ref{prop: relations} is given in Section \ref{sec: proof prop1}. Next section will disclose that they all achieve the desired FDR control.

\section{FDR Control of Split Knockoffs} 
\label{sec:mainresults}

In this section, we first show that the false discovery rates are under control for the Split Knockoff method, then present the key ideas on how to reach this analysis. In particular, we consider the basic setting with $n_2\ge m +p$ here, leaving the high dimensional extension to Section \ref{sec: hd}.

Specifically, the following theorem of FDR control is established for all three selectors. For Split Knockoff, we control a ``modified'' FDR (mFDR) as \cite{barber2015controlling} that adds $q^{-1}$ in the denominator, which should have little effect if a large number of features are selected; for Split Knockoff+, we get the exact FDR control.
  
    \begin{theorem}[FDR Control of Split Knockoffs]
    \label{theorem: fdr}
    For all $0<q\le 1$, $\C\in\{\s,\stau, \bc\}$, and all $\nu>0$, there holds
    \begin{itemize}
        \item[(a)] (mFDR of Split Knockoff)
    \begin{equation}
           \E\left[\frac{\left|\left\{i:i\in \hat{S}\upc\cap S_0\right\}\right|}{\left|\hat{S}\upc\right|+q^{-1}}\right]\le q.\nonumber
    \end{equation}
    \item[(b)] (FDR of Split Knockoff+)
    \begin{equation}
           \E\left[\frac{\left|\left\{i:i\in \hat{S}\upc\cap S_0\right\}\right|}{\left|\hat{S}\upc\right|\vee 1}\right]\le q.\nonumber
    \end{equation}
    \end{itemize}
  \end{theorem}
Since the FDR is uniformly under control for all $\nu>0$, 
 the hyperparameter $\nu$ from Split LASSO can be used to optimize the power of Split Knockoffs. The influence of $\nu$ on power is of two folds (see Section \ref{sec: sign consistency short}). On one hand, increasing $\nu$ may increase the power of discovering strong nonnull features whose magnitudes are large. It is because that enlarging $\nu$ will improve the incoherence condition of the Split LASSO such that strong nonnull features will appear earlier on the path than the weak ones and nulls. On the other hand, increasing $\nu$ may lose the power of discovering weak nonnull features whose magnitudes are below $\nu$ and the noise scale. Therefore a trade-off of these two aspects will lead to a good power by optimizing $\nu$, which can be achieved empirically by cross-validation as what we describe in intercept estimation $\widehat{\beta}$. 
 Performance of such a choice will be confirmed in simulation experiments in Section \ref{Sec: simulation results}.

The main challenge in establishing the FDR control of Split Knockoffs lies in the failure of exchangeability (see Section \ref{sec: fail exchange} for more discussions). Such a failure leads to dependency between the magnitude and sign of the original type of $W$ statistics ($\Wbc$), which further fails the inverse supermartingale argument in \cite{barber2015controlling, barber2019knockoff} for provable FDR control. To overcome this challenge, our key technical development is based on the orthogonal design and the rendered orthogonal Split Knockoff copies, which together with data splitting leads to independent Bernoulli processes for signs 
of $\Ws$ and $\Wst$, decoupled or independent to their magnitudes. This enables us some new \emph{inverse supermartingale structures} such that the FDR control can be proved, 
in a similar way to \cite{barber2019knockoff}. Furthermore, $\Ws$ enjoys a particularly nice structure that its associated inverse supermartingale has a more refined filtration than that of $\Wbc$, which enables us to reach an upper bound for $\Wbc$ via $\Ws$.
Below we will present the key ideas in detail for the FDR control. 

First of all, following the standard steps in knockoffs as in \cite{barber2015controlling}, we can transfer the problem of bounding FDR by $q$ in Theorem \ref{theorem: fdr} into the problem of bounding $\E\left[\M_{T_q\upc}(W\upc)\right]$ by one (see Section \ref{sec: proof thm} for details), where for any $T> 0$, $\M_{T}(W\upc)$ is defined as
\begin{align}\label{def: mtw}
    \M_{T}(W\upc)=\frac{\sum_{i\in S_0}1\{W_i\upc\ge T\}}{1+\sum_{i\in S_0}1\{W_i\upc\le -T\}}.
\end{align}
The following proposition reveals the relationships among $\M_{T}(\Wst)$, $\M_{T}(\Ws)$, and $\M_{T}(\Wbc)$, in particular.
\begin{proposition}[Inequalities of $W$ Statistics]
    \label{prop: inequality of w statistics}
    The following holds for all $T>0$. 
    \begin{enumerate}
        \item For all $i$, $\{\Wst_i\le-T\}\subseteq\{\Ws_i\le-T\}\subseteq\{\Wbc_i\le-T\}$.
        \item For all $i$, $\{\Wbc_i\ge T\} = \{\Ws_i\ge T\}\subseteq\{\Wst_i\ge T\}$.
        \item Combining the above two points together, there holds $\M_{T}(\Wst)\ge \M_{T}(\Ws)\ge \M_{T}(\Wbc)$.
    \end{enumerate}
\end{proposition}

Proposition \ref{prop: inequality of w statistics} suggests that the statistics $\Wst$ are the least conservative in terms of FDR, since $\M_{T}(\Wst) = \max\{\M_{T}(\Wst),\M_{T}(\Ws),\M_{T}(\Wbc)\}$, where $\M_{T}(W\upc)$ is introduced to upper bound the FDR for $\C\in\{\s,\stau, \bc\}$ respectively. The proof of Proposition \ref{prop: inequality of w statistics} is provided in Section \ref{sec: proof inequality of w statistics}.

Next, we will handle the FDR control of Split Knockoff statistics $\Ws$ and the Barber-Cand\`{e}s type statistics $\Wbc$ in two cases, respectively, while the analysis on the sign-mismatch truncated statistics $\Wst$ can be done in a similar way as $\Ws$. The section is organized as follows:

\begin{enumerate}
    \item An inverse supermartingale structure on $\M_{T}(\Ws)$ is introduced, which enables the optional stopping theorem to bound $\E\left[\M_{T_q^{\s}}(\Ws)\right]$, in Section \ref{sec:s analysis}.
    \item Barber-Cand\`{e}s type of knockoff statistics $\Wbc$ renders a coarser filtration than $\Ws$, which enables $\E\left[\M_{T_q^{\bc}}(\Ws)\right]$ providing an upper bound of $\E\left[\M_{T_q^{\bc}}(\Wbc)\right]$, in Section \ref{sec: BC W analysis}.
\end{enumerate}

\subsection{Case I: Split Knockoff Statistics}
\label{sec:s analysis}

The key property of Split Knockoff statistics $\Ws$ is that the magnitudes $|\Ws|$ are independent of the signs ${\text{sign}(\Ws)}$, and the signs ${\text{sign}(\Ws)}$ are independent Bernoulli random variables. This property allows us to construct an inverse supermartingale, which will be presented below.. 

We begin with the Karush-Kuhn-Tucker (KKT) conditions of optimization problems described in Equations \eqref{eq:gamma} and \eqref{eq:t_gamma}. These equations imply that the optimal solutions $\gamma(\lambda),\ \tilde{\gamma}(\lambda)$ with respect to $\lambda>0$ must satisfy the following KKT conditions:
\begin{subequations}
    \label{eq: kkts}
    \begin{align}
        \lambda\rho(\lambda) + \frac{\gamma(\lambda)}{\nu}&= \frac{D\widehat{\beta}(\lambda)}{\nu},\label{eq: feature kkt}\\
        \lambda\tilde{\rho}(\lambda) + \frac{\tilde{\gamma}(\lambda)}{\nu} &= \frac{D\widehat{\beta}(\lambda)}{\nu} +\underbrace{\left\{  - \diag(\vecs)\gamma^* + \frac{\tilde{A}^T_{\gamma,1}}{\sqrt{n_2}} \varepsilon_2 \right\}}_{=:\zeta},\label{eq: knockoff kkt} \\
        \rho(\lambda) &\in \partial \|\gamma(\lambda)\|_1,\\ \quad
        \tilde{\rho}(\lambda) & \in \partial \|\tilde{\gamma}(\lambda)\|_1, 
    \end{align}
\end{subequations}
where $\varepsilon_2$ is the Gaussian noise in $y_2$, and $\partial \|\cdot\|_1$ is the set of subgradients of $\ell_1$-norm $\|\cdot\|_1$.  For shorthand notation, define $\zeta := - \diag(\vecs)\gamma^* + \frac{\tilde{A}^T_{\gamma,1}}{\sqrt{n_2}} \varepsilon_2$.

Proposition \ref{prop: well definedness} below provides equivalent definitions of $r$ and $\tilde{r}$ through the functions $\rho(\lambda)$ and $\tilde{\rho}(\lambda)$ by exploiting the KKT conditions \eqref{eq: kkts}. This suggests that the definitions of $r$ and $\tilde{r}$ given by Equations \eqref{def: equivalent def r} and \eqref{def: equivalent def tilde r} are well-defined. The proof of Proposition \ref{prop: well definedness} can be found in Section \ref{sec: proof well defined}. 

\begin{proposition}
\label{prop: well definedness}
        The following definitions of $r$ and $\tilde{r}$ are equivalent with Equation \eqref{def: equivalent def r} and \eqref{def: equivalent def tilde r}:
    \begin{align}
    r_i:=\left\{
    \begin{array}{ccl}
        \sign(\rho_i(Z_i))   &   & \mbox{if }Z_i> 0,\\
            0   &   &  \mbox{if }Z_i= 0,
        \end{array} \right.\ \tilde r_i:=\left\{\begin{array}{ccl}
            \sign(\tilde\rho_i(\tilde{Z}_i))   &   & \mbox{if }\tilde{Z}_i> 0,\\
            0   &   &  \mbox{if }\tilde{Z}_i= 0.
        \end{array}\right.\label{def: r and tilde r}
    \end{align}
\end{proposition}

Now we are ready to show that the magnitudes $|\Ws|$ are independent to the signs $\{\sign(\Ws)\}$. Note that $\widehat{\beta}(\lambda)$ is determined from the first dataset $\D_1=(X_1, y_1)$, whence independent to the second dataset $\D_2=(X_2,y_2)$ and particularly $\zeta$. Therefore, it suffices to consider the random variable $\zeta$ conditional on a predetermined $\widehat{\beta}(\lambda)$. From Equation \eqref{eq: feature kkt}, $\gamma(\lambda)$ is determined by $\widehat{\beta}(\lambda)$ and thus independent to $\zeta$, so is $|\Ws|=Z$ as a function of $\gamma(\lambda)$. On the other hand, from Equation \eqref{eq: knockoff kkt}, conditional on a determined $\widehat{\beta}(\lambda)$, $\tilde\gamma(\lambda)$ as well as $\tilde{Z}$, are determined by $\zeta$ from $\D_2=(X_2, y_2)$. 
Therefore, $|\Ws|$, determined by $\widehat{\beta}(\lambda)$ on $\D_1=(X_1,y_1)$, is independent to $\sign(\Ws)=\sign(Z-\tilde{Z})$, depending on $\zeta$ with $\D_2=(X_2,y_2)$. 

To see that the signs $\{\sign(\Ws)\}$ are independent Bernoulli random variables, $\zeta$ follows an independently joint Gaussian distribution 
as a result of the \emph{orthogonal} Split Knockoff matrix satisfying \eqref{eq: copy} (see Equation \eqref{eq: ortho of agamma1} in Section \ref{sec: proof lemma} for details), {\it i.e.} 
\begin{align}
    \zeta\sim \mathcal{N}\left(-\diag(\vecs)\gamma^*, \frac{1}{n_2}\diag(\vecs)(2I_m-\diag(\vecs)\nu)\sigma^2\right). \label{eq: zeta dis}
\end{align}
Then we have the following lemma.

\begin{lemma}
    \label{lemma: independent Ws front}
    Given any determined $\widehat{\beta}(\lambda)$, $1\{\Ws_i<0\}$ are some independent Bernoulli random variables. Furthermore, for $i\in S_0\cap\{i:|\Ws_i| = Z_i> 0\}$, there holds
    \begin{align*}
        \Prob[\Ws_i<0]\ge 
        \frac{1}{2}.
    \end{align*}
\end{lemma}

This lemma states that given $\widehat{\beta}(\lambda)$ which determines $|\Ws|=Z$, $\sign(\Ws)$ consists of independent random variables with $\Prob[\Ws_i<0]\ge 1/2$ on the nulls satisfying $|\Ws_i| = Z_i> 0$. With such a property, we can construct an inverse supermartingale whose optional stopping theorem gives an upper bound on $\E\left[\M_{T_q^{\s}}(\Ws)\right]$. 

For simplicity, we rearrange the index of $\Ws$, such that $|\Ws_{(1)}|\ge|\Ws_{(2)}|\ge\cdots\ge|\Ws_{(m^*)}|>0$ for $\{(1), (2), \cdots, (m^*)\}= S_0\cap\{i:|\Ws_i| = Z_i> 0\}$. Further denote $B_{(i)} = 1\{\Ws_{(i)}<0\}$,
then there holds
\begin{align}
    \frac{\sum_{i\in S_0}1\{\Ws_i\ge T_q^{\s}\}}{1+\sum_{i\in S_0}1\{\Ws_i\le -T_q^{\s}\}} & =  \frac{1+\sum_{i\in S_0}1\{|\Ws_i|\ge T_q^{\s}\}}{1+\sum_{i\in S_0}1\{|\Ws_i|\ge T_q^{\s}, \Ws_i<0\}}-1,\nonumber\\
    & = \frac{1+J}{1+B_{(1)}+B_{(2)}+\cdots+B_{(J)}}-1,\label{eq: transferred bound}
\end{align}
where $J\le m^*$ is defined to be the index satisfying 
\begin{align*}
    |\Ws_{(1)}|\ge|\Ws_{(2)}|\ge\cdots\ge|\Ws_{(J)}|\ge T_q^{\s}>|\Ws_{(J+1)}|\ge\cdots\ge|\Ws_{(m^*)}|,
\end{align*}
in other words, $J = \argmax_{k\le m^*}\{|\Ws_{(k)}|\ge T_q^{\s}\}$. The following lemma summarizes the \emph{inverse supermartingale} inequality, 
which gives an upper bound for the Equation \eqref{eq: transferred bound}.

\begin{lemma}
    \label{lemma: key lemma front}
    Given any determined $\widehat{\beta}(\lambda)$, 
    let $\{G_i\}_{i=1}^m$ be some proper Borel sets such that $B_i=1\{\zeta_i\in G_i\}$ with $\Prob[B_i=1]=\rho_i$. Let $\rho>0$ satisfy $\rho\le \min_{i\in S_0}\{\rho_i\}$. Let $J$ be a stopping time with respect to the filtration $\{\mathcal{F}_j\}_{j=1}^m$ in inverse time defined as
    \begin{align*}
        \mathcal{F}_j = \sigma\left(\left\{\sum_{i=1}^jB_{(i)}, \zeta_{(j+1)}, \cdots, \zeta_{(m)}\right\}\right)
    \end{align*}
    Then
    \begin{align*}
        \Expect\left[\frac{1+J}{1+B_{(1)}+B_{(2)}+\cdots+B_{(J)}}\right]\le \rho^{-1}.
    \end{align*}
\end{lemma}
One can verify that $J = \argmax_{k\le m^*}\{|\Ws_{(k)}|\ge T_q^{\s}\}$ 
is indeed a stopping time with respect to the filtration $\{\mathcal{F}_j\}_{j=1}^{m}$ in inverse time. Applying Lemma \ref{lemma: key lemma front} to Equation \eqref{eq: transferred bound} with the property that $\Prob[B_i = 1]\ge \rho= 1/2$ for $i\in S_0$ as shown in Lemma \ref{lemma: independent Ws front}, we reach our desired result. The same procedure can be applied to the case of $\Wst$.

\subsection{Case II: Barber-Cand\`{e}s Type Statistics}
\label{sec: BC W analysis}

The key lies in that $\Ws$ provides us a bridge to the analysis of $\Wbc$, enabling an upper bound on $\M_{T_q^{\bc}}(\Wbc)$ via an inverse martingale inequality for $\M_{T_q^{\bc}}(\Ws)$ due to the following facts.

\begin{itemize}
    \item[(a)] $\M_{T}(\Ws)$ upper bounds $\M_{T}(\Wbc)$, that $\M_{T}(\Wbc)\leq \M_{T}(\Ws)$ for all $T> 0$ as presented in Proposition \ref{prop: inequality of w statistics};
    \label{item a}
    \item[(b)] The inverse supermartingale associated with $\M_{T}(\Ws)$ has a more refined filtration than that of $\M_{T}(\Wbc)$, such that the stopping time $T_q^{\bc}$ is also a stopping time of the former.
\end{itemize}

Precisely, we have the following critical proposition. 
\begin{proposition}[Inclusion Property of Filtration]
\label{prop: s to bc}
    Let $\Fs^{\bc}(T)$ ($\Fs^{\s}(T)$) be the filtration induced by $\Wbc$ ($\Ws$) respectively, i.e.
    \begin{align*}
        \Fs^{\bc}(T) & = \sigma\left(\#\{i: \Wbc_i\ge T\}, \#\{i: \Wbc_i\le -T\},\{\zeta_i: |\Wbc_i|< T\}\right),\\
        \Fs^{\s}(T) & = \sigma\left(\#\{i: \Ws_i\ge T\}, \#\{i: \Ws_i\le -T\},\{\zeta_i: |\Ws_i|< T\}\right).
    \end{align*}
    Then there holds for any $T>0$ that
    \begin{align*}
        \Fs^{\bc}(T)\subseteq \Fs^{\s}(T).
    \end{align*}
\end{proposition}

For any stopping time $T_q^{\bc}$ in reverse time adapted to the filtration $\Fs^{\bc}(T)$, Proposition \ref{prop: s to bc} points out that $\Fs^{\s}(T)$ is a superset refinement of $\Fs^{\bc}(T)$, whence $T_q^{\bc}$ is also a stopping time in reverse time adapted to the filtration $\Fs^{\s}(T)$. Applying optional stopping theorem on $\M_{T_q^{\bc}}(\Ws)$, there holds
\begin{align*}
    \Expect\left[\M_{T_q^{\bc}}(\Ws)\right]\le 1.
\end{align*}
Combining with the property that $\M_{T}(\Wbc)\leq \M_{T}(\Ws)$ for any $T> 0$, there further holds
\begin{align*}
    \Expect\left[\M_{T_q^{\bc}}(\Wbc)\right]\le\Expect\left[\M_{T_q^{\bc}}(\Ws)\right]\le 1,
\end{align*}
which is our desired result.  

The proof of Proposition \ref{prop: s to bc} will be provided in Section \ref{sec: proof s to bc}. The complete proof of Theorem \ref{theorem: fdr} can be found in Section \ref{sec: proof thm}, with the proofs of supporting lemmas in Section \ref{sec: proof lemma}.

\section{A Generalization in High Dimensional Settings}

\label{sec: hd}

In this section, we extend the Split Knockoffs into high dimensional settings, where the sample size is limited and the condition of $n_2\ge m+p$ cannot be satisfied. In such settings, we will conduct feature screening in the first dataset $\D_1$ to deduct the number of features, and then perform Split Knockoffs on the screened subset of features in the second dataset $\D_2$. 

In particular, we will use the first dataset $\D_1 = (X_1, y_1)$ to conduct initial screening and give estimated support sets $\hat S_\beta$, $\hat S_\gamma$ for $\beta$, $\gamma$ respectively, such that $n_2\ge |\hat S_\beta|+|\hat S_\gamma|$ is satisfied. The feature screening is a well-studied topic, and a list of the screening methods can be found in but not limited to \citep{wasserman2009high, wu2010screen}. After the screening step, the rest steps of Split Knockoffs will be conducted on the estimated support sets $\hat S_\beta$, $\hat S_\gamma$.

For the rest steps, we first generate the intercept $\widehat{\beta}(\lambda)$ as a bounded continuous function ($\R_+\to \R^{|\hat S_\beta|}$) from $\D_1$. Then it remains to generate the feature and knockoff importance statistics $Z$ and $\tilde{Z}$ in order to perform Split Knockoffs.

Let $X_{\hat S_\beta}$ be the submatrix of $X_2$, consisting of the columns indexed by $\hat S_\beta$. Let $D_{\hat S_\beta, \hat S_\gamma}$ be the submatrix of $D$, consisting of the columns indexed by $\hat S_\beta$ and rows indexed by $\hat S_\gamma$. Further, let
\begin{align}
\tilde{y}=
\left(
\begin{array}{c}
\frac{y_2}{\sqrt{n_2}} \\
0_{|\hat S_\gamma|}
\end{array}
\right),\ 
A_\beta=
\left(
\begin{array}{c}
\frac{X_{\hat S_\beta}}{\sqrt{n_2}} \\
\frac{D_{\hat S_\beta, \hat S_\gamma}}{\sqrt{\nu}} 
\end{array}
\right),\ 
A_\gamma=
\left(
\begin{array}{c}
0_{n_2\times |\hat S_\gamma|} \\
-\frac{I_{|\hat S_\gamma|}}{\sqrt{\nu}} 
\end{array}
\right),\ 
\tilde{\varepsilon}=
\left(
\begin{array}{c}
\frac{\varepsilon_2}{\sqrt{n_2}} \\
0_{|\hat S_\gamma|} 
\end{array}
\right).\label{eq: features hd}
\end{align}
For shorthand notations, we abuse the notations here and use the same notations as in Equation \eqref{eq: features}, while keep in mind that $\tilde{y}$, $A_\beta$, $A_\gamma$ and $\tilde{\varepsilon}$ defined in Equation \eqref{eq: features hd} are dependent on the estimated support sets $\hat S_\beta$, $\hat S_\gamma$. With Equation \eqref{eq: features hd}, we can construct the split knockoff copy matrix $\tilde{A}_\gamma$ satisfying Equation \eqref{eq: copy} in high dimensional settings. Then the feature and knockoff statistics can be generated in the same way as in Equation \eqref{def: Z} and Equation \eqref{def: tZ}.

For the FDR control about the procedure above, the following Theorem \ref{thm: fdr hd} states that when the estimated support set $\hat S_\beta$
includes the true support set of $\beta$, such a procedure above will not cause any loss in the FDR control. The proof of Theorem \ref{thm: fdr hd} is given in Section \ref{sec: proof hd thm}.

\begin{theorem}
    \label{thm: fdr hd}
    Let $S_\beta$ be the true support sets for $\beta$. Let $\Upsilon$ be the event that $S_\beta\subseteq \hat S_\beta$, then there holds for all $0<q\le 1$, $\C\in\{\s,\stau, \bc\}$, and all $\nu>0$,
    \begin{itemize}
        \item[(a)] (mFDR of Split Knockoff)
    \begin{equation}
           \E\left[\left.\frac{\left|\left\{i:i\in \hat{S}\upc\cap S_0\right\}\right|}{\left|\hat{S}\upc\right|+q^{-1}}\right|\Upsilon\right]\le q.\nonumber
    \end{equation}
    \item[(b)] (FDR of Split Knockoff+)
    \begin{equation}
           \E\left[\left.\frac{\left|\left\{i:i\in \hat{S}\upc\cap S_0\right\}\right|}{\left|\hat{S}\upc\right|\vee 1}\right|\Upsilon\right]\le q.\nonumber
    \end{equation}
    \end{itemize}
\end{theorem}

The event that $\Upsilon = \{S_\beta\subseteq \hat S_\beta\}$ is often known as the sure screening event \citep{fan2008sure} in the literature. It is worth to mention that an immediate corollary follows from Theorem \ref{thm: fdr hd}, that $\mathrm{mFDR}\le q+\Prob\left[\Upsilon^C\right]$ for Split Knockoff and $\mathrm{FDR}\le q+\Prob\left[\Upsilon^C\right]$ for Split Knockoff+. Theorem \ref{thm: fdr hd} achieves comparable results with Theorem 2 in \cite{barber2019knockoff}. The simulation experiments that validate the effectiveness of this extension is presented in Section \ref{sec: hd_simulation}.

\section{Discussion}

\label{sec: discussion}

 In this section, we first discuss a straightforward way of conducting Knockoffs on generalized LASSO under a special case of transformational sparsity when $D$ is of full row-rank, and why such a procedure does not work in general. Next, we discuss how the exchangeability property fails for Split Knockoffs and the challenges for provable FDR control. Finally, we discuss how the model selection consistency of Split LASSO may affect the selection power of Split Knockoffs.

\subsection{Generalized LASSO and Knockoffs} \label{sec:genlasso}
 Within the large literature dealing with the transformational sparsity problem, generalized LASSO \citep{tibshirani2011solution} is the most popular one. In a special case when $D$ is of full row-rank, i.e. a surjective linear map, generalized LASSO on transformational sparsity can be converted to LASSO on direct sparsity where one can design knockoffs; yet in general, it remains open how to do so. 

Recall that the generalized LASSO solves the optimization problem below with $\lambda>0$, 
\begin{equation}
    \min_\beta\ \frac{1}{2n}\|y-X\beta\|_2^2+\lambda\|D\beta\|_1. \label{genlasso_problem}
\end{equation}
In the special case that $D$ is surjective, i.e. $\rank D=m\le p$,  this problem can be equivalently represented by the following LASSO procedure with respect to $\lambda>0$,
\begin{align}
        \min_\gamma \ \frac{1}{2n}\|y-X D^\dagger\gamma-X D_0\beta_0\|^2_2+\lambda\|\gamma\|_1,\nonumber
\end{align}
where $D^\dagger\in \R^{p\times m}$ is the pseudo inverse for $D$, $D_0\in \R^{p\times (p-\rank{D})}$ is a matrix whose columns spans  the null space $\ker(D)$, and $\beta_0\in \R^{p-\rank{D}}$ 
is the representation coefficient for the null space $\ker(D)$ without sparsity. Note that we can write $\beta^* = D^\dagger\gamma^*+D_0\beta_0^*$ for some $\beta_0^*\in \R^{p-\rank{D}}$, then Equation \eqref{eq: model} becomes
\begin{align}
    y=X D^\dagger\gamma^* + X D_0\beta_0^* + \varepsilon.\nonumber
\end{align}
In this regard, one can construct standard knockoffs in this special case that $\rank D=m\le p$. To see this,  
Taking $U\in \R^{n\times(n-p+\rank{D})}$ as an orthogonal complement for the column space of $X D_0\in \R^{n\times (p-\rank{D})}$, we have
\begin{align}
    U^T y = U^T X D^\dagger\gamma^* + U^T\varepsilon. \label{tran}
\end{align}
Now one can treat $U^T X D^\dagger$ as a new design matrix, $U^T y$ as the response vector and $U^T\varepsilon\sim\mathcal{N}(0, I_{n-p+\rank{D}})$ as the Gaussian noise. Then the transformational sparsity problem is transferred to a sparse linear regression problem, where one can apply the standard knockoff method for the FDR control. 

However, a shortcoming of this approach is that it may suffer poor selection power when $D$ is nontrivial and/or $X$ is highly co-related, such that the incoherence requirements for sparse recovery or model selection consistency \citep{DonHuo01,Tropp04,zhao2006model,Wainwright09,ORXYY16,huang2020boosting} fail for the design matrix $U^T X D^\dagger$.

Furthermore, this conversion does not work in the general setting $m>p$. In the case $m>p$, 
$\gamma$ lies in the column space of $D$, a proper subspace of $\R^m$, and it remains open how to design a knockoff method under such a constraint before our work.

\subsection{Failure of Exchangeability in Split Knockoffs}

\label{sec: fail exchange}

The main challenge in establishing the FDR control of Split Knockoffs lies in the failure of exchangeability. In this section, we provide the details on how the exchangeability of standard Knockoffs fails in Split Knockoffs.

Specifically, \cite{barber2015controlling} (Lemma 2-3), established pairwise exchangeability for both features and responses in standard knockoffs, the latter of which however fails in Split Knockoffs. 

\begin{proposition}[Failure of Pairwise Exchangeability]
    \label{prop: exchange}
    For any $S\subseteq \S_0$, there holds
    \begin{subequations}
    \begin{align}
        [A_\gamma, \tilde{A}_\gamma]_{\mathrm{swap}\{S\}}^T[A_\gamma, \tilde{A}_\gamma]_{\mathrm{swap}\{S\}} = &  [A_\gamma, \tilde{A}_\gamma]^T[A_\gamma, \tilde{A}_\gamma]\label{good},\\
        [A_\gamma, \tilde{A}_\gamma]_{\mathrm{swap}\{S\}}^T\tilde y\overset{d}{\neq} &  [A_\gamma, \tilde{A}_\gamma]^T\tilde y,\label{devil}
    \end{align}
\end{subequations}
where $[A_\gamma, \tilde{A}_\gamma]_{\mathrm{swap}\{S\}}$ denotes a swap of the $j$-th column of $A_\gamma$ and $\tilde{A}_\gamma$ in $[A_\gamma, \tilde{A}_\gamma]$ for all $j\in S$.
\end{proposition}

In other words, for Split Knockoffs, while pairwise exchangeability for the features holds in Equation \eqref{good}, that for the responses fails in Equation \eqref{devil}. 
To see the details for Proposition \ref{prop: exchange}, it can be calculated that
\begin{align}
    A_\gamma^T\tilde y= 0,\ \ \ \ \tilde{A}_\gamma^T\tilde y = \frac{\tilde{A}^T_{\gamma, 1}}{\sqrt{n_2}}X_2\beta^*+\frac{\tilde{A}^T_{\gamma, 1}}{\sqrt{n}}\varepsilon_2,
    \label{I_original}
\end{align}
where $\tilde{A}^T_{\gamma, 1}$ together with $\tilde{A}^T_{\gamma, 2}$ are defined in the end of Section \ref{sec: construction}. Moreover, by Equation \eqref{eq: copy}, there holds
    \begin{align}
        \tilde{A}^T_{\gamma,1}\frac{X_2}{\sqrt{n_2}}+\tilde{A}^T_{\gamma,2}\frac{D}{\sqrt{\nu}}= & -\frac{D}{\nu},\ \ \ \ 
        -\frac{\tilde{A}^T_{\gamma,2}}{\sqrt{\nu}}= \frac{I}{\nu}-\diag(\vecs).
    \end{align}
Thus it can be solved that $\tilde{A}^T_{\gamma,1}\frac{X_2}{\sqrt{n_2}} = \diag(\vecs)\cdot D$. Plugging the solution into Equation \eqref{I_original}, there holds
\begin{align}
    [A_\gamma, \tilde{A}_\gamma]^T\tilde y=
    \begin{bmatrix}
        0\\
        -\diag(\vecs)\gamma^*+\frac{\tilde{A}^T_{\gamma, 1}}{\sqrt{n}}\varepsilon
    \end{bmatrix},
\end{align}
where swapping any $i\in \S_0$ in the first and second block  
will lead to different distributions.

The failure of exchangeability may impose a theoretical challenge for knockoff-based methods, as the beautiful symmetry results \citep{barber2015controlling} can not be applicable. 
For random designs, \cite{barber2020robust} exploits KL divergence to measure the ``distance'' to exchangeability, and then gives approximate but not exact FDR control based on the ``distance''. It is also worth mentioning that this method no longer relies on the martingale arguments and uses direct analysis.  
However, it is not clear how to apply such methods to fixed designs in our scenario. Instead, our strategy in this paper is to exploit the orthogonal design in Split LASSO and the data splitting, which leaves us independent signs of Split Knockoff statistics, recovering the inverse supermartingale structure similar to \cite{barber2019knockoff} without using the exchangeability.

\subsection{Selection Power and Model Selection Consistency of Split LASSO}

\label{sec: sign consistency short}

In this section, we discuss the model selection (sign) consistency of Split LASSO with respect to $\nu$, to reveal how $\nu$ might affect the selection power of Split Knockoffs.
As we shall see below, enlarging $\nu$ might increase the power of discovering strong nonnull features whose magnitudes are large, by improving the incoherence condition; on the other hand, doing so may lose the power of discovering weak nonnull features of small magnitudes.

This point is made precise by Proposition \ref{thm: sign consistency front} on the model selection (sign) consistency of Split LASSO. Define $H_\nu := I_m - \frac{D[\Sigma_X+L_D]^{-1}D^T}{\nu}$, where $\Sigma_X = \frac{X^TX}{n}$ and $L_D = \frac{D^TD}{m}$. Further denote $H_\nu^{11}$ as the submatrix of index set $\S_1$, $H_\nu^{00}$ for index $\S_0$, and $H_\nu^{10}$, $H_\nu^{01}$ to be the covariance matrices between $\S_1$ and $\S_0$. 
It is now ready to state Proposition \ref{thm: sign consistency front}, leaving its proof in Section \ref{sec:proof_path_consistency}.

\begin{proposition}[Model Selection Consistency of Split LASSO \eqref{eq:split_lasso}]
\label{thm: sign consistency front}
    Assume that the design matrix $X$ and $D$ satisfy 
    \begin{itemize}
    \item\textbf{Restricted-Strongly-Convex:} there exists $C_\mathrm{min}>0$, such that the smallest eigenvalue of $H_\nu^{11}$ is larger than $C_\mathrm{min}$;
    \item\textbf{$\nu$-Incoherence Condition:} there exists a parameter $\chi_\nu\in (0, 1]$, such that $\|H_\nu^{01} [H_\nu^{11}]^{-1}\|_\infty \le 1-\chi_\nu.$
\end{itemize}
    Let the columns of $X$ be normalized as $\max_{i\in [1:p]}\frac{\|x_i\|_2}{\sqrt{n}}\le 1$. Then there exists some constant $C>0$, such that for the sequence of $\{\lambda_n\}$ satisfying $\lambda_n > \frac{C}{\chi_\nu}\sqrt{\frac{\sigma^2\log m}{n}}$, the following holds with probability larger than $1-4e^{-C n\lambda_n^2}$.
    \begin{enumerate}
        \item (No-false-positive) Split LASSO \eqref{eq:split_lasso} has a unique solution $(\hat\beta, \hat\gamma)\in \R^p \times\R^m$ without false positives w.r.t. $\gamma$. 
        \item (Sign-consistency) In addition, $\hat\gamma$ recovers the sign of $\gamma^*$, if there holds
        \begin{equation}\label{eq: min snr}
            \min_{i\in \S_1}{\gamma^*_i}> \lambda_n \nu \left[\frac{\sigma}{2C_\mathrm{min}}+\| [H_\nu^{11}]^{-1}\|_\infty\right].
        \end{equation}
    \end{enumerate}
\end{proposition}

For the $\nu$-Incoherence Condition of Proposition \ref{thm: sign consistency front}, one can observe that as $\nu \to \infty$, $H_\nu = I_m - \frac{1}{\nu}D[\Sigma_X+L_D]^{-1}D^T \succeq I_m - \frac{1}{\nu}D[\Sigma_X]^{-1}D^T \to I_m$, therefore $H_\nu^{01}\to 0_{|\S_0|\times|\S_1|}$, while $H_\nu^{11}\to I_{|S_1|}$. In this situation, the term $\|H_\nu^{01} [H_\nu^{11}]^{-1}\|_\infty$ drops to zero, and the incoherence condition is satisfied with arbitrarily large $\chi_\nu\to 1$. 

On the other hand, however, the increase of $\nu$ makes it harder to meet the condition on the minimal signal-noise-ratio \eqref{eq: min snr}. This may cause a potential loss in selecting weak nonnull features.

Hence a sufficiently large $\nu$ will ensure the incoherence condition for model selection consistency such that strong nonnull features will be selected earlier on the Split LASSO path than the nulls, at the cost of possibly losing weak nonnull features. A good power must rely on a proper choice of $\nu$ for the trade-off. In simulation Section \ref{Sec: simulation results}, we indeed observe that the selection power of Split Knockoffs undergoes a first increase then decrease trend as $\nu$ grows. In practice, one may apply the cross validation over $(\nu, \lambda)$ on the Split LASSO path with subset data $\D_1$, to maximize the power for the optimal intercept estimator $\widehat{\beta}(\lambda)=\widehat\beta_{\hat\nu, \hat\lambda}$. Equipped with the FDR control for all $\nu>0$ in Theorem \ref{theorem: fdr}, it maximizes the empirical power with a desired FDR. Such an empirical strategy is validated by simulation experiments in Section \ref{Sec: simulation results} and renders satisfied results in the study of Alzheimer's Disease in Section \ref{Sec: applications}.

\section{Simulation Experiment}

\label{Sec: simulation results}

In this section, we show by several simulation experiments that our proposed Split Knockoff method performs well with transformational sparsity. Particularly, under the cross-validation optimal choice of $\widehat{\beta}(\lambda) = \widehat\beta_{\hat\nu, \hat\lambda}$ and $\nu=\hat\nu$, Split Knockoffs achieve both desired FDR control and high selection power in all the three choices of $W$ statistics and respective selectors. 
The simulation experiments presented in this section are under the basic setting where $n_2\ge m+p$, while the experiments under the high dimensional setting will be given in Section \ref{sec: hd_simulation}.

\subsection{Experimental Setting}\label{sec: simulation_settings}

In model \eqref{eq: model}, we generate $X\in \mathbb R^{n\times p}$ ($n=500$ and $p=100$) i.i.d. from $\Nm(0_p, \Sigma)$, where $\Sigma_{i,i}=1$ and $\Sigma_{i,j}=c^{|i-j|}$ for $i\neq j$, with feature correlation $c=0.5$. Define $\beta^*\in\mathbb R^p$ by
\begin{equation*}
    \beta_i^*:=\left\{
    \begin{array}{ccl}
        1   &   & i \le 20,\ i \equiv 0, -1 (\mathrm{mod}\ 3),\\
        0   &   & \mathrm{otherwise}.
    \end{array} \right.
\end{equation*}
Then $n$ linear measurements are generated by
$$y = X \beta^* + \varepsilon,$$
where $\varepsilon\in \mathbb R^n$ is generated i.i.d. from $\Nm(0, 1)$.

For transformational sparsity, we need to specify the linear transformer $D$ such that $\gamma^*=D\beta^*$, where $\gamma^*$ is sparse. Our choice of $\beta^*$ has two types of transformational sparsity that lead to the following three choices of $D$.  
\begin{itemize}
    \item $\beta^*$ is sparse with many zero elements such that we can take $D_1=I_p$ as the identity matrix where $m = p$. 
    \item $\beta^*$ is a uni-dimensional piecewise constant function such that we can take $D_2$  as the 1-D graph difference operator on a line, i.e. $D_2\in \mathbb R^{(p-1)\times p}$, $D_2(i, i)=1$, $D_2(i, i+1)=-1$ for $i=1,\cdots, p-1$, and $D_2(i, j) = 0$ for other pairs of $(i, j)$, where in this case $m = p-1<p$.
    \item Combining both cases, $\beta^*$ is a sparse piecewise constant function such that we can take $D_3=\left[\begin{array}{c}
         D_1 \\
         D_2
    \end{array}\right]\in \R^{(2p-1)\times p}$, where in this case $m=2p-1>p$.
\end{itemize}

In simulation experiments, we use \url{glmnet} package \citep{friedman2010regularization, simon2011regularization} to compute regularization paths for Split LASSO, etc. 
For the data splitting, we randomly split the dataset $\D = (X, y)$ into two parts $\D_1 = (X_1, y_1)$ and $\D_2 = (X_2, y_2)$ with $n_1$ and $n_2$ samples respectively where $n_1 = 200$ and $n_2 = 300$. The performance of the selection power is presented together with the performance of the FDR control, where the selection power is defined as $\mathrm{Power} = \frac{|S_1\cap \hat{S}|}{|S_1|}$.

For the first choice of $\widehat{\beta}(\lambda)$ in Section \ref{sec: intercept est}, we take $\widehat{\beta}(\lambda)$ as $\widehat{\beta}_\nu(\lambda)$, the solution path with respect to $\widehat{\beta}(\lambda)$ in the $\nu$-Split LASSO regularization path with dataset $\D_1=(X_1, y_1)$. For the regularization paths calculated in Split Knockoffs, we take $\log\lambda$ from a grid between 0 and -6 with a step size $h_\lambda = 0.01$.

For the second choice of $\widehat{\beta}(\lambda)$ in Section \ref{sec: intercept est}, we take $\widehat{\beta}(\lambda)$ as a fixed cross validation optimal estimator $\widehat{\beta}_{\hat\nu, \hat{\lambda}}$ 
in the Split LASSO path with dataset $\D_1=(X_1, y_1)$, screening on $\log\nu$ from a grid between 0 and 2 with a step size $0.4$, and $\log\lambda$ from a grid between 0 and -8 with a step size $0.4$. For the regularization paths of feature and knockoff significance in Equation \eqref{eq:gamma} and Equation \eqref{eq:t_gamma}, we take $\log\lambda$ from a grid between 0 and -6 with a step size $h_\lambda = 0.01$.

\subsection{Performance of Split Knockoffs}

\label{sec: plots for sk}

\begin{figure}[!ht]
\centering
\subfigure[$\Ws$ in $D_1$]{
\begin{minipage}[t]{0.3\textwidth}
\centering
\includegraphics[width=\textwidth]{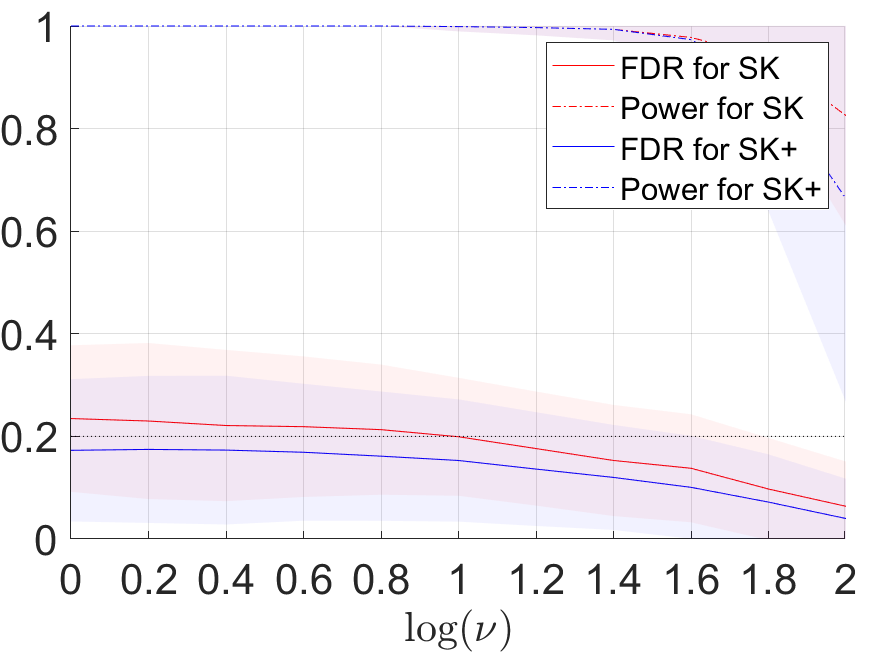}
\end{minipage}%
}%
\subfigure[$\Ws$ in $D_2$]{
\begin{minipage}[t]{0.3\textwidth}
\centering
\includegraphics[width=\textwidth]{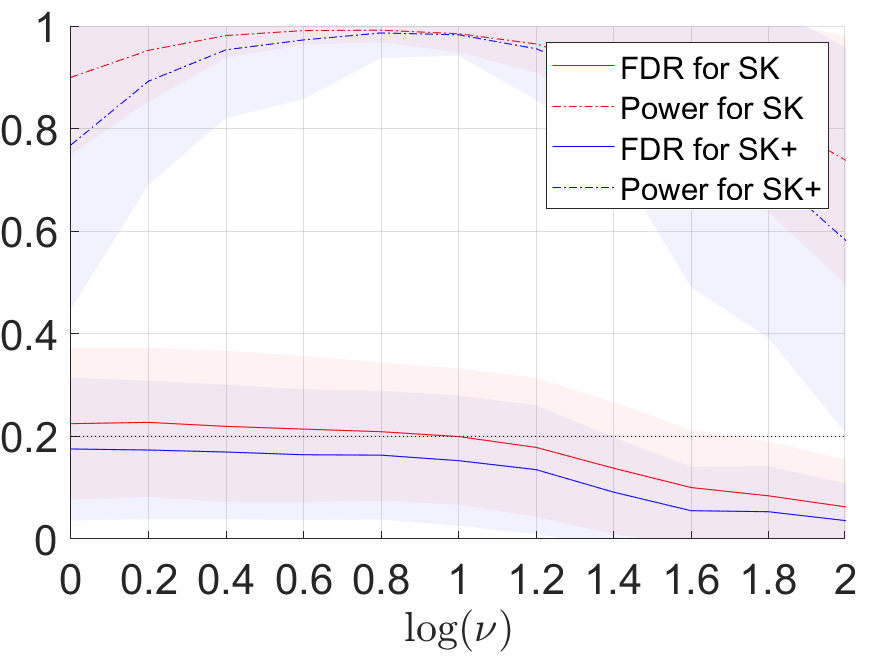}
\end{minipage}%
}%
\subfigure[$\Ws$ in $D_3$]{
\begin{minipage}[t]{0.3\textwidth}
\centering
\includegraphics[width=\textwidth]{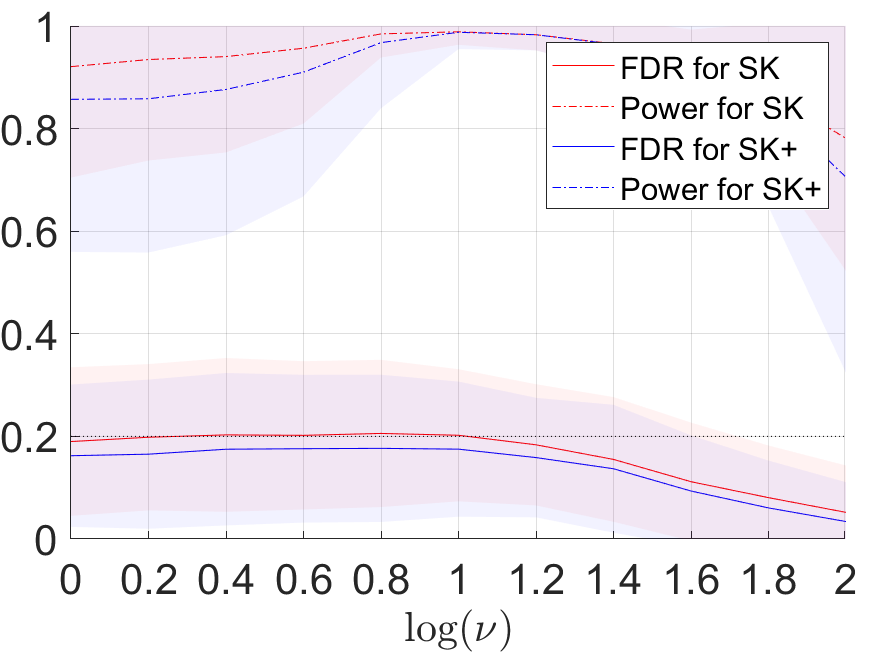}
\end{minipage}%
}%

\centering
\subfigure[$\Wst$ in $D_1$]{
\begin{minipage}[t]{0.3\textwidth}
\centering
\includegraphics[width=\textwidth]{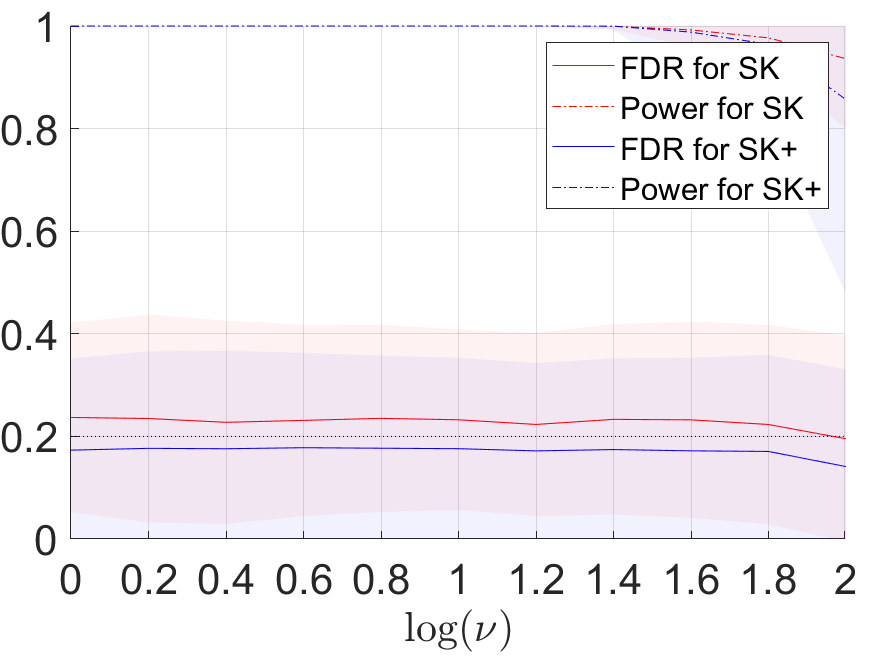}
\end{minipage}%
}%
\subfigure[$\Wst$ in $D_2$]{
\begin{minipage}[t]{0.3\textwidth}
\centering
\includegraphics[width=\textwidth]{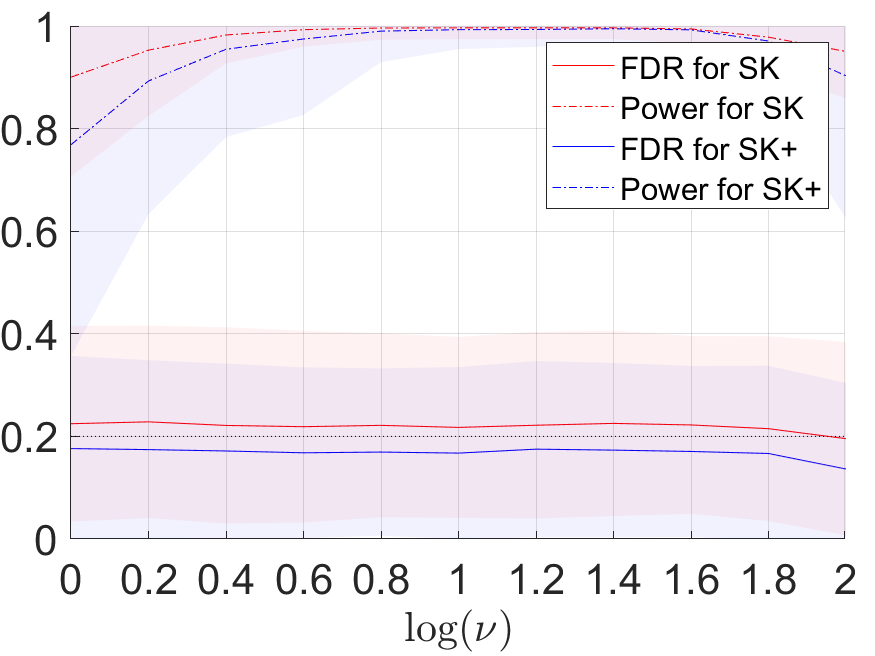}
\end{minipage}%
}%
\subfigure[$\Wst$ in $D_3$]{
\begin{minipage}[t]{0.3\textwidth}
\centering
\includegraphics[width=\textwidth]{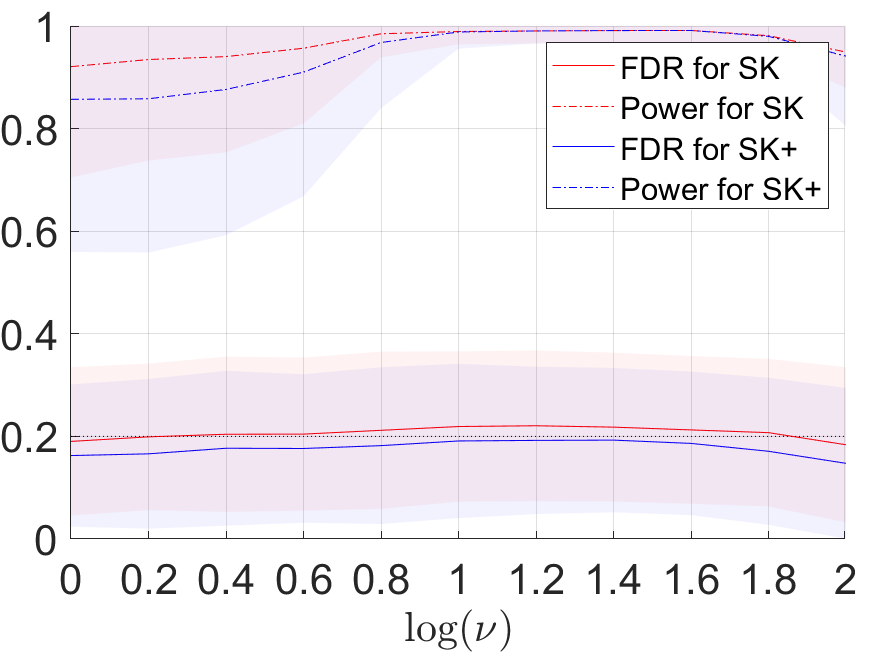}
\end{minipage}%
}%

\centering
\subfigure[$\Wbc$ in $D_1$]{
\begin{minipage}[t]{0.3\textwidth}
\centering
\includegraphics[width=\textwidth]{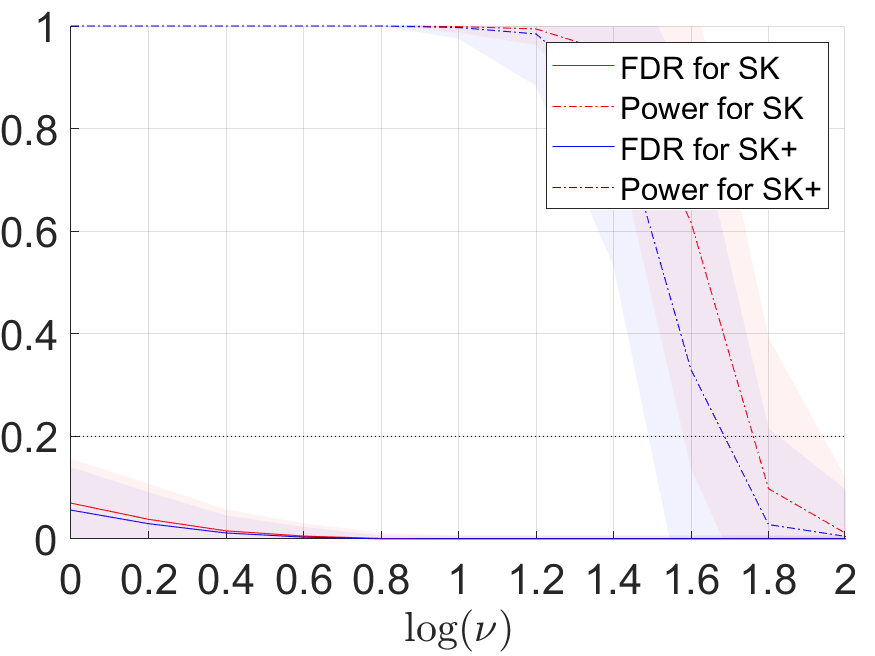}
\end{minipage}%
}%
\subfigure[$\Wbc$ in $D_2$]{
\begin{minipage}[t]{0.3\textwidth}
\centering
\includegraphics[width=\textwidth]{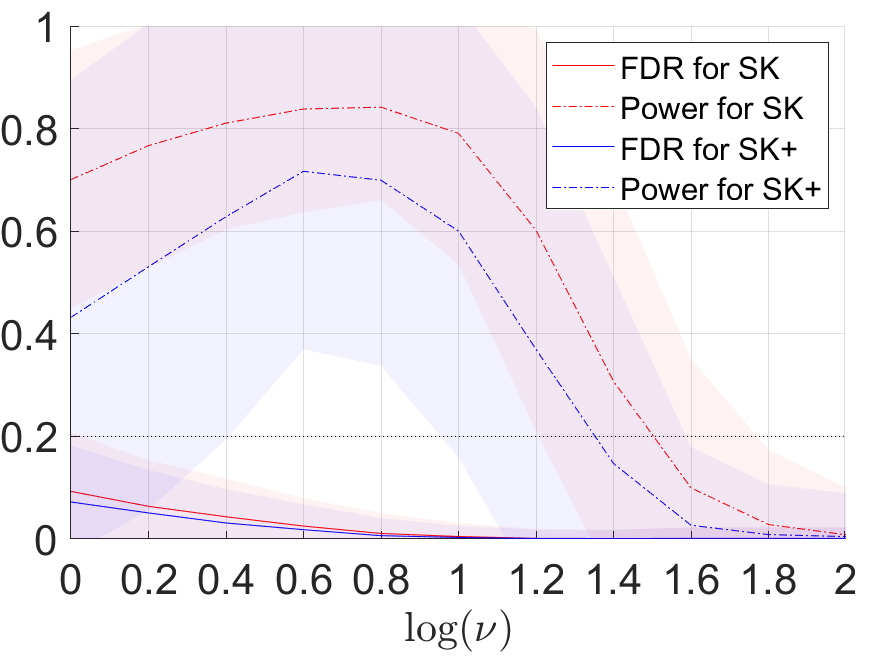}
\end{minipage}%
}%
\subfigure[$\Wbc$ in $D_3$]{
\begin{minipage}[t]{0.3\textwidth}
\centering
\includegraphics[width=\textwidth]{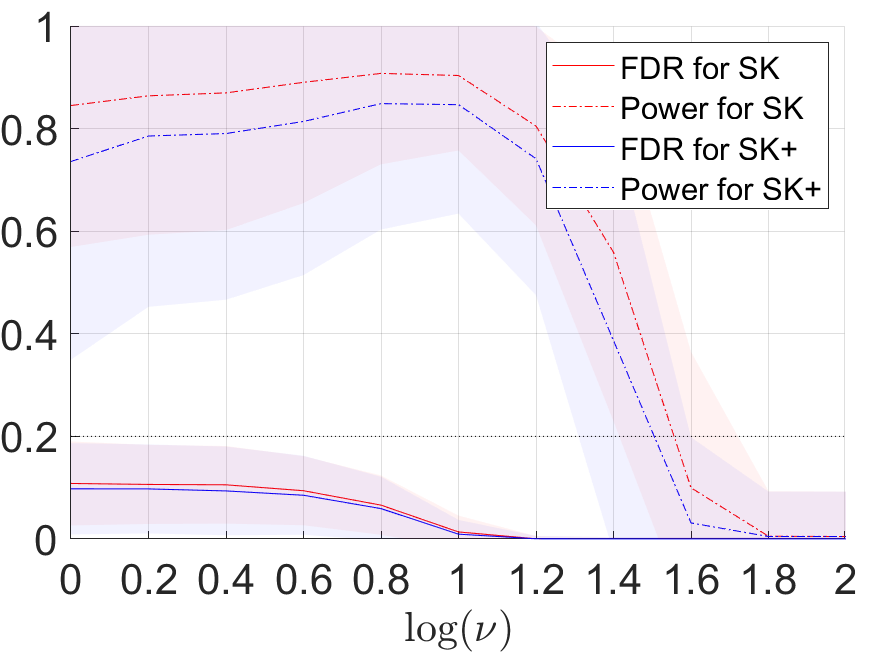}
\end{minipage}%
}%

\caption{The performance of Split Knockoffs: FDR and Power for $q=0.2$. $\widehat{\beta}(\lambda)$ is taken as $\widehat{\beta}_\nu(\lambda)$, the solution path of $\widehat{\beta}(\lambda)$ in the $\nu$-Split LASSO path. The curves in the figures represent the average performance of Split Knockoffs in FDR and Power in 200 simulation instances, while the shaded areas represent the 80\% confidence intervals truncated to the range $[0, 1]$.}
\label{fig: simulation with beta(lambda)}
\end{figure}

In Figure \ref{fig: simulation with beta(lambda)}, we plot the performance of Split Knockoffs on $\log(\nu)$ between 0 and 2 with a step size 0.2, where $\widehat{\beta}(\lambda)$ is taken as the solution path $\widehat{\beta}_\nu(\lambda)$ in the $\nu$-Split LASSO path. In Figure \ref{fig: simulation with beta hat}, we plot the performance of Split Knockoffs on $\log(\nu)$ between 0 and 2 with a step size 0.2, where $\widehat{\beta}(\lambda)$ is taken as a fixed cross validation optimal estimator $\widehat{\beta}_{\hat\nu, \hat{\lambda}}$ on the Split LASSO paths. 

In these cases, the FDR of Split Knockoffs are all under control; while the performance in the selection power differs from one to another. The cross-validation optimal estimator choice of $\widehat{\beta}(\lambda)=\widehat\beta_{\hat\nu, \hat\lambda}$ shown in Figure \ref{fig: simulation with beta hat}, clearly improves the selection power of Split Knockoffs compared with the $\nu$-Split LASSO solution path choice of $\widehat{\beta}(\lambda) = \widehat{\beta}_\nu(\lambda)$ shown in Figure \ref{fig: simulation with beta(lambda)}.

\begin{figure}[!ht]
\centering
\subfigure[$\Ws$ in $D_1$]{
\begin{minipage}[t]{0.3\textwidth}
\centering
\includegraphics[width=\textwidth]{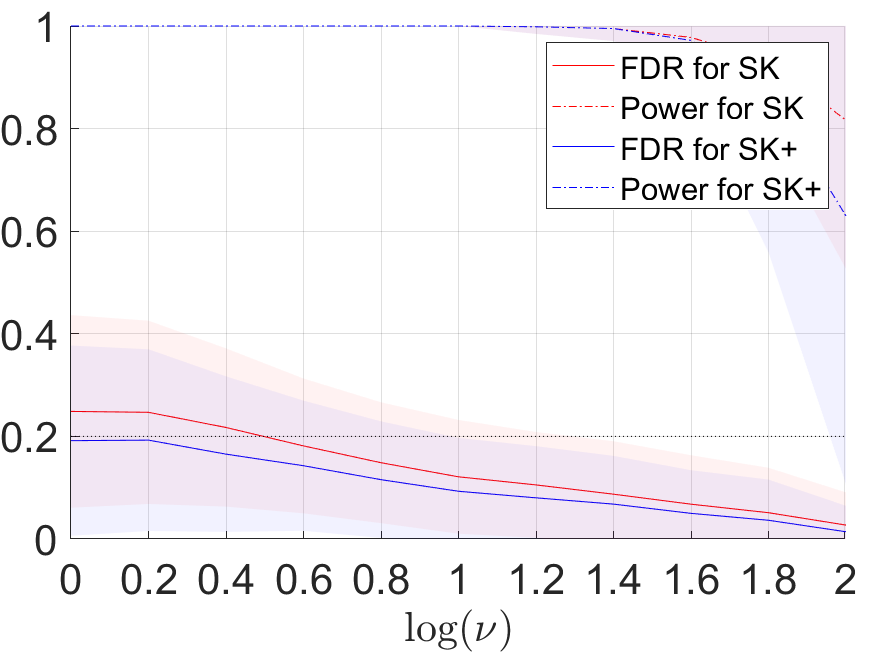}
\end{minipage}%
}%
\subfigure[$\Ws$ in $D_2$]{
\begin{minipage}[t]{0.3\textwidth}
\centering
\includegraphics[width=\textwidth]{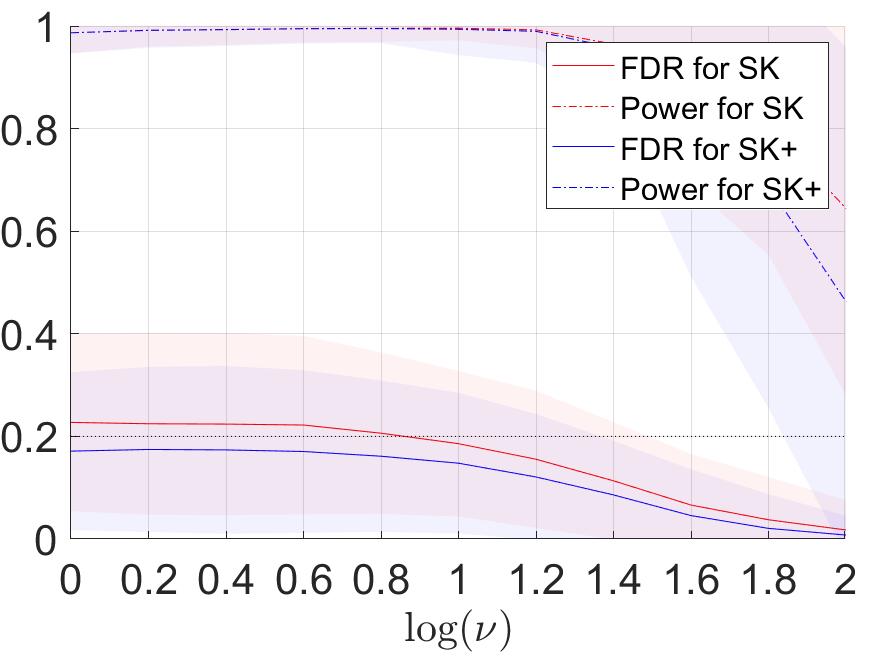}
\end{minipage}%
}%
\subfigure[$\Ws$ in $D_3$]{
\begin{minipage}[t]{0.3\textwidth}
\centering
\includegraphics[width=\textwidth]{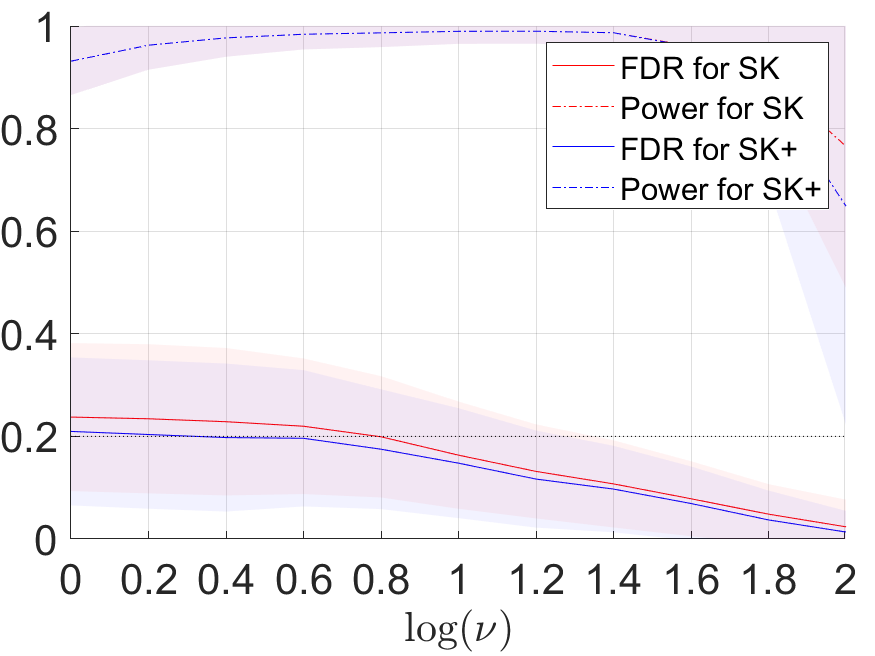}
\end{minipage}%
}%

\centering
\subfigure[$\Wst$ in $D_1$]{
\begin{minipage}[t]{0.3\textwidth}
\centering
\includegraphics[width=\textwidth]{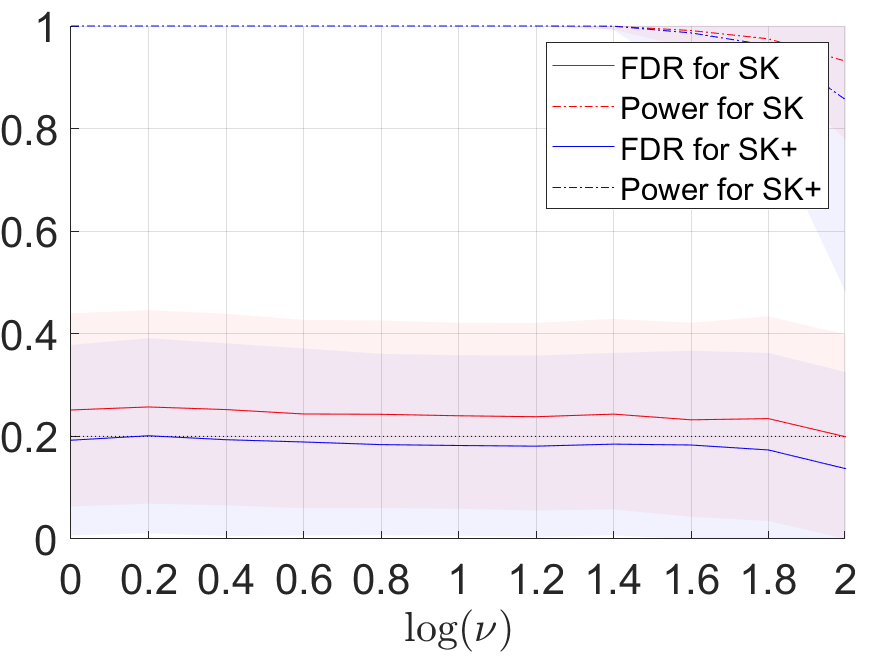}
\label{fig: wst1}
\end{minipage}%
}%
\subfigure[$\Wst$ in $D_2$]{
\begin{minipage}[t]{0.3\textwidth}
\centering
\includegraphics[width=\textwidth]{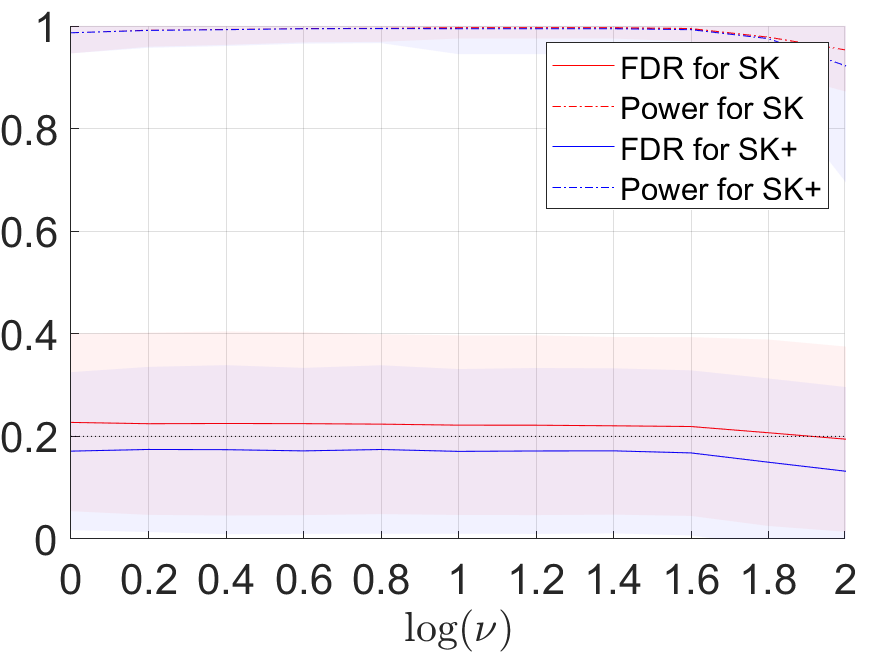}
\label{fig: wst2}
\end{minipage}%
}%
\subfigure[$\Wst$ in $D_3$]{
\begin{minipage}[t]{0.3\textwidth}
\centering
\includegraphics[width=\textwidth]{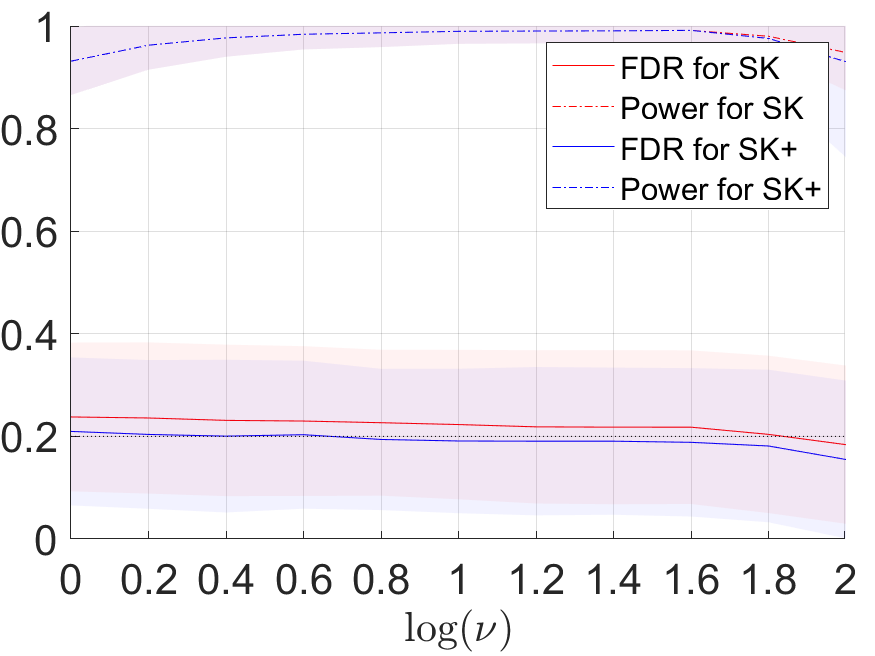}
\label{fig: wst3}
\end{minipage}%
}%

\centering
\subfigure[$\Wbc$ in $D_1$]{
\begin{minipage}[t]{0.3\textwidth}
\centering
\includegraphics[width=\textwidth]{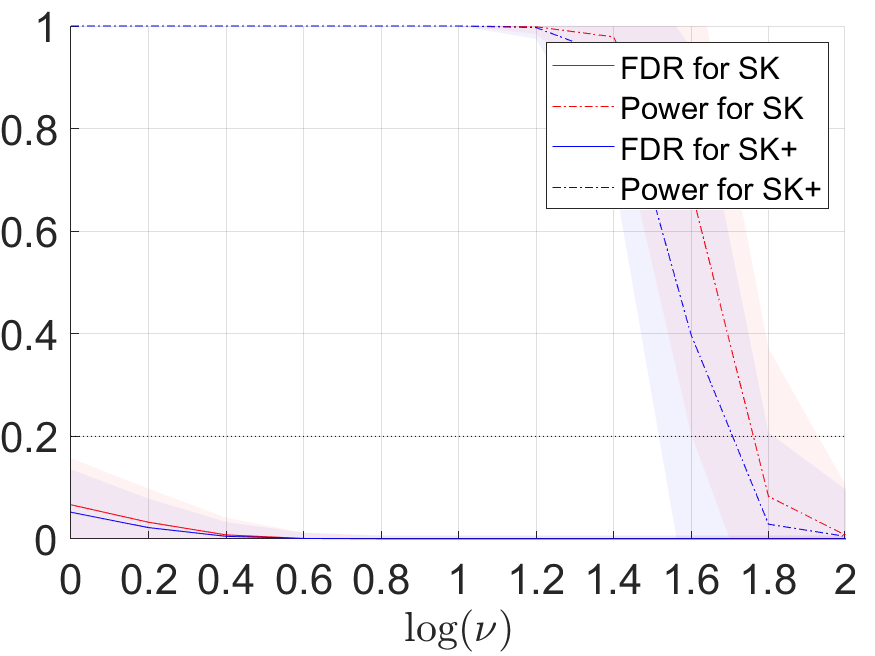}
\end{minipage}%
}%
\subfigure[$\Wbc$ in $D_2$]{
\begin{minipage}[t]{0.3\textwidth}
\centering
\includegraphics[width=\textwidth]{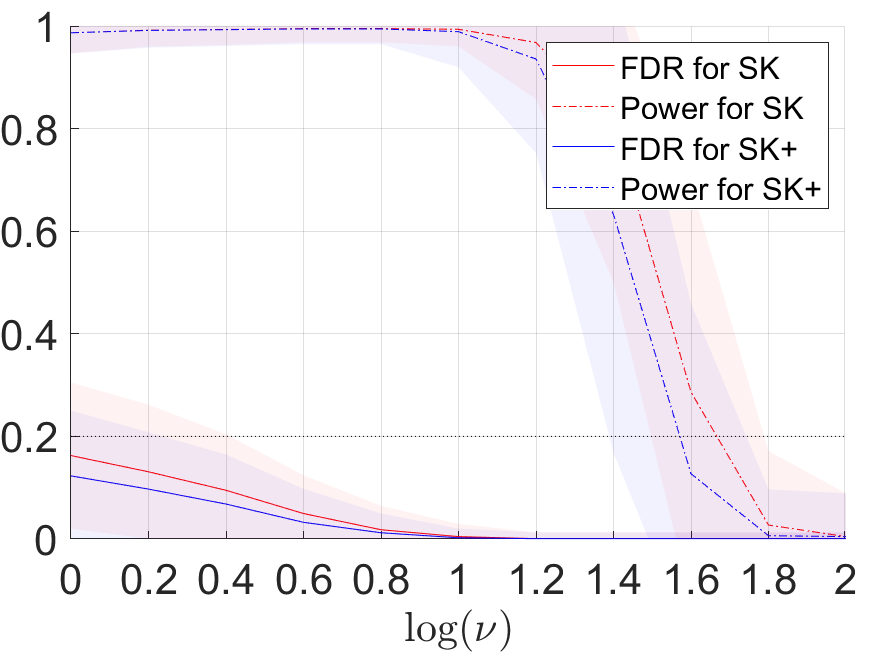}
\end{minipage}%
}%
\subfigure[$\Wbc$ in $D_3$]{
\begin{minipage}[t]{0.3\textwidth}
\centering
\includegraphics[width=\textwidth]{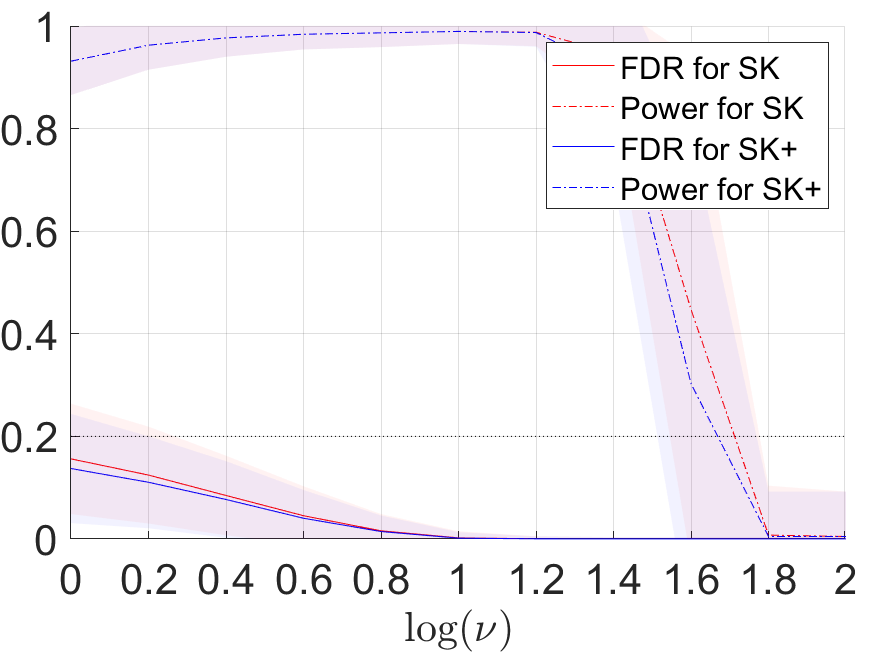}
\end{minipage}%
}%

\caption{The performance of Split Knockoffs: FDR and Power for $q=0.2$. $\widehat{\beta}(\lambda)$ is taken as a fixed cross validation optimal estimator $\widehat{\beta}_{\hat\nu, \hat{\lambda}}$. The curves in the figures represent the average performance of Split Knockoffs in FDR and Power in 200 simulation instances, while the shaded areas represent the 80\% confidence intervals truncated to the range $[0, 1]$.}
\label{fig: simulation with beta hat}
\end{figure}

In the family of $W$-statistics defined in Section \ref{sec: w stat}, $\Wst$ shows a less conservative FDR  and slightly better selection power compared with the others in Figure \ref{fig: simulation with beta(lambda)} in the tested regime of $\nu$. 
Meanwhile, $\Wbc$ shows the most conservative performance compared with $\Ws$ and $\Wst$, which trades the selection power for a better empirical FDR control. Such a phenomenon is explained by Proposition \ref{prop: relations} and \ref{prop: inequality of w statistics}.

In all cases, particularly for $D_2$ and $D_3$, the selection power of Split Knockoffs shows a first increasing then decreasing trend with respect to $\nu$. This is discussed by Proposition \ref{thm: sign consistency front} in Section \ref{sec: sign consistency short}, that increasing $\nu$ will help discover strong nonnull features at the cost of possibly losing weak ones. Thus a good power relies on a proper choice of $\nu$ for the trade-off. In practice, cross validation renders satisfactory results, which is validated in the next section.

\subsection{Comparisons between Split Knockoffs and Knockoffs}

In this section, we conduct more simulation experiments on the cross validation optimal estimator choice of $\widehat{\beta}(\lambda)=\widehat\beta_{\hat\nu, \hat\lambda}$, which gives higher selection power compared with the $\nu$-Split LASSO solution path choice of $\widehat{\beta}(\lambda) = \widehat{\beta}_\nu(\lambda)$, as shown in Section \ref{sec: plots for sk}. 
Here, for the choice of $\nu$ in the calculation of the feature significance and knockoff significance by Equation \eqref{eq:gamma} and Equation \eqref{eq:t_gamma}, we take the cross validation optimal choice of $\nu = \hat\nu$ to maximize the selection power of Split Knockoffs.

We show in Table \ref{table: comparison} the performance of Split Knockoffs with all the three versions of $W$ statistics, under the above choice of $\widehat{\beta}(\lambda)$ and $\nu$ in the simulation settings described in Section \ref{sec: simulation_settings}. We also provide comparisons between Split Knockoffs and standard Knockoffs when both are applicable (the case of $D_1$ and $D_2$). In particular, for the 1-D fused LASSO case that $D_2\in \mathbb R^{(p-1)\times p}$ is the graph difference operator on a line, we make Knockoffs applicable by introducing the induced LASSO problem from the generalized LASSO problem (see Section \ref{sec:genlasso} for details). We note that the Knockoffs are implemented in the whole dataset without data splitting.

\begin{table}[!ht]
    \caption{FDR and Power for Knockoffs and Split Knockoffs in simulation experiments ($q = 0.2$). The intercept $\widehat{\beta}(\lambda)$ for Split Knockoffs is taken as a fixed cross-validation optimal estimator $\widehat{\beta}_{\hat\nu, \hat{\lambda}}$, with the $\nu$ for calculating the feature and knockoff significance taken as $\hat\nu$. In this table, we present the average performance of Knockoffs and Split Knockoffs in FDR and Power, together with the standard deviations in 200 simulation instances. For shorthand notations, we use "SK(+)" to refer to "Split Knockoff(+)".}
    \centering
    \begin{tabular}{c|cccc}
    \hline
        Performance & Knockoff & SK with $\Ws$ & SK with $\Wst$ & SK with $\Wbc$\\
        \hline
        FDR in $D_1$ & 0.2233 & 0.2481 &  0.2517 &  0.0667 \\
        ~& $\pm$0.1584 & $\pm$0.1465 & $\pm$0.1465 & $\pm$0.0706 \\
        Power in $D_1$ & 1.0000 & 1.0000 & 1.0000 & 1.0000 \\
        ~& $\pm$0.0000 & $\pm$0.0000 & $\pm$0.0000 & $\pm$0.0000 \\
        \hline
        FDR in $D_2$ & 0.2813 & 0.2206 &  0.2206 & 0.1492  \\
        ~& $\pm$0.1771 & $\pm$0.1341 & $\pm$0.1341 & $\pm$0.1130 \\
        Power in $D_2$ & 0.5571 & 0.9886 &  0.9886  & 0.9886\\
        ~& $\pm$0.3231 & $\pm$0.0299 & $\pm$0.0299 & $\pm$0.0299 \\
        \hline
        FDR in $D_3$ & N/A & 0.2374 & 0.2386 &  0.1529 \\
        ~& N/A & $\pm$0.1124 & $\pm$0.1133 & $\pm$0.0859 \\
        Power in $D_3$ & N/A & 0.9352  & 0.9352 &  0.9352 \\
        ~& N/A & $\pm$0.0509 & $\pm$0.0509 & $\pm$0.0509 \\
    \hline
    \hline
        Performance & Knockoff+ & SK+ with $\Ws$ & SK+ with $\Wst$ & SK+ with $\Wbc$ \\
        \hline
        FDR in $D_1$ & 0.1787 &   0.1914 &  0.1929  & 0.0521  \\
        ~& $\pm$0.1487 & $\pm$0.1444 & $\pm$0.1447 & $\pm$0.0649 \\
        Power in $D_1$ & 1.0000& 1.0000 & 1.0000 & 1.0000 \\
        ~& $\pm$0.0000 & $\pm$0.0000 & $\pm$0.0000 & $\pm$0.0000\\
        \hline
        FDR in $D_2$ & 0.1649 & 0.1709 &  0.1709 & 0.1085 \\
        ~& $\pm$0.1985 & $\pm$0.1217  & $\pm$0.1217  & $\pm$0.0984 \\
        Power in $D_2$ & 0.2914 & 0.9886 &  0.9886  & 0.9886 \\
        ~& $\pm$0.3669 & $\pm$0.0299 & $\pm$0.0299 & $\pm$0.0299 \\
        \hline
        FDR in $D_3$ & N/A & 0.2100  & 0.2110 & 0.1347  \\
        ~& N/A & $\pm$0.1124 & $\pm$0.1128 & $\pm$0.0845 \\
        Power in $D_3$ & N/A& 0.9352  & 0.9352 &  0.9352 \\
        ~& N/A & $\pm$0.0509 & $\pm$0.0509 & $\pm$0.0509 \\
    \hline
      \end{tabular}
      \label{table: comparison}
\end{table}

As shown in Table \ref{table: comparison}, Split Knockoffs(+) with the cross-validation optimal choice $\widehat{\beta}(\lambda)=\widehat\beta_{\hat\nu, \hat\lambda}$ achieve desired FDR in the family of $W$ under all cases of transformational sparsity. In particular, Split Knockoffs with $\Wbc$ is the most conservative with a lower FDR and the same power. In all cases, (Split) Knockoff+ has a better control in the FDR compared with (Split) Knockoff, at the cost of a potential loss in selection power. 

Compared with standard Knockoffs, Split Knockoffs exhibits higher power in the case of $D_2$. In this case, the correlated $X$ and $D_2$ destroy the incoherence condition of induced the LASSO problem (see Equation \eqref{tran} in Section \ref{sec:genlasso} for discussions) such that standard Knockoffs suffer from losing the model selection consistency in the case of $D_2$. It hurts the selection power for standard Knockoffs. In contrast, for Split Knockoffs, the improved $\nu$-incoherence condition of Split LASSO as discussed in Section \ref{sec: sign consistency short} leads to better model selection consistency on regularization paths which helps improve the selection power.

\section{Application: Alzheimer's Disease}
\label{Sec: applications}

In this experiment, we apply the Split Knockoff method to study lesion regions of brains and their connections in  Alzheimer's Disease (AD), which is the major cause of dementia and has attracted increasing attention in recent years. 

\subsection{Dataset}

The data is obtained from ADNI (\url{http://adni.loni.ucla.edu}) dataset, acquired by structural Magnetic Resonance Imaging (MRI) scan. In total, the dataset contains $n=752$ samples, with 126 AD, 433 Mild Cognitive Impairment (MCI), and 193 Normal Controls (NC). For each image, we implement the Dartel VBM \citep{ashburner2007fast} for pre-processing, followed by the toolbox \emph{Statistical Parametric Mapping} (SPM) for segmentation of gray matter (GM), white matter (WM), and cerebral spinal fluid (CSF). Then we use Automatic Anatomical Labeling (AAL) atlas \citep{tzourio2002automated} to partition the whole brain into $p = 90$ Cerebrum brain anatomical regions, with the volume of each region (summation of all GMs in the region) provided.

We use $X \in \mathbb{R}^{n \times p}$ to denote the design matrix, with each element $X_{i,j}$ representing the column-wise normalized volume of the region $j$ in the subject $i$'s brain. The response variable vector $y \in \mathbb{R}^{n}$ denotes the Alzheimer's Disease Assessment Scale (ADAS), which was originally designed to assess the severity of cognitive dysfunction \citep{rosenwg1984scale} and was later found to be able to clinically distinguish the diagnosed AD from normal controls \citep{zec1992alzheimer}. We test two types of transformational sparsity:
\begin{enumerate}
    \item[(a)] $D=I_{p}$, the identity matrix, for selecting the regions that account for the Alzheimer's disease, where in this case $m=p=90$;
    \item[(b)] $D$ is the graph difference operator matrix on the brain region connectivity graph, for selecting the connections of regions with large activation changes accounting for the disease, where in this case $m=463>p=90$;
\end{enumerate}
In the experiments below, we choose the Split Knockoff statistics $\Wst$ to demonstrate the results as it exhibits the best selection power in simulations, where $\widehat{\beta}(\lambda)$ is taken as a fixed cross validation optimal estimator $\widehat{\beta}_{\hat\nu, \hat{\lambda}}$, screening over $\log\nu$ on a grid between 0 and 2 with a step size $0.4$, and $\log\lambda$ on a grid between 0 and -8 with a step size $ 0.4$. For the regularization paths in Equation \eqref{eq:gamma} and \eqref{eq:t_gamma}, we take $\log\lambda$ from an arithmetic sequence between 0 and -6 with step size $h_\lambda = 0.01$. The dataset $\D = (X, y)$ is randomly split into two parts $\D_1 = (X_1, y_1)$ and $\D_2 = (X_2, y_2)$ with $n_1$ and $n_2$ samples respectively, where $n_1 = 150$ and $n_2 = n-n_1 = 602$. We show the experimental results on one random data split instance in this section, where for multiple instances of data splits the frequency plots on the most frequently selected regions and connections are provided as supplementary materials in Section \ref{sec: freq}. 
\subsection{Region Selection}
In this experiment, consider the selection of lesion regions, with $D = I_p$. The target FDR level is set at $q = 0.2$.

\begin{table}[!ht]
    \caption{Selected Regions by Split Knockoff on Alzheimer's Disease ($q = 0.2$). We choose the $W$ statistics for Split Knockoff as $\Wst$, with $\widehat{\beta}(\lambda)$ taken as a fixed cross validation optimal estimator $\widehat{\beta}_{\hat\nu, \hat{\lambda}}$, where the minimal cross validation loss lies on the choice $\log \hat{\nu}=0$. }
    \centering
    \begin{tabular}{c|cc}
    \hline
        Region & \multicolumn{2}{c}{Split Knockoff with $\log\nu$}\\
        ~  &[0, 1.8]& [1.9, 2] \\
        \hline
        Inferior frontal gyrus, opercular part (L) &  $\surd$&  $\surd$\\
        Hippocampus (L) &  $\surd$ & $\surd$ \\
        Hippocampus (R) & $\surd$  & $\surd$\\
        Inferior parietal gyrus (R) &  $\surd$ &  \\
        Middle temporal gyrus (L) & $\surd$ & \\
        \hline
      \end{tabular}
      \label{tab: region}
\end{table}

Our region selection results are summarized in Table \ref{tab: region} where $\log(\nu)$ is between 0 and 2 with a step size 0.1. At $\log\hat\nu=0$ where $\widehat{\beta}_{\hat\nu, \hat{\lambda}}$ is optimal in terms of cross-validation loss, our algorithm selects five regions that are all related to AD and have been reported to suffer from degeneration during disease progression \citep{vemuri2010role, schuff2009mri, karas2007precuneus, greene2010subregions}. Specifically, it was reported in \cite{vemuri2010role} that the hippocampus in both sides are responsible for learning and memory; while the middle temporal gyrus participates in language and memory processing. Moreover, \cite{tyler2005temporal} and \cite{schremm2018cortical} found that the opercular part of the Inferior frontal gyrus may be associated with tone and inflectional morpheme processing. Finally, the inferior parietal gyrus which is associated with motor and sensory, was found to suffer from volume reduction after the stage from normal control (NC) to mild cognitive impairment \citep{greene2010subregions}, {\it i.e.}, an intermediate stage between NC to Alzheimer's Disease.

\subsection{Connection Selection}

\label{sec: connection selection}

 In this experiment, we consider the connection or edge detection, in which each connection is associated with two adjacent brain regions. Formally speaking, we set $D$ as the graph gradient (difference) operator on the graph $G=(V,E)$ where $V$ denotes the vertex set of brain regions and $E$ denotes the (oriented) edge set of region pairs in neighbor, such that $D(\beta)(i,j) = \beta_i - \beta_j$ for $(i,j)\in E$. Thus $D$ is the graph gradient operator that measures the differences in the impacts of Alzheimer's Disease between adjacent regions in brains. We shall expect a properly selected edge will connect regions of high contrast variations in the degree of atrophy during the disease progression. For example, a selected edge may connect an atrophy region in the brain significantly impaired by Alzheimer's Disease to another region which is less influenced by the disease.

\begin{table}[!ht]
    \caption{Selected connections by Split Knockoffs on Alzheimer's Disease ($q = 0.2$). We choose the $W$ statistics for Split Knockoff as $\Wst$, with $\widehat{\beta}(\lambda)$ taken as a fixed cross validation optimal estimator $\widehat{\beta}_{\hat\nu, \hat{\lambda}}$, where the minimal cross-validation loss lies on the choice $\log\hat{\nu}=0.4$.}
    \centering
    \resizebox{\textwidth}{!}{
    \begin{tabular}{cc|cccc}
        \hline
        \multicolumn{2}{c|}{Connection} & \multicolumn{4}{c}{Split Knockoff with $\log\nu$}\\
        Region 1 & Region 2  & [0, 0.1] & [0.2, 1.6] & 1.7 & [1.8, 2]\\
        \hline
        Hippocampus (L) & Posterior cingulate gyrus (L)
        & $\surd$ & $\surd$ & $\surd$ &  $\surd$ \\
        Hippocampus (L) & Lingual gyrus (L)
        & $\surd$ & $\surd$ & $\surd$ & $\surd$ \\
        Hippocampus (L) & Fusiform gyrus (L)
        & $\surd$ &  & $\surd$ &   \\
        Hippocampus (L) & Precuneus (L)
        & $\surd$ & $\surd$ & $\surd$ & $\surd$ \\
        Hippocampus (R) & Insula (R) 
        & $\surd$ &  &  &  \\
        Hippocampus (R) & Lingual gyrus (R)  
        & $\surd$ & $\surd$ & $\surd$ & $\surd$ \\
        Hippocampus (R) & Precuneus (R)
        & $\surd$ & $\surd$ & $\surd$ &  $\surd$ \\
        Hippocampus (R) & Superior temporal gyrus (R) 
        & $\surd$ &  &  &   \\
        Amygdala (R) & Putamen (R) 
        & $\surd$ & $\surd$ & $\surd$ &  $\surd$ \\
        Amygdala (R) & Caudate nucleus (R)  
        & $\surd$ &  &  &   \\
        Middle frontal gyrus, orbital part (L) &  Insula (L) 
        & $\surd$ &  &  &  \\
        \hline
    \end{tabular}
    }
    \label{tab: connection}
\end{table}

For connection selection, our results are summarized in Table \ref{tab: connection} where $\log(\nu)$ is between 0 and 2 with a step size 0.1. In total, there are eleven connections selected, six of which are selected at the cross-validation optimal $\hat{\nu}$. Among all the eleven selected connections, eight of them are associated with the Hippocampus on the left side or the right side, where the Hippocampus is one of the most early affected regions during disease progression \citep{juottonen1999comparative}. Similarly, the selected connections that involve the Amygdala echo the previous findings that the Amygdala is also affected early \citep{knafo2012amygdala}, which can explain the neuropsychiatric symptoms that are commonly observed in mild stages of AD. These studies provide us with references to why these pairs of adjacent regions have different degrees of atrophy. Finally, the connection between Insula (L) and Middle frontal gyrus (L) may be a false discovery, as both regions were reported to be atrophied and this connection is not included in the cross-validation optimal selection. Although it was found that the atrophy happens in the whole region of the Insula while only the sub-region of the middle frontal gyrus \citep{busatto2008voxel}, it is still ambiguous to claim that the involved regions are significantly different in terms of degrees of degeneration. In summary, Figure \ref{fig: ad reg and con} shows the selected regions and connections at cross-validated $\hat\nu$. In the graph, each vertex represents a cerebrum brain region in Automatic Anatomical Labeling (AAL) atlas \citep{tzourio2002automated}, with abbreviations of each region marked in the vertex. A comparison table between the full region names and their abbreviations is now given in Table \ref{tab:name of region}.

\section{Conclusion}

\label{sec: conclusion}

In this paper, we propose Split Knockoffs as a data-adaptive FDR control method for the transformational sparsity recovery in linear regression where a linear transformation of parameters is sparse. By relaxing the linear subspace constraint to its neighborhood in a lifted parameter space, our method has FDR under control and further gains power in improving the model selection consistency conditions on Split LASSO regularization paths. The main theoretical contribution of this paper is that we construct some new supermartingale structures to achieve a theoretical FDR control for Split Knockoffs where exchangeability is broken by the heterogeneous noise bought by the transformation. A high-dimensional generalization is also discussed. In a real-world application to Alzheimer's Disease study with MRI data, Split Knockoffs discover important atrophy lesion regions in the brain and neighboring region connections of high contrast in atrophy variations during disease progression. Future directions include generalizations of our methodology to directional FDR and random designs.

\bibliography{reference}


\newpage
\appendix

\section{Construction of Split Knockoff Copies}

\label{sec: details for copy}

In this section, we provide more details on the construction of Split Knockoff copy matrices. An explicit form of the construction is shown in the following proposition.

\begin{proposition}[Split Knockoff Matrix]
\label{prop: existence of copy}
    If $n_2\ge m+p$, for any vector $\vecs$ in Equation \eqref{eq: copy} satisfying
\begin{align}
    \diag(\vecs)  \succeq 0,\ 2C_\nu-\diag(\vecs) \succeq 0,\label{eq:vecs}
\end{align}
where $C_\nu:=\Sigma_{\gamma,\gamma}-\Sigma_{\gamma,\beta}\Sigma_{\beta,\beta}^{-1}\Sigma_{\beta,\gamma}$, $\Sigma_{\beta,\beta}:=A_\beta^TA_\beta$, $\Sigma_{\beta,\gamma}=\Sigma_{\gamma,\beta}^T:=A_\beta^TA_\gamma$, and $\Sigma_{\gamma,\gamma}:=A_\gamma^TA_\gamma$, there is a valid Split Knockoff matrix for $n_2\ge m+p$, 
\begin{align}
    \tilde{A}_\gamma=A_\gamma(I_{m}-C_\nu^{-1}\diag(\vecs))+A_\beta \Sigma_{\beta,\beta}^{-1}\Sigma_{\beta,\gamma}C_{\nu}^{-1}\diag(\vecs)+\tilde{U}K,
\end{align}
where $\tilde{U}\in \R^{(n_2+m)\times m}$ is the orthogonal complement of $[A_\beta, A_\gamma]$ and $K\in \R^{m\times m}$ satisfies $K^TK=2\diag(\vecs)-\diag(\vecs)C_\nu^{-1}\diag(\vecs)$. 
\end{proposition}
\begin{remark}
The requirement $n_2\ge m+p$ is from the property that $\tilde{U}\in \R^{(n_2+m)\times m}$ is an orthogonal complement of $[A_\beta, A_\gamma]\in \R^{(n_2+m)\times (m+p)}$.
\end{remark}
In the following, we show that any $\vecs$ satisfying Equation \eqref{eq:vecs} will ensure the existence of $\tilde{A}_\gamma$. 
Note that a necessary and sufficient condition for the existence of $\tilde{A}_\gamma$ satisfying Equation \eqref{eq: copy} is
\begin{align*}
    G: = 
    \begin{bmatrix}
        \Sigma_{\beta,\beta} & \Sigma_{\beta,\gamma} & \Sigma_{\beta,\gamma}\\
        \Sigma_{\beta,\gamma} & \Sigma_{\gamma,\gamma} & \Sigma_{\gamma,\gamma}-\diag(\vecs)\\
        \Sigma_{\beta,\gamma} & \Sigma_{\gamma,\gamma}-\diag(\vecs) & \Sigma_{\gamma,\gamma}
    \end{bmatrix}
    \succeq 0.
\end{align*}
This holds if and only if the Schur complement of $\Sigma_{\beta,\beta}$ is positive semi-definite, i.e.
\begin{align}
    \begin{bmatrix}
        C_\nu & C_\nu-\diag(\vecs)\\
        C_\nu-\diag(\vecs) & C_\nu
    \end{bmatrix}
    \succeq 0,\nonumber
\end{align}
which holds if and only if $C_\nu$ and its Schur complement are positive semi-definite, i.e.
\begin{subequations}
\begin{align}
    &C_\nu  \succeq 0,\nonumber\\
    &C_\nu-(C_\nu-\diag(\vecs))C_\nu^{-1}(C_\nu-\diag(\vecs))=2\diag(\vecs)-\diag(\vecs)C_\nu^{-1}\diag(\vecs)
    \succeq 0.\nonumber
\end{align}
\end{subequations}
The equations above are equivalent with Equation \eqref{eq:vecs}. Then one can verify that the construction in Proposition \ref{prop: existence of copy} satisfies Equation \eqref{eq: copy} for any vector $\vecs$ satisfying Equation \eqref{eq:vecs}.

There are various choices for $\vecs=(s_i)$ satisfying \eqref{eq:vecs} for Split Knockoffs. Below we give two typical examples. 

\begin{itemize}
    \item[(a)] (SDP for discrepancy maximization) One can maximize the discrepancy between Knockoffs and its corresponding features by solving the following SDP
\begin{eqnarray*} 
\mbox{maximize} & & \sum_i s_i , \\
\mbox{subject to} & & 0\leq s_i \leq \frac{1}{\nu} \mbox{ and } \frac{\diag(s)}{2} \preceq  C_\nu.
\end{eqnarray*}
\item[(b)] (Equi-correlation) Take $s_i = 2\lambda_{\mathrm{min}}(C_\nu)\land \frac{1}{\nu}$ for all $i\in \{1, 2, \cdots, m\}$, where $\lambda_{\mathrm{min}}(C_\nu)$ represents the minimal eigenvalue of $C_\nu$.
\end{itemize}

\section{Proofs}
\label{sec: proof}

In this section, we will prove the propositions and theories in the main text in the following order. We will first prove Proposition \ref{prop: relations}, the inclusion property of selectors, then Proposition \ref{prop: s to bc}, a preliminary proposition for Theorem \ref{theorem: fdr}. After that, we will provide the proof of our main theorem on the FDR control --- Theorem \ref{theorem: fdr} --- and some additional supporting lemmas. Finally, we will present the proof of Theorem \ref{thm: fdr hd}, the FDR control theorem for Split Knockoffs in high dimensional settings, followed by the proof of Proposition \ref{thm: sign consistency front}, the model selection consistency of Split LASSO.

\subsection{Proof of Proposition \ref{prop: relations}}

\label{sec: proof prop1}

\begin{proof}

We will prove Proposition \ref{prop: relations} by showing the following two arguments one by one.

\begin{enumerate}
    \item $\hat{S}^{\s}\subseteq \hat{S}^{\stau}$;
    \item $\hat{S}^{\bc}\subseteq \hat{S}^{\s}$.
\end{enumerate}

\textbf{1) $\hat{S}^{\s}\subseteq \hat{S}^{\stau}$.} First, we show in the following that $\Ws_i\le \Wst_i$ for all $i$. By definition, $\tilde{Z}_i\ge \Zt_i$, therefore

\begin{align*}
    \Ws_i = Z_i\cdot\sign(Z_i - \tilde{Z}_i)\le  Z_i\cdot\sign(Z_i - \Zt_i) = \Wst_i.
\end{align*}

   Our next goal is to show that for any determined $\D_1$, $\D_2$, there holds $T_q^{\stau}\le T_q^{\s}$. Once such an argument is established, by the property $\Ws_i\le \Wst_i$ for all $i$, there holds
    \begin{align*}
        \hat{S}^{\s} = \{i: \Ws_i\ge T_q^{\s}\}\subseteq \{i: \Wst_i\ge T_q^{\s}\}\subseteq \{i: \Wst_i\ge T_q^{\stau}\} = \hat{S}^{\stau}.
    \end{align*}
    Thus it will be sufficient to show $T_q^{\stau}\le T_q^{\s}$ to prove $\hat{S}^{\s}\subseteq \hat{S}^{\stau}$.

By the property that $\Ws\le \Wst$ for all $i$, for all $\lambda>0$, there holds
\begin{align*}
    \{i: \Wst_i\le -\lambda\}\subseteq \{i: \Ws_i\le -\lambda\},\ \{i: \Ws_i\ge \lambda\}\subseteq \{i: \Wst_i\ge \lambda\},
\end{align*}
which further suggest that
\begin{align*}
    &\frac{|\{i:\Wst_i\le-\lambda\}|}{1\vee|\{i:\Wst_i\ge \lambda\}|}\le \frac{|\{i:\Ws_i\le-\lambda\}|}{1\vee|\{i:\Ws_i\ge \lambda\}|},  \frac{1+|\{i:\Wst_i\le-\lambda\}|}{1\vee|\{i:\Wst_i\ge \lambda\}|}\le \frac{1+|\{i:\Ws_i\le-\lambda\}|}{1\vee|\{i:\Ws_i\ge \lambda\}|}.
\end{align*} 
Thus by the definition of threshold $T_q^{\s}$ and $T_q^{\stau}$, there holds $T_q^{\stau}\le T_q^{\s}$.

\textbf{2) $\hat{S}^{\bc}\subseteq \hat{S}^{\s}$.} The proof follows a similar stream as above. First, we show that $\Wbc_i\le \Ws_i$ for all $i$, as
\begin{itemize}
    \item if $Z_i>\tilde{Z}_i$, $\Wbc_i = Z_i = \Ws_i$;
    \item if $Z_i<\tilde{Z}_i$, $\Wbc_i = -\tilde{Z}_i<-Z_i = \Ws_i$.
\end{itemize}

We will then show in the following that for any determined $\D_1$, $\D_2$, there holds $T_q^{\s}\le T_q^{\bc}$. Once the argument is established, by the property $\Wbc_i\le \Ws_i$ for all $i$, there holds
    \begin{align*}
        \hat{S}^{\bc} = \{i: \Wbc_i\ge T_q^{\bc}\}\subseteq \{i: \Ws_i\ge T_q^{\bc}\}\subseteq \{i: \Ws_i\ge T_q^{\s}\} = \hat{S}^{\s}.
    \end{align*}
    Thus it will be sufficient to show $T_q^{\s}\le T_q^{\bc}$ to prove $\hat{S}^{\bc}\subseteq \hat{S}^{\s}$.

By the property that $\Wbc\le \Ws$ for all $i$, for all $\lambda>0$, there holds
\begin{align*}
    \{i: \Ws_i\le -\lambda\}\subseteq \{i: \Wbc_i\le -\lambda\},\ \{i: \Wbc_i\ge \lambda\}\subseteq \{i: \Ws_i\ge \lambda\},
\end{align*}
which further suggest that
\begin{align*}
    &\frac{|\{i:\Ws_i\le-\lambda\}|}{1\vee|\{i:\Ws_i\ge \lambda\}|}\le \frac{|\{i:\Wbc_i\le-\lambda\}|}{1\vee|\{i:\Wbc_i\ge \lambda\}|},  \frac{1+|\{i:\Ws_i\le-\lambda\}|}{1\vee|\{i:\Ws_i\ge \lambda\}|}\le \frac{1+|\{i:\Wbc_i\le-\lambda\}|}{1\vee|\{i:\Wbc_i\ge \lambda\}|}.
\end{align*} 
Thus by the definition of threshold $T_q^{\bc}$ and $T_q^{\s}$, there holds $T_q^{\s}\le T_q^{\bc}$.
\end{proof}

\subsection{Proof of Proposition \ref{prop: inequality of w statistics}}
\label{sec: proof inequality of w statistics}

\begin{proof}
    We prove the properties in Proposition \ref{prop: inequality of w statistics} one by one.

    \textbf{1) For all $i$, $\{\Wst_i\le-T\}\subseteq\{\Ws_i\le-T\}\subseteq\{\Wbc_i\le-T\}$.}

    We first show that there holds $\{\Wst_i\le -T\}\subseteq\{\Ws_i\le -T\}$ for all $i$. For all $i$, note that $\{\Ws_i\le -T\} = \{Z_i\ge T, \tilde{Z}_i>Z_i\}$, while $\{\Wst_i\le -T\} = \{Z_i\ge T, \tau(\tilde{Z})_i>Z_i\}$. Since by definition $\tau(\tilde{Z})_i = \tilde{Z}_i\cdot 1\{r_i=\tilde{r}_i\} \le \tilde{Z}_i$, there holds $\{\Wst_i\le -T\}\subseteq\{\Ws_i\le -T\}$.

    Then we show that there holds $\{\Ws_i\le -T\}\subseteq\{\Wbc_i\le -T\}$ for all $i$. For all $i$, note that $\{\Wbc_i\le -T\} = \{Z_i\vee \tilde{Z}_i\ge T, \tilde{Z}_i>Z_i\}$, while $\{\Ws_i\le -T\} = \{Z_i\ge T, \tilde{Z}_i>Z_i\}$. Since $Z_i\vee \tilde{Z}_i\ge Z_i$, there holds $\{\Ws_i\le -T\}\subseteq\{\Wbc_i\le -T\}$.
    
    \textbf{2) For all $i$, $\{\Wbc_i\ge T\} = \{\Ws_i\ge T\}\subseteq\{\Wst_i\ge T\}$.}

    We first show that  $\{\Ws_i\ge T\}\subseteq\{\Wst_i\ge T\}$ for all $i$. For all $i$, note that $\{\Ws_i\ge T\} = \{Z_i\ge T, \tilde{Z}_i<Z_i\}$, while $\{\Wst_i\ge T\} = \{Z_i\ge T, \tau(\tilde{Z})_i<Z_i\}$. Since by definition $\tau(\tilde{Z})_i = \tilde{Z}_i\cdot 1\{r_i=\tilde{r}_i\} \le \tilde{Z}_i$, there holds $\{\Ws_i\ge T\}\subseteq\{\Wst_i\ge T\}$.

    Then we show that $\{\Wbc_i\ge T\} = \{\Ws_i\ge T\}$ for all $i$. For all $i$, note that $\{\Wbc_i\ge T\} = \{Z_i\vee \tilde{Z}_i\ge T, \tilde{Z}_i<Z_i\}$, where $Z_i\vee \tilde{Z}_i = Z_i$ for $Z_i>\tilde{Z}_i$. Therefore, $\{\Wbc_i\ge T\} = \{Z_i\ge T, \tilde{Z}_i<Z_i\} = \{\Ws_i\ge T\}$.

    \textbf{3) There holds $\M_{T}(\Wst)\ge \M_{T}(\Ws)\ge \M_{T}(\Wbc)$.}
    
    Since $\{\Wst_i\le-T\}\subseteq\{\Ws_i\le-T\}\subseteq\{\Wbc\le-T\}$ for all $i$, there holds
    \begin{align}
        \sum_{i\in S_0}1\{\Wbc_i\le -T\}\ge\sum_{i\in S_0}1\{\Ws_i\le -T\}\ge \sum_{i\in S_0}1\{\Wst_i\le -T\}.\label{ineq: s->st -}
    \end{align}
    Moreover, since $\{\Wbc_i\ge T\} = \{\Ws_i\ge T\}\subseteq\{\Wst\ge T\}$ for all $i$, there holds
    \begin{align}
        \sum_{i\in S_0}1\{\Wst_i\ge T\}\ge \sum_{i\in S_0}1\{\Ws_i\ge T\} = \sum_{i\in S_0}1\{\Wbc_i\ge T\}.\label{ineq: s->st +}
    \end{align}
    Combining Equation \eqref{ineq: s->st -} with Equation \eqref{ineq: s->st +}, there holds $\M_{T}(\Wst)\ge \M_{T}(\Ws)\ge \M_{T}(\Wbc)$.
\end{proof}

\subsection{Proof of Proposition \ref{prop: well definedness}}

\label{sec: proof well defined}

\begin{proof}
    We first show that for $Z_i>0$, $\{\limsup_{\lambda\to Z_i^-}\sign(\gamma_i(\lambda)) = 1\}$ is the complement of $\{ \liminf_{\lambda\to Z_i^-}\sign(\gamma_i(\lambda)) = -1\}$, which is equivalent with the following two points
    \begin{enumerate}
        \item $\{\limsup_{\lambda\to Z_i^-}\sign(\gamma_i(\lambda)) \neq 1\}\cap\{\liminf_{\lambda\to Z_i^-}\sign(\gamma_i(\lambda)) \neq -1\} = \emptyset$,
        \item $\{\limsup_{\lambda\to Z_i^-}\sign(\gamma_i(\lambda)) = 1\}\cap\{ \liminf_{\lambda\to Z_i^-}\sign(\gamma_i(\lambda)) = -1\} = \emptyset$.
    \end{enumerate}
    We prove the two points in the following.
    
    \textbf{1)} 
    We show by contradiction that for $Z_i>0$, the statements $\limsup_{\lambda\to Z_i^-}\sign(\gamma_i(\lambda)) \neq 1$ and $\liminf_{\lambda\to Z_i^-}\sign(\gamma_i(\lambda)) \neq -1$ cannot both be true.

    Suppose that both statements above are true, then since $\sign(x)\in\{-1, 0, 1\}$ for $x\in\R$, there holds:
    \begin{itemize}
        \item $\limsup_{\lambda\to Z_i^-}\sign(\gamma_i(\lambda)) \neq 1$ suggests that there exists $\delta_1>0$, such that for $Z_i-\delta_1<\lambda<Z_i$, there holds $\gamma_i(\lambda)\le 0$;
        \item $\liminf_{\lambda\to Z_i^-}\sign(\gamma_i(\lambda)) \neq -1$ suggests that there exists $\delta_2>0$, such that for $Z_i-\delta_2<\lambda<Z_i$, there holds $\gamma_i(\lambda)\ge 0$.
    \end{itemize}
    Therefore, for $Z_i-\min\{\delta_1, \delta_2\}<\lambda<Z_i$, there holds $\gamma_i(\lambda)= 0$. Since $\widehat{\beta}(\lambda)$ is continuous, $\gamma(\lambda)$ satisfying Equation \eqref{eq: kkts} is continuous. Therefore, $\gamma_i(Z_i)=\lim_{\lambda\to Z_i^+}\gamma_i(\lambda) = 0$. This further suggests that $\gamma_i(\lambda)= 0$ for $\lambda> Z_i-\min\{\delta_1, \delta_2\}$, which contradicts with the definition of $Z$ in Equation \eqref{def: Z}.

    \textbf{2)} 
    We show by contradiction that for $Z_i>0$, the statements $\limsup_{\lambda\to Z_i^-}\sign(\gamma_i(\lambda)) = 1$ and $\liminf_{\lambda\to Z_i^-}\sign(\gamma_i(\lambda)) = -1$ cannot both be true.

    Suppose that both statements above are true, then since $\rho(\lambda) \in \partial \|\gamma(\lambda)\|_1$, there holds:
    \begin{itemize}
        \item $\limsup_{\lambda\to Z_i^-}\sign(\gamma_i(\lambda)) = 1$ suggests that $\limsup_{\lambda\to Z_i^-}\rho_i(\lambda) = 1$;
        \item $\liminf_{\lambda\to Z_i^-}\sign(\gamma_i(\lambda)) = -1$ suggests that $\liminf_{\lambda\to Z_i^-}\rho_i(\lambda) = -1$.
    \end{itemize}
    However, since $\widehat{\beta}(\lambda)$ is continuous, $\rho(\lambda)$ satisfying Equation \eqref{eq: kkts} is continuous. Therefore, $1 = \limsup_{\lambda\to Z_i^-}\rho_i(\lambda) = \liminf_{\lambda\to Z_i^-}\rho_i(\lambda) = -1$, which leads to contradiction. 

    Then we show that Equation \eqref{def: equivalent def r} is equivalent with the definition of $r$ given in Equation \eqref{def: r and tilde r}. In particular, we show that for $Z_i>0$, there holds
    \begin{enumerate}
        \item $\{\limsup_{\lambda\to Z_i^-}\sign(\gamma_i(\lambda)) = 1\}\subseteq\{\sign(\rho_i(Z_i))=1\}$, 
        \item $\{\liminf_{\lambda\to Z_i^-}\sign(\gamma_i(\lambda)) = -1\}\subseteq\{\sign(\rho_i(Z_i))=-1\}$.
    \end{enumerate}
    The above statements are sufficient because it is already shown above that for $Z_i>0$, $\{\limsup_{\lambda\to Z_i^-}\sign(\gamma_i(\lambda)) = 1\}$ is the complement of $\{ \liminf_{\lambda\to Z_i^-}\sign(\gamma_i(\lambda)) = -1\}$. We prove the two points in the following.

    \textbf{1)} Suppose that $\limsup_{\lambda\to Z_i^-}\sign(\gamma_i(\lambda)) = 1$. Then since $\rho(\lambda) \in \partial \|\gamma(\lambda)\|_1$, there holds $\limsup_{\lambda\to Z_i^-}\rho_i(\lambda) = 1$. By the continuity of $\rho(\lambda)$, there holds $\rho_i(Z_i) = \limsup_{\lambda\to Z_i^-}\rho_i(\lambda) = 1$. Therefore $\sign(\rho_i(Z_i)) = 1$.

    \textbf{2)} Suppose that $\liminf_{\lambda\to Z_i^-}\sign(\gamma_i(\lambda)) = -1$. Then since $\rho(\lambda) \in \partial \|\gamma(\lambda)\|_1$, there holds $\liminf_{\lambda\to Z_i^-}\rho_i(\lambda) = -1$. By the continuity of $\rho(\lambda)$, there holds $\rho_i(Z_i) = \liminf_{\lambda\to Z_i^-}\rho_i(\lambda) = -1$. Therefore $\sign(\rho_i(Z_i)) = -1$.

    Equation \eqref{def: equivalent def tilde r} can be shown to be well-defined and equivalent with the definition of $\tilde r$ given in Equation \eqref{def: r and tilde r} in the exact same way as above. This ends the proof.
\end{proof}

\subsection{Proof of Proposition \ref{prop: s to bc}}

\label{sec: proof s to bc}

\begin{proof}

We first present the following two useful properties on $\Ws$ and $\Wbc$. Recall that by definitions of $\Wbc$ and $\Ws$, there holds for all $i\in \{1, 2,\cdots, m\}$ that
\begin{enumerate}
    \item[(i)] if $Z_i>\tilde{Z}_i$, then $|\Wbc_i|= Z_i = |\Ws_i|$;
    \item[(ii)] if $Z_i<\tilde{Z}_i$, then $|\Wbc_i|= \tilde{Z}_i> Z_i = |\Ws_i|$.
\end{enumerate}

Now we proceed to prove the argument that $\Fs^{\bc}(T)\subseteq \Fs^{\s}(T)$. Below we show that, the filtration associated with $\M_{T}(\Wbc)$ is a refinement of the filtration associated with $\M_{T}(\Ws)$, hence a stopping time adapted to the latter is also adapted to the former.

To see this, recall that by definitions of $\Wbc$ and $\Ws$, 
from (i) we see $\{i:\Wbc_i>0\}=\{i:\Ws_i>0\}$, on which $|\Wbc_i|= Z_i = |\Ws_i|$, hence 
$$\#\{i: \Wbc_i\ge T\}=\#\{i: \Ws_i\ge T\};$$ 
from (i) and (ii) we see for all $i$, $|\Ws_i|\leq |\Wbc_i|$, hence $$\{\zeta_i: |\Wbc_i|< T\} \subseteq \{\zeta_i: |\Ws_i|< T\}.$$ 
Therefore, $\{\zeta_i: |\Wbc_i|< T\}$ can be represented by $\{\zeta_i: |\Ws_i|< T\}$. Now it remains to consider $\#\{i: \Wbc_i\le -T\}$. In fact,
\begin{align*}
    \{i: \Wbc_i\le -T\}&=\{i: \Wbc_i\le -T, |\Ws_i|\ge T\}\cup \{i: \Wbc_i\le -T, |\Ws_i|< T\},\\
    & = \{i: \Wbc_i\le -T, \Ws_i\le -T\}\cup \{i: \Wbc_i\le -T, |\Ws_i|< T\},\\
    & = \{i: \Ws_i\le -T\}\cup \{i: \Wbc_i\le -T, |\Ws_i|< T\},
\end{align*}
which implies that $\#\{i: \Wbc_i\le -T\}$ can be sufficiently determined by $\#\{i: \Ws_i\le -T\}$ and $\{\zeta_i: |\Ws_i|< T\}$, both of which are already included in $\Fs^{\s}(T)$. This finally shows that $\Fs^{\bc}(T)\subseteq \Fs^{\s}(T)$. 
\end{proof}

\subsection{Proof of Theorem \ref{theorem: fdr}}

\label{sec: proof thm}

In this section, we will give the complete proof of Theorem \ref{theorem: fdr}. Our treatment includes the following four types of $W$-statistics:
\begin{enumerate}
    \item $\Ws:=Z\odot \sign(Z -\tilde{Z} ),$ where S refers to ``Split''.
    \item $\Wst:=Z\odot \sign(Z -\tau(\tilde{Z}) ),$ where S$\tau$ refers to applying truncation \eqref{eq:truncation} on $\Ws$.
    \item $\Wbc:= (Z \vee \tilde{Z}) \odot\sign(Z -\tilde{Z} ),$ where BC refers to the original definition adopted by Barber-Cand\`{e}s in \cite{barber2015controlling}.
    \item $\Wbct:=(Z \vee \tau(\tilde{Z}))\odot \sign(Z -\tau(\tilde{Z}) ),$ where BC$\tau$ refers to applying truncation \eqref{eq:truncation} on $\Wbc$.
\end{enumerate}
The last one $\Wbct$ is added here for completeness. For shorthand notation, we use $W\upc$ to represent any one of the four cases, where $\C\in\{\s, \stau, \bc, \bct\}$. In addition, $(\B, \A)$ is used to denote one of the pairs in $\{(\s, \bc), (\stau, \bct)\}$ with truncation $\tau$ adopted or not.

We will first show that by a standard procedure in Knockoffs as in \cite{barber2015controlling}, the problem of bounding the FDR by $q$ can be transferred into the problem of bounding $\E\left[\M_{T_q\upc}(W\upc)\right]$ by one, where $\M_{T}(W\upc)$ for any $T>0$ is defined in Equation \eqref{def: mtw}.
Following that, we divide the proof into two parts:
\begin{enumerate}
    \item In Section \ref{sec: proof thm part 1}, we will prove that $\E\left[\M_{T_q\upb}(W\upb)\right]\le 1$ by introducing an inverse supermartingale structure associated with $\M_{T}(W\upb)$.
    \item In Section \ref{sec: proof thm part 2}, we will show that $T_q\upa$ is a stopping time with respect to a filtration associated with $\M_{T}(W\upb)$, which enables us to show that $\E\left[\M_{T_q\upa}(W\upa)\right]\le\E\left[\M_{T_q\upa}(W\upb)\right]\le 1$.
\end{enumerate}

We begin the proof from the following common procedure of Knockoffs \citep{barber2015controlling}, that the upper bound of FDR in the case of $W\upc$ is transferred into the upper bound of $\E\left[\M_{T_q\upc}(W\upc)\right]$. The procedure goes as the following for Split Knockoff(+).

\begin{itemize}
    \item[(a)] (Split Knockoff) The mFDR can be bounded by the following product:
\begin{align}
    \E\left[\frac{\sum_{i\in S_0}1\{W_i\upc\ge T_q\upc\}}{\sum_{i}1\{W_i\upc\ge T_q\upc\}+q^{-1}}\right] \le \E\left[\frac{1+\sum_{i}1\{W_i\upc\le -T_q\upc\}}{\sum_{i}1\{W_i\upc\ge T_q\upc\}+q^{-1}}\frac{\sum_{i\in S_0}1\{W_i\upc\ge T_q\upc\}}{1+\sum_{i\in S_0}1\{W_i\upc\le -T_q\upc\}}\right]\label{knock3}.
\end{align}
By the definition of the Split Knockoff threshold, there holds
\begin{equation}
    \frac{\sum_{i}1\{W_i\upc\le -T_q\upc\}}{1\vee\sum_{i}1\{W_i\upc\ge T_q\upc\}}\le q\le 1,\nonumber
\end{equation}
which implies
\begin{equation}
    \sum_{i}1\{W_i\upc\le -T_q\upc\}\le q\sum_{i}1\{W_i\upc\ge T_q\upc\}.\nonumber
\end{equation}
Consequently, there holds
\begin{align}
    \frac{1+\sum_{i}1\{W_i\upc\le -T_q\upc\}}{\sum_{i}1\{W_i\upc\ge T_q\upc\}+q^{-1}}\le& \frac{1+q[\sum_{i}1\{W_i\upc\ge T_q\upc\}]}{\sum_{i}1\{W_i\upc\ge T_q\upc\}+q^{-1}}=q.\nonumber
\end{align}
Combined with Equation \eqref{knock3}, there holds
\begin{align}
    \E\left[\frac{\sum_{i\in S_0}1\{W_i\upc\ge T_q\upc\}}{\sum_{i}1\{W_i\upc\ge T_q\upc\}+q^{-1}}\right] &\le q\E\left[\frac{\sum_{i\in S_0}1\{W_i\upc\ge T_q\upc\}}{1+\sum_{i\in S_0}1\{W_i\upc\le -T_q\upc\}}\right] = q\E\left[\mathcal{M}_{T_q\upc}(W\upc)\right].\label{eq: knockoff bound}
\end{align}
\item[(b)] (Split Knockoff+) The following lines established the result,
\begin{align}
    \E\left[\frac{\sum_{i\in S_0}1\{W_i\upc\ge T_q\upc\}}{1\vee\sum_{i}1\{W_i\upc\ge T_q\upc\}}\right] & \le \E\left[\frac{1+\sum_{i}1\{W_i\upc\le -T_q\upc\}}{1\vee\sum_{i}1\{W_i\upc\ge T_q\upc\}}\frac{\sum_{i\in S_0}1\{W_i\upc\ge T_q\upc\}}{1+\sum_{i\in S_0}1\{W_i\upc\le -T_q\upc\}}\right],\nonumber\\
    &\le q\E\left[\frac{\sum_{i\in S_0}1\{W_i\upc\ge T_q\upc\}}{1+\sum_{i\in S_0}1\{W_i\upc\le -T_q\upc\}}\right] = q\E\left[\mathcal{M}_{T_q\upc}(W\upc)\right].\label{eq: knockoff+ bound}
\end{align}
\end{itemize}

Then we transfer the problem of bounding (m)FDR by $q$ in Theorem \ref{theorem: fdr} into the problem of bounding $\E\left[\M_{T_q\upc}(W\upc)\right]$ by one using Equation \eqref{eq: knockoff bound} and Equation \eqref{eq: knockoff+ bound}. Then we will prove the following two inequalities respectively in Section \ref{sec: proof thm part 1} and Section \ref{sec: proof thm part 2}.
\begin{enumerate}
    \item We will first prove that $\E\left[\M_{T_q\upb}(W\upb)\right]\le 1$ in Section \ref{sec: proof thm part 1}.
    \item We will then show that $\E\left[\M_{T_q\upa}(W\upa)\right]\le\E\left[\M_{T_q\upa}(W\upb)\right]\le 1$ in Section \ref{sec: proof thm part 2}.
\end{enumerate}

\subsubsection{Proof of Theorem \ref{theorem: fdr}: Case I}

\label{sec: proof thm part 1}

As a reminder for the notations, $\B$ represents an arbitrary element from $(\s, \stau)$. 
In this section,
We will target to formulate a supermartingale structure associated with $\M_{T}(W\upb)$. In order to do such a thing, we will need to show that $W\upb$ is a statistics whose sign $\{\sign(W\upb)\}$ and length ($|W\upb|$) are independent from each other.

To show such a independence property, we will need to take a deeper look at the KKT conditions in Equation \eqref{eq: kkts}. 
From the definition, there holds
\begin{enumerate}
    \item $\widehat{\beta}(\lambda)$ from $\D_1=(X_1, y_1)$ and $\zeta$ from $\D_2 = (X_2, y_2)$ are independent from each other;
    \item $\gamma(\lambda)$ is determined by $\widehat{\beta}(\lambda)$ from $\D_1=(X_1, y_1)$, which is the same for $Z$ and $r:=\sign(\gamma(Z-))$ as functions of $\gamma(\lambda)$;
    \item conditional on $\widehat{\beta}(\lambda)$, $\tilde{\gamma}$ is determined by $\zeta$ from $\D_2=(X_2, y_2)$, which is the same for $\tilde{Z}$, $\tilde{r}:=\sign(\tilde{\gamma}(\tilde{Z-}))$, and $\tau(\tilde{Z})$ as functions of $\tilde{\gamma}$.
\end{enumerate}
Therefore, conditional on $\widehat{\beta}(\lambda)$ from $\D_1=(X_1, y_1)$ which determines $|W\upb|=Z$, the difference between the feature significance \eqref{eq: feature kkt} and knockoff significance \eqref{eq: knockoff kkt} lies on the random variable $\zeta$ from $\D_2=(X_2, y_2)$.
Thus the length $|W\upb|$ and sign $\{\sign(W\upb)\}$ of $W\upb$ are independent from each other. Additional calculations on $\zeta$ shows that $\zeta$ are consist of independent Gaussian random variables, 
i.e. 
for the Split Knockoff matrix satisfying \eqref{eq: copy}, the distribution 
of $\zeta+\diag(\vecs)\gamma^*$ satisfies Equation \eqref{eq: zeta dis}. 
Then the following lemma can be given for $W\upb\in\{W^{\s}, W^{\stau}\}$, in addition to $\Ws$ only in Lemma \ref{lemma: independent Ws front}.

\begin{lemma}
    \label{lemma: independent Ws}
    Given any determined $\widehat{\beta}(\lambda)$, $1\{W\upb_i<0\}$ are some independent Bernoulli random variables. Furthermore, for $i\in S_0\cap\{i: |W\upb_i|=Z_i>0\}$, there holds
    \begin{align*}
        \Prob[W\upb_i<0]\ge \frac{1}{2}.
    \end{align*}
\end{lemma}

For shorthand notations, we rearrange the index of $W\upb$, such that $|W\upb_{(1)}|\ge|W\upb_{(2)}|\ge\cdots\ge|W\upb_{(m^*)}|>0$, and $\{(1), (2), \cdots, (m^*)\}= S_0\cap\{i:|\Ws_i| = Z_i> 0\}$. Further denote $B_{(i)} = 1\{W\upb_{(i)}<0\}$,
then there holds
\begin{align}
    \frac{\sum_{i\in S_0}1\{W\upb_i\ge T_q\upb\}}{1+\sum_{i\in S_0}1\{W\upb_i\le -T_q\upb\}} & =  \frac{1+\sum_{i\in S_0}1\{|W\upb_i|\ge T_q\upb\}}{1+\sum_{i\in S_0}1\{|W\upb_i|\ge T_q\upb, W\upb_i<0\}}-1,\nonumber\\
    & = \frac{1+J}{1+B_{(1)}+B_{(2)}+\cdots+B_{(J)}}-1,\label{eq: transfera}
\end{align}
where $J\le m^*$ is defined to be the index satisfying 
\begin{align*}
    |W\upb_{(1)}|\ge|W\upb_{(2)}|\ge\cdots\ge|W\upb_{(J)}|\ge T_q\upb>|W\upb_{(J+1)}|\ge\cdots\ge|W\upb_{(m^*)}|.
\end{align*}
In other words, $J = \argmax_{k\le m^*}\{|W\upb_{(k)}|\ge T_q\upb\}$.

Define the filtration $\{\mathcal{F}_j\}_{j=1}^{m}$ and $\{\mathcal{G}_j\}_{j=1}^{m}$ in inverse time as
\begin{align*}
    \mathcal{F}_j &= \sigma\left(\left\{\sum_{i=1}^jB_{(i)}, \zeta_{(j+1)}, \cdots, \zeta_{(m)}\right\}\right),\\
    \mathcal{G}_j &= \sigma\left(\left\{\sum_{i=1}^jB_{(i)}, B_{(j+1)}, \cdots, B_{(m)}\right\}\right).
\end{align*}
Conditional on $\widehat{\beta}(\lambda)$, since $\zeta_i$ determines $\sign(W\upb_i)$ and $B_i$, the filtration $\{\mathcal{F}_j\}_{j=1}^{m}$ is a refined filtration of $\{\mathcal{G}_j\}_{j=1}^{m}$ in inverse time, i.e. $\mathcal{G}_j\subseteq \mathcal{F}_j$. 
By \cite{barber2015controlling, barber2019knockoff}, $J$ is a stopping time on $\{\mathcal{G}_j\}_{j=1}^{m}$ in inverse time, thus $J$ is also a stopping time on the refined filtration $\{\mathcal{F}_j\}_{j=1}^{m}$ in inverse time.
Then it will be proper to apply Lemma \ref{lemma: key lemma front} as a supermartingale inequality to give an upper bound on the expectation of Equation \eqref{eq: transfera}.

Applying Lemma \ref{lemma: key lemma front} to Equation \eqref{eq: transfera}, with the estimation that $\Prob[B_i=1]\ge \rho = \frac{1}{2}$ for $i\in S_0$ by Lemma \ref{lemma: independent Ws}, we will have the following inequality on $\M_{T_q\upb}(W\upb)$ that
\begin{align} 
    \Expect\left[\M_{T_q\upb}(W\upb)\right]=\Expect\left[\frac{\sum_{i\in S_0}1\{W\upb_i\ge T_q\upb\}}{1+\sum_{i\in S_0}1\{W\upb_i\le -T_q\upb\}}\right] \le \rho^{-1}-1=1.
\end{align}
Combining such results with Equation \eqref{eq: knockoff bound} and Equation \eqref{eq: knockoff+ bound}, and we will finish the proof.

\subsubsection{Proof of Theorem \ref{theorem: fdr}: Case II}

\label{sec: proof thm part 2}

As a reminder of notations, $(\A, \B)$ denotes an arbitrary element from $\{(\bc, \s), (\bct, \stau)\}$. 
In the case of $\A$, due to the failure of exchangeability, $\M_{T}(W\upa)$ is now no longer a supermartingale with stopping time $T_q\upa$. To address this challenge, we are going show that $\M_{T}(W\upb)$ gives an upper bound of $\M_{T}(W\upa)$ and is associated with a supermartingale structure at the same time. Specifically, we will show the following properties in addition to Proposition \ref{prop: s to bc}:
\begin{itemize}
    \item[(a)] $\M_{T}(W\upb)$ provides an upper bound for $\M_{T}(W\upa)$, that $\M_{T}(W\upa)\leq \M_{T}(W\upb)$;
    \item[(b)] $T_q\upa$ induces a stopping time for the inverse martingale associated with $\M_{T}(W\upb)$ which enables the application of upper bounds in Case I to Case II.
\end{itemize}

\subsubsection*{a) Upper Bound Property}

Comparing the definition of $W\upa$ and the definition of $W\upb$, we have
\begin{enumerate}
    \item for $W\upb_i>0$, $W\upa_i= Z_i = W\upb_i$;
    \item for $W\upb_i<0$, if $(\A, \B) = (\bc, \s)$, $W\upa_i= - \tilde{Z}_i\le -Z_i = W\upb_i$;
    \item for $W\upb_i<0$, if $(\A, \B) = (\bct, \stau)$, $W\upa_i= - \tau(\tilde{Z})_i\le -Z_i = W\upb_i$;
\end{enumerate}

Therefore, $\{i\in S_0: W\upa_i\ge T\} = \{i\in S_0: W\upb_i\ge T\}$, while $\{i\in S_0: W\upb_i\le -T\}\subseteq \{i\in S_0: W\upa_i\le- T\}$ for $T>0$, which further indicates
\begin{align}
    \M_{T}(W\upa)=\frac{\sum_{i\in S_0}1\{W\upa_i\ge T\}}{1+\sum_{i\in S_0}1\{W\upa_i\le -T\}}
    \le & \frac{\sum_{i\in S_0}1\{W\upb_i\ge T\}}{1+\sum_{i\in S_0}1\{W\upb_i\le -T\}}=:\M_{T}(W\upb).\label{transfer_W_S}
\end{align}
Thus $\M_{T}(W\upb)$ offers an upper bound for $\M_{T}(W\upa)$.

\subsubsection*{b) Stopping Time and Supermartingale Inequalities}

To apply the supermartingale inequality in Lemma \ref{lemma: key lemma front}, we will need to check that 
$T_q\upa$ defined by $W\upa$ can be induced to a stopping time associated with $\M_{T}(W\upb)$ with respect to a proper filtration. 

For shorthand notations, we rearrange the index on of $W\upb$, such that $|W\upb_{(1)}|\ge|W\upb_{(2)}|\ge\cdots\ge|W\upb_{(m^*)}|>0$, and $\{(1), (2), \cdots, (m^*)\}= S_0\cap\{i:|\Ws_i| = Z_i> 0\}$. Further denote $B_{(i)} = 1\{W\upb_{(i)}<0\}$, then there holds
\begin{align}
    \frac{\sum_{i\in S_0}1\{W\upb_i\ge T_q\upa\}}{1+\sum_{i\in S_0}1\{W\upb_i\le -T_q\upa\}} & =  \frac{1+\sum_{i\in S_0}1\{|W\upb_i|\ge T_q\upa\}}{1+\sum_{i\in S_0}1\{|W\upb_i|\ge T_q\upa, W\upb_i<0\}}-1,\nonumber\\
    & = \frac{1+J}{1+B_{(1)}+B_{(2)}+\cdots+B_{(J)}}-1,\label{eq: transferred bound bctos}
\end{align}
where $J\le m^*$ is defined to be the index satisfying 
\begin{align*}
    |W\upb_{(1)}|\ge|W\upb_{(2)}|\ge\cdots\ge|W\upb_{(J)}|\ge T_q\upa>|W\upb_{(J+1)}|\ge\cdots\ge|W\upb_{(m^*)}|,
\end{align*}
in other words, $J = \argmax_{k\le m^*}\{|W\upb_{(k)}|\ge T_q\upa\}$. It can be shown  $\Fs\upb(T)$ is a refined filtration of $\Fs\upa(T)$, i.e. $\Fs\upa(T)\subseteq \Fs\upa(T)$ in a similar way as discussed in Proposition \ref{prop: s to bc}. Rigorously speaking, we will have the following lemma showing that $J$ is also a stopping time in inverse time with respect to the filtration $\mathcal{F}$ associated with $W\upb$.
\begin{lemma}
    \label{lemma: stopping time}
    For any determined $\widehat{\beta}(\lambda)$, $J = \max_{i\le m^*}\{|W\upb_{(i)}|\ge T_q\upa\}$ is a stopping time with respect to the filtration
    $\{\mathcal{F}_j\}_{j=1}^m$ in inverse time defined as
    \begin{align*}
        \mathcal{F}_j = \sigma\left(\left\{\sum_{i=1}^jB_{(i)}, \zeta_{(j+1)}, \cdots, \zeta_{(m)}\right\}\right).
    \end{align*}
\end{lemma}

With this stopping time property, we can apply Lemma \ref{lemma: key lemma front} again to get the desired FDR bound. Specifically we will have the following inequality on $\M_{T_q\upa}(W\upa)$ and $\M_{T_q\upa}(W\upb)$:
\begin{align} \label{eq:alpha}
    \Expect\left[\M_{T_q\upa}(W\upa)\right]&\le\Expect\left[\M_{T_q\upa}(W\upb)\right]=\Expect\left[\frac{\sum_{i\in S_0}1\{W\upb_i\ge T_q\upa\}}{1+\sum_{i\in S_0}1\{W\upb_i\le -T_q\upa\}}\right]\le 1,
\end{align}
which ends the proof.

\subsection{Proof of Supporting Lemmas}

\label{sec: proof lemma}

In this section, we will give the proof of Lemma \ref{lemma: key lemma front}, Lemma \ref{lemma: independent Ws} (an extension of Lemma \ref{lemma: independent Ws front}), and 
Lemma \ref{lemma: stopping time} respectively. As a reminder, $(\A, \B)$ refers to any element from $\{(\bc, \s), (\bct, \stau)\}$, where $\Wbct$ is defined in Section \ref{sec: proof thm} for completeness.

In the proof of Lemma \ref{lemma: key lemma front}, we explicitly construct a decomposition in $B_i$ similar with that in proof of Lemma 1 in \cite{barber2019knockoff}, based on the property that $B_i = 1\{\zeta_i\in G_i\}$ for some Borel set $G_i$. Such a decomposition enables us the desired result.

In the proof of Lemma \ref{lemma: independent Ws}, we give detailed analysis on the KKT conditions \eqref{eq: kkts}, which provides a specific mapping from the value of $\zeta$ to $\sign(W\upb)$. Then $\Prob[W\upb<0]$ can be estimated based on the probability measure on $\zeta$.

In the proof of Lemma \ref{lemma: stopping time}, we apply the ideas of $\Fs^{\bc}(T)\subseteq \Fs^{\s}(T)$ introduced in Proposition \ref{prop: s to bc} in Section \ref{sec:mainresults} in a discrete form. We will show that such a result holds for any choice of $(\A, \B)$ in the proof.

\subsubsection{Proof of Lemma \ref{lemma: key lemma front} }

\label{sec: proof key lemma}

\begin{proof}
We start from the following constructions. For each $i$, we divide the space $\mathbb{R}$ into 4 disjoint Borel sets, $\mathbb{R} = A_1^i\cup A_2^i\cup A_3^i\cup A_4^i$, with $G_i := A_2^i\cup A_3^i\cup A_4^i$, and
\begin{enumerate}
    \item $\Prob[\zeta_i\in A_1^i] = 1-\rho_i$;
    \item $\Prob[\zeta_i\in A_2^i] = \rho\frac{1-\rho_i}{1-\rho}$;
    \item $\Prob[\zeta_i\in A_3^i] = \rho\frac{\rho_i-\rho}{1-\rho}$;
    \item $\Prob[\zeta_i\in A_4^i] = \rho_i-\rho$.
\end{enumerate}
The existence of such division is ensured by $1-\rho_i+\rho\frac{1-\rho_i}{1-\rho}+\rho\frac{\rho_i-\rho}{1-\rho}+\rho_i-\rho=1$, and $\rho\frac{1-\rho_i}{1-\rho}+\rho\frac{\rho_i-\rho}{1-\rho}+\rho_i-\rho=\rho_i$.

Define $U_i=A_1^i\cup A_2^i$, and $V_i = A_2^i\cup A_3^i$ for each $i$. Further define $Q_i = 1\{\zeta_i\in V_i\}$ for each $i$ and a random set $A:=\{i:\zeta_i\in U_i\}$. There holds from the definition that
\begin{align*}
    Q_i\cdot 1\{i\in A\}+1\{i\notin A\} = & 1\{\{\zeta_i\in V_i\cap U_i\}\cup\{\zeta_i\in U_i^C\}\},\\
    = & 1\{\{\zeta_i\in A_2^i\}\cup\{\zeta_i\in A_3^i\cup A_4^i\}\},\\
    = & 1\{\zeta_i\in G_i\} = B_i.
\end{align*}

Therefore
\begin{align}
    \frac{1+J}{1+B_{(1)}+B_{(2)}+\cdots+B_{(J)}} & = \frac{1+|\{i\le J: (i)\in A\}|+|\{i\le J: (i)\notin A\}|}{1+\sum_{i\le J, (i)\in A}Q_{(i)}+|\{i\le J:(i)\notin A\}|},\nonumber\\
    & \le \frac{1+|\{i\le J: (i)\in A\}|}{1+\sum_{i\le J, (i)\in A}Q_{(i)}},
\end{align}
where the last step is by the inequality $\frac{a+c}{b+c}\le \frac{a}{b}$ whenever $0<b\le a$ and $c\ge 0$.

Let $\tilde{Q}_i = Q_i\cdot 1\{i\in A\}$, and define
\begin{align*}
    \F_j' = \sigma\left(\left\{\sum_{i=1}^j \tilde{Q}_{(i)}, \zeta_{(j+1)}, \cdots, \zeta_{(m)}, A\right\}\right),
\end{align*}
then clearly $\{\F_j'\}_{j=1}^m$ is a filtration in inverse time, satisfying $\F_j\subseteq \F_j'$. Therefore, $J$ is also a stopping time on $\F_j'$. Moreover, note that by definition
\begin{align*}
    \Prob[Q_i=1|i\in A] = \Prob[\zeta_i\in V_i|\zeta_i\in U_i] = \frac{\Prob[A_2^i]}{\Prob[A_1^i\cup A_2^i]}=\rho=\Prob[Q_i=1],\\
    \Prob[Q_i=1|i\notin A] = \Prob[\zeta_i\in V_i|\zeta_i\notin U_i] = \frac{\Prob[A_3^i]}{\Prob[A_3^i\cup A_4^i]}=\rho=\Prob[Q_i=1].
\end{align*}
Such an observation combining with the the independence on $\zeta_i$ means that, conditional on any fixed $A$, $Q_i$ are i.i.d. Bernoulli random variables with parameter $\rho$. This further suggests that $Q_i$ as i.i.d. Bernoulli random variables with parameter $\rho$, and are independent from $A$. This means that $Q_i$ are exchangeable \citep{barber2019knockoff} conditional on any fixed $A$, then by Lemma 2 in the supplementary material of \cite{barber2019knockoff}, there holds
\begin{align*}
    \Expect\left[\left.\frac{1+|\{i\le J: (i)\in A\}|}{1+\sum_{i\le J, (i)\in A}Q_{(i)}}\right|A\right]\le \rho^{-1}.
\end{align*}
Taking expectation over $A$, and we will get our desired result.
\end{proof}

\subsubsection{Proof of Lemma \ref{lemma: independent Ws}}

\label{sec: proof independence}

Note that by Equation \eqref{eq: copy}, there holds
\begin{align}
    &\tilde{A}_\gamma^T A_\beta = \tilde{A}^T_{\gamma,1}\frac{X_2}{\sqrt{n_2}}+\tilde{A}^T_{\gamma,2}\frac{D}{\sqrt{\nu}}= A_\gamma^TA_\beta = -\frac{D}{\nu},\nonumber\\
    &\tilde{A}_\gamma^TA_\gamma = -\frac{\tilde{A}^T_{\gamma,2}}{\sqrt{\nu}}= A_\gamma^TA_\gamma - \diag(\vecs)= \frac{I_m}{\nu}-\diag(\vecs),\nonumber
\end{align}
where as a reminder, $\tilde{A}_{\gamma,1}\in\R^{n_2\times m}$ and $\tilde{A}_{\gamma,2}\in\R^{m\times m}$ is defined in the end of Section \ref{sec: construction}. Therefore, there holds
\begin{align}
    \tilde{A}_\gamma^T\tilde{y} & = \tilde{A}^T_{\gamma,1}\frac{X_2\beta^*+\varepsilon_2}{\sqrt{n_2}}=-\diag(\vecs)\gamma^*+\frac{\tilde{A}^T_{\gamma,1}}{\sqrt{n_2}} \varepsilon_2, \label{eq: calculate agammay}
\end{align}
where $\varepsilon_2$ is the last $n_2$ rows of $\varepsilon$ as defined in Section \ref{sec:s analysis}.

From Equation \eqref{eq: feature kkt}, $\gamma(\lambda)$ is determined by $\widehat{\beta}(\lambda)$, suggesting that $\widehat{\beta}(\lambda)$ is the sufficient statistics for $Z$ and $r$. From Equation \eqref{eq: knockoff kkt}, $\widehat{\beta}(\lambda)$ and $\zeta_i$ are the sufficient statistics for $\tilde{\gamma}_i(\lambda)$ and consequently $\tilde{Z}_i$ and $\tilde{r}_i$ for all $i$. Therefore, $\widehat{\beta}(\lambda)$ and $\zeta_i$ are the sufficient statistics for $W\upb_i$ for all $i$. 

Now, we further present by the following lemma that $B_i:=1\{W\upb_i<0\}$ are some independent random variables through the independence among $\zeta_i$.

\begin{lemma}
    For any determined $\widehat{\beta}(\lambda)$, let $B_i:=1\{W\upb_i<0\}$, then $B_i$ are some independent random variables.
\end{lemma}

\begin{proof}
Note that by Equation \eqref{eq: copy}, there holds
\begin{align}
    &\tilde{A}_\gamma^T\tilde{A}_\gamma = \tilde{A}^T_{\gamma,1}\tilde{A}_{\gamma,1}+\tilde{A}^T_{\gamma,2}\tilde{A}_{\gamma,2} =  A_\gamma^TA_\gamma = \frac{I_m}{\nu},\nonumber\\
    &\tilde{A}_\gamma^TA_\gamma = -\frac{\tilde{A}^T_{\gamma,2}}{\sqrt{\nu}}= A_\gamma^TA_\gamma - \diag(\vecs)= \frac{I_m}{\nu}-\diag(\vecs),\nonumber
\end{align}
where $\tilde{A}_{\gamma,2}\in\R^{m\times m}$ is defined to be the matrix that takes last $m$ rows of $\tilde{A}_{\gamma}$. Therefore, there holds
\begin{align}
    \tilde{A}^T_{\gamma,1}\tilde{A}_{\gamma,1}=\diag(\vecs)(2I_m-\diag(\vecs)\nu), \label{eq: ortho of agamma1}
\end{align}
and 
$\zeta+\diag(\vecs)\gamma^*$ follows the distribution presented in Equation \eqref{eq: zeta dis}. 
Therefore, $\zeta$ consists of some \emph{independent} random variables. Combining with the fact that $\widehat{\beta}(\lambda)$ and $\zeta_i$ are the sufficient statistics for $W\upb_i$ for all $i$, we will get our desired result.
\end{proof}

We show by the following proposition that $Z_i < +\infty$ for all $i$, and the event $\{\exists i: Z_i = \tilde{Z}_i>0\}$ is a zero probability event, conditional on any determined $\widehat{\beta}(\lambda)$ satisfying $\lim_{\lambda\to\infty} \frac{\widehat{\beta}(\lambda)}{\lambda} = 0$. To avoid redundant notations, we assume that  $Z_i\neq \tilde{Z}_i$ for all $i$ satisfying $Z_i>0$ throughout this paper.

\begin{proposition}
\label{prop: zero probability event}
    For any determined $\widehat{\beta}(\lambda)$, there holds for any $i$ that
    \begin{itemize}
        \item $Z_i < +\infty$;
        \item $\Prob[Z_i = \tilde{Z}_i>0]= 0$.
    \end{itemize}
\end{proposition}

\begin{proof}
    From Equation \eqref{eq: kkts}, there holds for all $\lambda>0$ and $i\in\{1, 2, \cdots, m\}$ that
    \begin{align*}
        \rho_i(\lambda) + \frac{\gamma_i(\lambda)}{\lambda\nu}&= \frac{[D\widehat{\beta}(\lambda)]_i}{\lambda\nu},\\
        \tilde{\rho}_i(\lambda) + \frac{\tilde{\gamma}_i(\lambda)}{\lambda\nu} &= \frac{[D\widehat{\beta}(\lambda)]_i}{\lambda\nu} +\frac{\zeta_i}{\lambda}.
    \end{align*}
    Therefore, there holds for $Z_i>0$ and $\tilde{Z}_i>0$ that
    \begin{align}
        Z_i = \sup\left\{\lambda: \left|\frac{[D\widehat{\beta}(\lambda)]_i}{\lambda\nu}\right|>1\right\}, \mbox{ and }
        \tilde{Z}_i = \sup\left\{\lambda: \left|\frac{[D\widehat{\beta}(\lambda)]_i}{\lambda\nu}+\frac{\zeta_i}{\lambda}\right|>1\right\}.\label{eq: Z by beta(lambda)}
    \end{align}
    By the condition $\lim_{\lambda\to\infty} \frac{\widehat{\beta}(\lambda)}{\lambda} = 0$ in Section \ref{sec: intercept est}, there holds
    \begin{align*}
        \lim_{\lambda\to\infty} \left|\frac{[D\widehat{\beta}(\lambda)]_i}{\lambda\nu}\right| = 0. 
    \end{align*}
    Therefore, by Equation \eqref{eq: Z by beta(lambda)}, there holds $Z_i< +\infty$.
    
    Moreover, by the continuity of $\widehat{\beta}(\lambda)$, by Equation \eqref{eq: Z by beta(lambda)}, there holds for $Z_i>0$ and $\tilde{Z_i}>0$ that
    \begin{align*}
        \left|\frac{[D\widehat{\beta}(Z_i)]_i}{Z_i\nu}\right| = \lim_{\lambda\to Z_i^+}\left|\frac{[D\widehat{\beta}(\lambda)]_i}{\lambda\nu}\right| =1, \mbox{ and }
        \left|\frac{[D\widehat{\beta}(\tilde{Z}_i)]_i}{\tilde{Z}_i\nu}+\frac{\zeta_i}{\tilde{Z}_i}\right| = \lim_{\lambda\to \tilde{Z}_i^+}\left|\frac{[D\widehat{\beta}(\lambda)]_i}{\lambda\nu}+\frac{\zeta_i}{\lambda}\right| =1.
    \end{align*}
    Therefore, if there further holds $Z_i = \tilde{Z}_i>0$, there holds
    \begin{align*}
        \left|\frac{[D\widehat{\beta}(Z_i)]_i}{Z_i\nu}\right| = \left|\frac{[D\widehat{\beta}(Z_i)]_i}{Z_i\nu}+\frac{\zeta_i}{Z_i}\right|,
    \end{align*}
    which only have finite solutions of $\zeta_i$ for any $\widehat{\beta}(\lambda)$ (which determines $Z_i$). Therefore $\Prob[Z_i = \tilde{Z}_i>0] = 0$.
\end{proof}

We describe by the following lemma on how $\zeta$ determines $\sign(W\upb)$ conditional on a determined $\widehat{\beta}(\lambda)$. 

\begin{lemma}
    \label{lemma: new interval}
    For any determined $\widehat{\beta}(\lambda)$, there holds for $i$ satisfying $Z_i>0$,
    \begin{align*}
        \{r_i\zeta_i>0\}\subseteq\{\Ws_i<0\},\ \{r_i\zeta_i>0\}\subseteq \{\Wst_i<0\},
    \end{align*}
    where $r_i$ is defined in Equation \eqref{def: equivalent def r} and is shown to be equal to $\sign(\rho_i(Z_i))$ for $Z_i>0$ in Proposition \ref{prop: well definedness}.
\end{lemma}

\begin{proof}
    We prove the two properties one by one.

    1) $\{r_i\zeta_i>0\}\subseteq\{\Ws_i<0\}$. Suppose that $r_i\zeta_i>0$. We are going to show below that $\tilde{Z}_i>Z_i$ and $\Ws_i<0$.
    
    By the continuity of $\widehat{\beta}(\lambda)$, $\gamma(\lambda)$ and $\rho(\lambda)$ solved from Equation \eqref{eq: feature kkt} are continuous. Therefore, by the definition of $Z_i$ in Equation \eqref{def: Z}, for $i$ satisfying $Z_i>0$, there holds $\gamma_i(Z_i) = \lim_{\lambda\to Z_i^+}\gamma_i(\lambda) = 0$ and $|\rho_i(Z_i)|=\limsup_{\lambda\to Z_i^-}|\rho_i(\lambda)|=1$. From Equation \eqref{eq: feature kkt}, there holds
    \begin{align*}
        Z_i\rho_i(Z_i) = Z_i\rho_i(Z_i)+\frac{\gamma_i(Z_i)}{\nu} = \frac{[D\widehat{\beta}(Z_i)]_i}{\nu}.
    \end{align*}
    Then from Equation \eqref{eq: knockoff kkt}, there holds
    \begin{align*}
        Z_i\tilde{\rho}_i(Z_i)+\frac{\tilde{\gamma}_i(Z_i)}{\nu} = \frac{[D\widehat{\beta}(Z_i)]_i}{\nu}+\zeta_i = Z_i\rho_i(Z_i) +\zeta_i.
    \end{align*}
    Multiple $r_i = \sign(\rho_i(Z_i))$ on both sides and combining with the fact that $|\rho_i(Z_i)|=1$, there holds
    \begin{align*}
        r_iZ_i\tilde{\rho}_i(Z_i)+\frac{r_i\tilde{\gamma}_i(Z_i)}{\nu} = Z_i + r_i\zeta_i.
    \end{align*}
    Therefore $r_i\tilde{\gamma}_i(Z_i) = \nu r_i\zeta_i>0$. By the continuity of $\widehat{\beta}(\lambda)$, $\tilde\gamma(\lambda)$ and $\tilde\rho(\lambda)$ solved from Equation \eqref{eq: knockoff kkt} are continuous. Therefore, there exists $\delta>0$, such that $r_i\tilde{\gamma}_i(Z_i+\delta) > \frac{\nu r_i\zeta_i}{2}>0$. This further suggests that $\tilde{Z}_i>Z_i+\delta>Z_i$ and $\Ws_i<0$.

    2) $\{r_i\zeta_i>0\}\subseteq\{\Wst_i<0\}$. Suppose that $r_i\zeta_i>0$. It is already shown above that $\tilde{Z}_i>Z_i>0$, and it remains to show that $r_i = \tilde{r}_i = \sign(\tilde\rho_i(\tilde{Z}_i))$ for showing $\Wst_i<0$.

    By Equation \eqref{eq: feature kkt} and the definition of $Z_i$ in Equation \eqref{def: Z}, there holds for $\lambda> Z_i$ that
    \begin{align*}
        \lambda\rho_i(\lambda) = \lambda\rho_i(\lambda)+\frac{\gamma_i(\lambda)}{\nu} = \frac{[D\widehat{\beta}(\lambda)]_i}{\nu}.
    \end{align*}
    Then from $r_i\times$Equation \eqref{eq: knockoff kkt}, there holds for $\lambda> Z_i$ and $r_i\zeta_i>0$ that
    \begin{align*}
        r_i\lambda\tilde{\rho}_i(\lambda)+r_i\frac{\tilde{\gamma}_i(\lambda)}{\nu} = \frac{r_i[D\widehat{\beta}(\lambda)]_i}{\nu}+r_i\zeta_i = r_i\lambda\rho_i(\lambda) +r_i\zeta_i>r_i\lambda\rho_i(\lambda)\ge -\lambda.
    \end{align*}
    Therefore 
    there holds $r_i\tilde{\rho}_i(\lambda)>-1$ for $\lambda> Z_i$ and consequetly $r_i\tilde{\rho}_i(\tilde{Z}_i)>-1$ as $\tilde{Z}_i>Z_i$. 

    However, since $\tilde{Z}_i>Z_i>0$, there holds $|\tilde{\rho}_i(\tilde{Z}_i)| = \limsup_{\lambda\to \tilde Z_i^-}|\tilde\rho_i(\lambda)|=1$. Combining with $r_i\tilde{\rho}_i(\tilde{Z}_i)>-1$, there holds $r_i\tilde{\rho}_i(\tilde{Z}_i) = 1$, 
    and consequently $\tilde{r}_i = r_i$ for $\tilde{r}_i= \sign(\tilde\rho_i(\tilde{Z}_i))$. Therefore, $\Wst_i<0$.
\end{proof}

Then it remains to show that for $i\in S_0\cap\{i: |W\upb_i|=Z_i>0\}$, there holds $\Prob[B_i = 1]\ge \rho(\nu)\ge \frac{1}{2}$ to prove  Lemma \ref{lemma: independent Ws}. 

\begin{proof}
    First, from Lemma \ref{lemma: new interval}, conditional on any $\widehat{\beta}(\lambda)$, there holds $\{r_i \zeta_i>0\}\subseteq\{W\upb_i<0\}$ for $i\in S_0\cap\{i: |W\upb_i|=Z_i>0\}$. While for $i\in S_0$, $\zeta_i$ is symmetrically distributed around 0 by Equation \eqref{eq: zeta dis}. 
    Thus $\Prob[W\upb_i<0]\ge \Prob[r_i \zeta_i>0]=\frac{1}{2}$ conditional on $\widehat{\beta}(\lambda)$, for $i\in S_0\cap\{i: |W\upb_i|=Z_i>0\}$.
    This ends the proof.
\end{proof}

\subsubsection{Proof of Lemma \ref{lemma: stopping time}}

\label{sec: proof stopping time}

\begin{proof}
As a reminder, $(\A, \B)$ represents an arbitrary element from $\{(\bc, \s), (\bct, \stau)\}$, where $\Wbct$ is defined in Section \ref{sec: proof thm} for completeness.
    By definition of stopping time in reverse time, we need to show that $\{J < k\}\in \Fs_k$ for $k\le m^*$.
    We first validate that $\{\mathcal{F}_j\}_{j=1}^m$ is indeed a filtration in inverse time. By Equation \eqref{eq: kkts}, for any determined $\widehat{\beta}(\lambda)$, $\zeta_{(i)}$ will determine $W\upa_{(i)}$ as well as $W\upb_{(i)}$ for all $i$, thus determine $B_{(i)}$. Such a fact validates our claim that $\{\mathcal{F}_j\}_{j=1}^m$ is indeed a filtration. Furthermore, by the fact that $W\upa$ and $W\upb$ share the same sign, there holds: 
    \begin{itemize}
        \item $\Fs_{k}$ includes $\{W\upa_{(i)}: i>k\}$, $V^-_s(k):=\#\{W\upa_{(i)}<0, i\le k\}$, and $V^+_s(k):=\#\{W\upa_{(i)}>0, i\le k\}=k-V^-_s(k)$.
        \item the event $\{J < k\}$ is defined by, for all $l$ such that $|W\upa_{(l)}|\le|W\upb_{(k)}|$:
    \begin{align*}
        &\frac{|\{i:W_{(i)}\upa\le-|W\upa_{(l)}|\}|}{1\vee|\{i:W_{(i)}\upa\ge |W\upa_{(l)}|\}|}> q,\mbox{ for Split Knockoff};
        &\frac{1+|\{i:W_{(i)}\upa\le-|W\upa_{(l)}|\}|}{1\vee|\{i:W_{(i)}\upa\ge |W\upa_{(l)}|\}|}> q,\mbox{ for Split Knockoff+},
    \end{align*}
    i.e. is decided by $\#\{i:W\upa_{(i)}\le-|W\upa_{(l)}|\}$ and $\#\{i:W\upa_{(i)}\ge |W\upa_{(l)}|\}$. 
    \end{itemize}
    Hence it suffices to show that, for all $l$ such that $|W\upa_{(l)}|\le |W\upb_{(k)}|$, $\Fs_{k}$ includes $$\#\{i:W\upa_{(i)}\le-|W\upa_{(l)}|\} \mbox{    and     } \#\{i:W\upa_{(i)}\ge |W\upa_{(l)}|\}. $$ 
    To see this, the first set is decomposed by
    \begin{align*}
        &\#\{W\upa_{(i)}\le-|W\upa_{(l)}|\}\\
        = & \#\{W\upa_{(i)}\le-|W\upa_{(l)}|, i\le k\} + \#\{W\upa_{(i)}\le-|W\upa_{(l)}|, i>k\},\\
        = & \#\{W\upa_{(i)}<0, i\le k\}+ \#\{W\upa_{(i)}\le-|W\upa_{(l)}|, i>k\}.
    \end{align*}
    The last step is by for $i\le k$, $|W\upa_{(i)}|\ge |W\upb_{(i)}|\ge|W\upb_{(k)}|\ge |W\upa_{(l)}|$. Here the first part is equal to $V^-_s(k):=\#\{W\upa_{(i)}<0, i\le k\}$, the second part is determined by $\{W\upa_{(i)}, i>k\}$, 
    both of which are in the $\sigma$-field of $\Fs_{k}$. This shows that $\Fs_{k}$ includes $\#\{i:W\upa_{(i)}\le-|W\upa_{(l)}|\}$.
   Similarly, the second set is decomposed by
    \begin{align*}
        &\ \ \ \ \#\{i:W\upa_{(i)}\ge |W\upa_{(l)}|\}\\
        &= \#\{i:W\upa_{(i)}\ge |W\upa_{(l)}|, i\le k\}+ \#\{i:W\upa_{(i)}\ge |W\upa_{(l)}|, i>k\},\\
        &= \#\{i:W\upa_{(i)}>0, i\le k\}+ \#\{i:W\upa_{(i)}\ge |W\upa_{(l)}|, i>k\},
    \end{align*}
    where last step is by for $i\le k$, $|W\upa_{(i)}|\ge |W\upb_{(i)}|\ge|W\upb_{(k)}|\ge |W\upa_{(l)}|$. The first part is equal to $\Vs^+(k)=\#\{W\upa_{(i)}>0, i\le k\}$ and the second part is determined by $\{W\upa_{(i)}: i>k\}$, both of which are in the $\sigma$-field of $\Fs_{k}$. Therefore $\Fs_{k}$ includes $\#\{i:W\upa_{(i)}\ge |W\upa_{(l)}|\}$. This finishes the proof.
\end{proof}

\subsection{Proof of Theorem \ref{thm: fdr hd}}
\label{sec: proof hd thm}

\begin{proof}

    When the event $\Upsilon$ occurs, by definition, $\tilde{y}$, $A_\beta$, $A_\gamma$ and $\tilde{\varepsilon}$ defined in Equation \eqref{eq: features hd} satisfy
    \begin{equation}
        \tilde{y}=A_\beta\beta^*_{\hat S_\beta}+A_\gamma\gamma^*_{\hat S_\gamma}+\tilde\varepsilon,
    \end{equation}
    where $\beta^*_{\hat S_\beta}$ is a subvector of $\beta^*$, consisting of the rows in $\hat S_\beta$, and $\gamma^*_{\hat S_\gamma}$ is a subvector of $\gamma^*$, consisting of the rows in $\hat S_\gamma$. 

    From Equation \eqref{eq:gamma} and Equation \eqref{eq:t_gamma}, in the high dimensional setting, the KKT conditions that the solution $\gamma(\lambda),\ \tilde{\gamma}(\lambda)$ --- both are $\R_+\to\R^{|\hat S_\gamma|}$ functions --- should satisfy are
\begin{subequations}
    \label{eq: kkts hd}
    \begin{align}
        \lambda\rho(\lambda) + \frac{\gamma(\lambda)}{\nu}&= \frac{D\widehat{\beta}(\lambda)}{\nu},\label{eq: feature kkt hd}\\
        \lambda\tilde{\rho}(\lambda) + \frac{\tilde{\gamma}(\lambda)}{\nu} &= \frac{D\widehat{\beta}(\lambda)}{\nu} +\underbrace{\left\{  - \diag(\vecs)\gamma^*_{\hat S_\gamma} + \frac{\tilde{A}^T_{\gamma,1}}{\sqrt{n_2}} \varepsilon_2 \right\}}_{=:\zeta},\label{eq: knockoff kkt hd}
    \end{align}
\end{subequations}
where $\rho(\lambda) \in \partial \|\gamma(\lambda)\|_1$, $\tilde{\rho}(\lambda) \in \partial \|\tilde{\gamma}(\lambda)\|_1$, and $\tilde{A}_{\gamma,1}$ is defined to be the matrix that takes first $n_2$ rows of $\tilde{A}_{\gamma}$. 
For shorthand notations, define $\zeta : = - \diag(\vecs)\gamma^*_{\hat S_\gamma} + \frac{\tilde{A}^T_{\gamma,1}}{\sqrt{n_2}} \varepsilon_2$. Follow the same steps as those in achieving Equation \eqref{eq: ortho of agamma1}, it can be shown that the $\zeta$ should satisfy
\begin{align}
    \zeta\sim \mathcal{N}\left(-\diag(\vecs)\gamma^*_{\hat S_\gamma}, \frac{1}{n_2}\diag(\vecs)(2I_{|\hat S_\gamma|}-\diag(\vecs)\nu)\sigma^2\right).
\end{align}
Such a fact suggests that $\zeta+\diag(\vecs)\gamma^*_{\hat S_\gamma}$ consists of independent Gaussian random variables. With this property, follow the same steps as in Section \ref{sec: proof thm} (proof of Theorem \ref{theorem: fdr}), there holds
for Split Knockoff
    \begin{equation}
           \E\left[\left.\frac{\left|\left\{i:i\in \hat{S}\upc\cap S_0\right\}\right|}{\left|\hat{S}\upc\right|+q^{-1}}\right|\Upsilon\right]\le q,\nonumber
    \end{equation}
and for Split Knockoff+
    \begin{equation}
           \E\left[\left.\frac{\left|\left\{i:i\in \hat{S}\upc\cap S_0\right\}\right|}{\left|\hat{S}\upc\right|\vee 1}\right|\Upsilon\right]\le q.\nonumber
    \end{equation}
    This ends the proof.
\end{proof}

\subsection{Proof of Proposition \ref{thm: sign consistency front}}
\label{sec:proof_path_consistency}

In this section, we approach the model selection consistency by constructing Primal-Dual Witness (PDW) of Split LASSO regularization paths, following the same treatment in the traditional LASSO problem \citep{Wainwright09} and Split Linearized Bregman Iterations \citep{SplitLBI,huang2020boosting}. We first introduce the concept of the successful Primal-Dual Witness which has a unique solution of Split LASSO; then we introduce where the incoherence condition for Split LASSO comes from and establish the no-false-positive and sign consistency of Split LASSO regularization path, i.e. Proposition \ref{thm: sign consistency front}.

We first list the KKT conditions that an optimal solution of the Split LASSO problem \eqref{eq:split_lasso} satisfies, as they will be commonly used throughout this section. The KKT conditions are 
\begin{subequations}\label{eq:slasso-kkt}
    \begin{align}
      0 &= - (\Sigma_X + L_D) \beta(\lambda) + \frac{D^T}{\nu} \gamma(\lambda) + \left\{\Sigma_X \beta^* + \frac{X^T}{n} \varepsilon \right\}, \label{eq:slasso-kkta}  \\
      \lambda\rho(\lambda)&= \frac{D\beta(\lambda)}{\nu} -  \frac{\gamma(\lambda)}{\nu}, \label{eq:slasso-kktb}
    \end{align}
    \end{subequations}
    where $\rho(\lambda)\in \partial\|\gamma(\lambda)\|_1$.

\subsubsection{Primal-Dual Witness}

 In this section, we will introduce the lemma that ensures the uniqueness of the successful Primal-Dual Witness for Split LASSO problem. In the beginning, we will give the definition of the PDW. The set of witness $(\widehat{\beta}^\lambda, \hat{\gamma}^\lambda, \hat{\rho}^\lambda)\in \R^p \times \R^m \times \R^m$ is constructed in the following way:

\begin{enumerate}
    \item First, we set $\hat\gamma_{S_0}^\lambda=0$, and obtain $(\hat\beta^\lambda, \hat\gamma_{\S_1}^\lambda)\in \R^p \times \R^{|\S_1|}$ by solving
    \begin{align}
        (\hat\beta^\lambda, \hat\gamma_{\S_1}^\lambda) = \argmin_{(\beta, \gamma_{\S_1})}\left\{\frac{1}{2n}\|y - X\beta\|_2^2 + \frac{1}{2\nu}\|D\beta - \gamma\|_2^2 + \lambda \|\gamma_{\S_1}\|_1\right\}.\label{subproblem}
    \end{align}
    \item Second, we choose $\hat\rho_{\S_1}^\lambda=\partial \|\hat\gamma_{S_1}^\lambda\|_1$ as the subgradient of $\|\hat\gamma_{S_1}^\lambda\|_1$.
    \item Third, for no-false-positive, we solve for $\hat\rho_{\S_0}^\lambda \in \R^{\S_0}$ satisfying the KKT condition \eqref{eq:slasso-kkt}, and check whether or not the dual feasibility condition $|\hat\rho_j^\lambda|< 1$ for all $j\in \S_0$ is satisfied. 
    \item Fourth, for model selection (sign) consistency, we check whether $\hat\rho_{\S_1}^\lambda = \sign(\beta^*_{\S_1})$ is satisfied.
\end{enumerate}

Then we can give the following lemma.

\begin{lemma}
    When the PDW succeed, if the subproblem \eqref{subproblem} is strictly convex, the solution $(\hat\beta, \hat\gamma)$ is the unique optimal solution for split LASSO.
\end{lemma}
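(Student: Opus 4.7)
The plan is to run the standard convex-analytic uniqueness argument for LASSO-type problems, adapted to the two-block Split LASSO structure. Let $(\hat\beta,\hat\gamma)$ be the PDW solution, with subgradient $\hat\rho\in\partial\|\hat\gamma\|_1$ satisfying $\hat\gamma_{S_0}=0$ and strict dual feasibility $|\hat\rho_j|<1$ for $j\in S_0$, and let $(\tilde\beta,\tilde\gamma)$ be any other optimal solution of the Split LASSO \eqref{eq:split_lasso}. Writing $\Delta\beta=\tilde\beta-\hat\beta$, $\Delta\gamma=\tilde\gamma-\hat\gamma$ and expanding the two quadratic terms exactly via $\tfrac12 q(u+\Delta)-\tfrac12 q(u)=\langle\nabla q(u),\Delta\rangle+\tfrac12\|L\Delta\|^2$, together with the convexity inequality $\|\tilde\gamma\|_1\ge\|\hat\gamma\|_1+\langle\hat\rho,\Delta\gamma\rangle$, I would obtain
\begin{equation*}
F(\tilde\beta,\tilde\gamma)-F(\hat\beta,\hat\gamma)\;\ge\;\bigl\langle g_\beta(\hat\beta,\hat\gamma),\Delta\beta\bigr\rangle+\bigl\langle g_\gamma(\hat\beta,\hat\gamma)+\lambda\hat\rho,\Delta\gamma\bigr\rangle+\tfrac{1}{2n}\|X\Delta\beta\|_2^2+\tfrac{1}{2\nu}\|D\Delta\beta-\Delta\gamma\|_2^2,
\end{equation*}
where $g_\beta,g_\gamma$ are the partial gradients of the smooth part. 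The KKT identities \eqref{eq:slasso-kkta}–\eqref{eq:slasso-kktb} at $(\hat\beta,\hat\gamma)$ kill the two inner products, so optimality of $(\tilde\beta,\tilde\gamma)$ forces every remaining nonnegative term to vanish.

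Three consequences follow. First, $\|X\Delta\beta\|_2=0$ and $\|D\Delta\beta-\Delta\gamma\|_2=0$, i.e.\ $X\tilde\beta=X\hat\beta$ and $D\tilde\beta-\tilde\gamma=D\hat\beta-\hat\gamma$. Second, the $\ell_1$ defect vanishes:
\begin{equation*}
\|\tilde\gamma\|_1-\|\hat\gamma\|_1-\langle\hat\rho,\Delta\gamma\rangle=\sum_{i}\bigl(|\tilde\gamma_i|-\hat\rho_i\tilde\gamma_i\bigr)=0,
\end{equation*}
using $|\hat\gamma_i|=\hat\rho_i\hat\gamma_i$ since $\hat\rho\in\partial\|\hat\gamma\|_1$. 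Each summand is nonnegative because $|\hat\rho_i|\le 1$, so each must equal zero; for $i\in S_0$ the strict inequality $|\hat\rho_i|<1$ then forces $\tilde\gamma_i=0$. Hence $\tilde\gamma_{S_0}=\hat\gamma_{S_0}=0$, and the equality $D\Delta\beta=\Delta\gamma$ means $(\tilde\beta,\tilde\gamma_{S_1})$ is feasible for the restricted problem \eqref{subproblem} with the same $S_0$-zero constraint as $(\hat\beta,\hat\gamma_{S_1})$.

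To finish, I would invoke the strict convexity hypothesis on the subproblem \eqref{subproblem}. Because $F(\tilde\beta,\tilde\gamma)=F(\hat\beta,\hat\gamma)$ and both pairs are feasible for the restricted problem, both achieve its minimum; strict convexity then yields $(\tilde\beta,\tilde\gamma_{S_1})=(\hat\beta,\hat\gamma_{S_1})$, and combined with $\tilde\gamma_{S_0}=\hat\gamma_{S_0}=0$ this gives $(\tilde\beta,\tilde\gamma)=(\hat\beta,\hat\gamma)$, completing the uniqueness proof. The main obstacle in this argument is purely bookkeeping: making sure that the KKT equations \eqref{eq:slasso-kkt} cancel exactly with the linear terms produced by the two quadratic expansions, and that the residual nonnegative pieces correctly separate into an $\ell_2$-on-design part and an $\ell_1$-subgradient gap that can be exploited via strict dual feasibility. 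No deeper difficulty arises because the $\ell_1$ block involves only $\gamma$, so the standard LASSO uniqueness template transfers directly to the joint $(\beta,\gamma)$ setting.
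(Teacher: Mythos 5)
Your proposal is correct and follows essentially the same route as the paper: use first-order optimality (KKT) plus convexity of the smooth part to force $\|\tilde\gamma\|_1 \le \langle\hat\rho,\tilde\gamma\rangle$, conclude $\tilde\gamma_{S_0}=0$ from strict dual feasibility, and then invoke strict convexity of the restricted subproblem \eqref{subproblem}. The only (harmless) difference is that you carry the exact quadratic expansion and so also extract $X\Delta\beta=0$ and $D\Delta\beta=\Delta\gamma$, which the paper's argument does not need.
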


\begin{proof}
    When PDW succeed, we have $\|\hat\rho_{\S_0}\|_\infty<1$, and therefore $(\hat\beta, \hat\gamma)$ is a set of optimal solution, while $\hat\rho$ is in the subgradient of $\|\hat\gamma\|_1$. Let $(\tilde{\beta}, \tilde{\gamma})$ be any other optimal solution for Split LASSO. Denote
    \begin{align*}
        F(\beta, \gamma) = \frac{1}{2n}\|y - X\beta\|_2^2 + \frac{1}{2\nu}\|D\beta - \gamma\|_2^2.
    \end{align*}
    Then there holds
    \begin{align}
        F(\hat\beta, \hat\gamma) + \lambda\langle\hat\rho, \hat\gamma\rangle = F(\tilde\beta, \tilde\gamma)+\lambda\|\tilde\gamma\|_1,\nonumber
    \end{align}
    which is 
    \begin{align}
        F(\hat\beta, \hat\gamma) - \lambda\langle\hat\rho, \tilde\gamma - \hat\gamma\rangle - F(\tilde\beta, \tilde\gamma) = \lambda(\|\tilde\gamma\|_1-\langle\hat\rho, \tilde\gamma \rangle).\nonumber
    \end{align}
    Also, by Equation \eqref{eq:slasso-kkt}, there holds $\frac{\partial F(\hat\beta, \hat\gamma)}{\partial \hat\beta} = 0$, and $\frac{\partial F(\hat\beta, \hat\gamma)}{\partial \hat\gamma} = -\lambda \hat\rho$. Therefore
    \begin{align}
        F(\hat\beta, \hat\gamma) + \langle\frac{\partial F(\hat\beta, \hat\gamma)}{\partial \hat\gamma}, \tilde\gamma - \hat\gamma\rangle + \langle\frac{\partial F(\hat\beta, \hat\gamma)}{\partial \hat\beta}, \tilde\beta-\hat\beta\rangle - F(\tilde\beta, \tilde\gamma) = \lambda(\|\tilde\gamma\|_1-\langle\hat\rho, \tilde\gamma \rangle).\label{lhs<0}
    \end{align}
    Since $F$ is convex, the left hand side of Equation \eqref{lhs<0} is non-positive, then there holds
    \begin{align}
        \|\tilde\gamma\|_1 \le \langle\hat\rho, \tilde\gamma \rangle.\nonumber
    \end{align}
    Since $|\hat\rho_{S_0}|<1$, there holds $\tilde\gamma_{S_0}=0$. Therefore $(\tilde\beta, \tilde\gamma)$ is also an optimal solution for the subproblem \eqref{subproblem}. Thus if the subproblem \eqref{subproblem} is strictly convex, then $(\hat\beta, \hat\gamma)$ is the only solution for split LASSO.
\end{proof}

\subsubsection{Incoherence Condition and Path Consistency}

In this section, We first show on how the incoherence condition in Proposition \ref{thm: sign consistency front} is formalized and then we give the proof of Proposition \ref{thm: sign consistency front}.

 From $D[\Sigma_X+L_D]^{-1}\times$ Equation \eqref{eq:slasso-kkta} + $\nu\times$ Equation \eqref{eq:slasso-kktb}, and the fact that $\gamma^* = D\beta^*$, there holds for the solution $(\hat\beta, \hat\gamma)$ to Equation \eqref{eq:split_lasso} that
\begin{align*}
    \lambda\nu \hat\rho = & -\hat\gamma+\frac{D[\Sigma_X+L_D]^{-1}D^T \hat\gamma}{\nu} + D[\Sigma_X+L_D]^{-1}(\Sigma_X+L_D-L_D)\beta^*+\ldots\nonumber\\
    &\ldots+D[\Sigma_X+L_D]^{-1}\frac{X^T}{n}\varepsilon,\\
    = & -H_\nu (\hat\gamma - \gamma^*)+\omega,
\end{align*}
where $\omega = D[\Sigma_X+L_D]^{-1}\frac{X^T}{n}\varepsilon$. From the definition of $H_\nu^{11}$, $H_\nu^{00}$, $H_\nu^{10}$, and $H_\nu^{01}$, there further holds
\begin{align}
    \lambda \nu
    \begin{bmatrix}
        \hat\rho_{\S_1}\\
        \hat\rho_{\S_0}
    \end{bmatrix}
    =-
    \begin{bmatrix}
        H_\nu^{11} & H_\nu^{10}\\
        H_\nu^{01} & H_\nu^{00}
    \end{bmatrix}
    \begin{bmatrix}
        \hat\gamma_{\S_1} - \gamma^*_{\S_1}\\
        0_{\S_0}
    \end{bmatrix}
    +
    \begin{bmatrix}
        \omega_{\S_1}\\
        \omega_{\S_0}
    \end{bmatrix},\nonumber
\end{align}
which means
\begin{subequations}
    \begin{align}
        \lambda\nu \hat\rho_{\S_1} = & - H_\nu^{11}(\hat\gamma_{\S_1} - \gamma^*_{\S_1}) + \omega_{\S_1}\label{eq1: irr},\\
        \lambda\nu \hat\rho_{\S_0} = & -  H_\nu^{01}(\hat\gamma_{\S_1} - \gamma^*_{\S_1}) + \omega_{\S_0}\label{eq2: irr}.
    \end{align}
\end{subequations}
Since $H_\nu^{11}$ is reversible, $\hat\gamma_{\S_1} - \gamma^*_{\S_1}$ can be solved from Equation \eqref{eq1: irr}, and there holds
\begin{align}
    \hat\gamma_{\S_1} - \gamma^*_{\S_1} = -\lambda\nu [H_\nu^{11}]^{-1} \hat\rho_{\S_1} + [H_\nu^{11}]^{-1}\omega_{\S_1}.\label{solve_gamma}
\end{align}
Then, plug Equation \eqref{solve_gamma} into Equation \eqref{eq2: irr}, there further holds
\begin{align}
    \hat\rho_{\S_0} =  H_\nu^{01} [H_\nu^{11}]^{-1} \hat\rho_{\S_1} +\frac{1}{\lambda\nu}\{ \omega_{\S_0} - H_\nu^{01}[H_\nu^{11}]^{-1}\omega_{\S_1}\}.\label{estimation_rho}
\end{align}
The $\nu$-Incoherence Condition can be now formalized from the right hand side of Equation \eqref{estimation_rho}. Below, we are going to give the proof of Proposition \ref{thm: sign consistency front}.

\begin{proof}[Proof of Proposition \ref{thm: sign consistency front}]
    ~\\
    
    From Equation \eqref{estimation_rho} and $\nu$-Incoherence Condition, there holds
    \begin{align}
        \|\hat\rho_{\S_0}\|_\infty \le & (1-\chi_\nu) + \frac{1}{\lambda\nu}[\|\omega_{\S_0}\|_\infty + (1-\chi_\nu)\|\omega_{\S_1}\|_\infty],\nonumber\\
        \le & (1-\chi_\nu) + \frac{2}{\lambda}\left\|\frac{\omega}{\nu}\right\|_\infty.\nonumber
    \end{align}
    By definition $\frac{\omega}{\nu} = D\left[\nu \frac{X^TX}{n} +D^TD\right]^{-1}\frac{X^T}{n}\varepsilon$, therefore 
    \begin{align}
        \Prob\left(\left\|\frac{2\omega}{\lambda\nu}\right\|_\infty\ge \frac{\chi_\nu}{2}\right)\le 2m \exp\left(-\frac{n}{2c\sigma^2}\frac{\lambda^2\chi_\nu^2}{16}\right),\nonumber
    \end{align}
    for some constant $c>0$ related with $D$ and $X$. Take $\lambda = \lambda_n> \frac{8}{\chi_\nu}\sqrt{\frac{c\sigma^2\log m}{ n}}$, there holds
    \begin{align}
        \Prob\left(\|\hat\rho_{\S_0}\|_\infty>1-\frac{\chi_\nu}{2}\right) \le 2\exp\left(-c_1 n\lambda_n^2\right).\label{eq: no false positive}
    \end{align}
    for some constant $c_1>0$.
    
    Take $\lambda = \lambda_n$ in Equation \eqref{solve_gamma} and consider the infinity norm on both sides, there holds
    \begin{align}
        \|\hat\gamma_{\S_1} - \gamma^*_{\S_1}\|_\infty \le \lambda_n\nu\| [H_\nu^{11}]^{-1}\|_\infty + \|[H_\nu^{11}]^{-1}\omega_{\S_1}\|_\infty.\nonumber
    \end{align}
    Note that the first term in the right hand side is a deterministic term, therefore we can only estimate the second term. 
    By definition $\frac{\omega}{\nu} = D\left[\nu \frac{X^TX}{n} +D^TD\right]^{-1}\frac{X^T}{n}\varepsilon$, similarly like above, there holds
    \begin{align}
        \Prob(\|[H_\nu^{11}]^{-1}\omega_{\S_1}\|_\infty>\nu t) \le 2|S_1|\exp\left(-\frac{n}{2c\sigma^2}t^2C^2_\mathrm{min}\right).\nonumber
    \end{align}
    for some constant $c>0$. Take $t = \frac{\sigma\lambda_n}{2C_{\min}}$, there holds
    \begin{align}
        \Prob(\|[H_\nu^{11}]^{-1}\omega_{\S_1}\|_\infty>\lambda_n\nu \frac{\sigma}{2C_\mathrm{min}}) \le 2\exp\left(-c_2 n\lambda_n^2\right),\nonumber
    \end{align}
    for some constant $c_2>0$. After all, there holds
    \begin{align}
        \|\hat\gamma_{\S_1}-\gamma^*_{\S_1}\|_\infty\le \lambda_n \nu \left[\frac{\sigma}{2C_\mathrm{min}}+\| [H_\nu^{11}]^{-1}\|_\infty\right],\nonumber
    \end{align}
    with probability greater than $1-2\exp(-c_2\lambda_n^2 n)$. 
    
    Take $C = \max\{8\sqrt{c}, c_1, c_2\}$. Then from Equation \eqref{eq: no false positive}, for $\lambda = \lambda_n>\frac{C}{\chi_\nu}\sqrt{\frac{\sigma^2\log m}{n}}$, $\hat\gamma$ will not have false discoveries with probability greater than $1-4\exp(-Cn \lambda_n^2)$. 
    If there further holds $\min_{i\in \S_1}{\gamma^*_i}> \lambda_n \nu \left[\frac{\sigma}{2C_\mathrm{min}}+\| [H_\nu^{11}]^{-1}\|_\infty\right]$, $\hat\gamma$ recovers the support set of $\gamma^*$.
\end{proof}

\begin{remark}
The influence of $\nu$ on the power can be understood from this theorem as follows: (a) for the early stage of the Split LASSO path characterized by $\lambda_n > \frac{C}{\chi_\nu}\sqrt{\frac{\sigma^2\log m}{n}}.$, there is no false positive and only nonnull features are selected here; (b) all the strong nonnull features whose magnitudes are larger than $O(\nu \sigma \chi_\nu^{-1}\sqrt{\log m/n} )$ could be selected on the path with sign consistency. Hence a sufficiently large $\nu$ will ensure the incoherence condition for path consistency such that strong nonnull features will be selected earlier on the Split LASSO path than the nulls, at the cost of possibly losing weak nonnull features below $O(\nu \sigma \chi_\nu^{-1}\sqrt{\log m/n} )$. Therefore, a good power must rely on a proper choice of $\nu$ for the trade-off.  
\end{remark}

\begin{remark}
There is a close relationship between Proposition \ref{thm: sign consistency front} in this paper and the model selection consistency in \cite{huang2020boosting}. First of all, both of them aim to address the regression with transformational sparsity problem, in which the most popular method is the generalized LASSO \eqref{eq:gen_lasso}. However, the model selection consistency of generalized LASSO suffers from the incoherence condition which often fails in applications. To alleviate this issue, the variable splitting idea is adopted in both approaches, by relaxing the linear constraint to an Euclidean neighborhood controlled with a proximity parameter $\nu>0$. In this paper, the Split LASSO in \eqref{eq:split_lasso} adopts a new objective function to optimize; while \cite{huang2020boosting} adopts a dynamical system called the inverse scale space which does not optimize an objective function but renders an iterative regularization path in its discrete form. In terms of model selection consistency, Proposition \ref{thm: sign consistency front} shows that Split LASSO achieves sign consistency under strong signal assumption and a family of $\nu$-incoherence conditions which is weaker than that for generalized LASSO as $\nu$ increases, easier to meet in applications. Such a theoretical result is similar to the model selection consistency in \cite{huang2020boosting} when a proper early stopping regularization is chosen instead. It is shown in \cite{huang2020boosting} that on such iterative regularization paths, there exists an estimator with model selection consistency, provided that the signal is strong enough and the same family of $\nu$-incoherence conditions (Equation (2.5), Page 12 in \cite{huang2020boosting}) are satisfied. 

However, the incoherence conditions above are unknown in practice since the true support set is unknown, hence they can not be used for data adaptive model selection. The departure of our current paper aims to develop a data adaptive model selection method towards FDR control. Here, Theorem \ref{theorem: fdr} and its high dimensional extension Theorem \ref{thm: fdr hd}, show that the FDR of Split Knockoffs can be uniformly controlled for all parameter $\nu>0$. 

On the other hand, the selection power of Split Knockoffs, is $\nu$-dependent, which can be observed from the simulation experiments in Section \ref{Sec: simulation results}, where the selection power of Split Knockoffs undergoes a first increase then decrease as $\nu$ grows. Such a phenomenon is possibly explained by Proposition \ref{thm: sign consistency front}. In Proposition \ref{thm: sign consistency front}, there are two folds of influence of $\nu$ on model selection consistency of Split LASSO. On the one hand, the $\nu$-incoherence condition of Split LASSO --- which is critical for discovering strong signals --- will be easier to be satisfied with the increase of $\nu$. On the other hand, the requirement on the signal noise ratio \eqref{eq: min snr} in achieving the sign consistency becomes harder to satisfy and weak signals might get lost with the increase of $\nu$. As a consequence, the selection power of Split Knockoffs shows a first increasing then decreasing trend, in the simulation experiments in Section \ref{Sec: simulation results}. Therefore, a good choice of $\nu$ in terms of power will depend on such a trade-off. This suggests us to use cross-validation to select a good $\hat{\nu}$ to reach a good selection power with a desired FDR control.  
\end{remark}

\section{Supplementary Material on Simulation Experiments}

In this section, we will provide various supplementary material on simulation experiments for Split Knockoffs. In particular, we will present the simulation experiments of Split Knockoffs in the cases where $m$ is close to $n$ and in high dimensional settings. Then, we will discuss the choice of the data splitting fraction in Split Knockoffs, as well as the robustness of the random data splitting. After that, simulation experiments are implemented to compare the performance of Split Knockoffs and Knockoffs when the signal strength varies. In the end, we will discuss the computational cost of Split Knockoffs.

\subsection{Simulation Experiments where $m$ is close to $n$}

\label{sec: simulation m close to n}

In this section, we conduct simulation experiments in a setting where $m$ is closer to $n$ compared with that of Section \ref{sec: simulation_settings}. In particular, we consider the following setting, whose choices of $m$, $n$ and $p$ are close to those in the connection selection of Alzheimer’s Disease in Section \ref{sec: connection selection} ($m_{\mathrm{AD}} = 463$, $n_{\mathrm{AD}} = 752$, $p_{\mathrm{AD}}=90$).

In model \eqref{eq: model}, we generate $X\in \mathbb R^{n\times p}$ ($n=750$ and $p=100$) i.i.d. from $\Nm(0_p, \Sigma)$, where $\Sigma_{i,i}=1$ and $\Sigma_{i,j}=c^{|i-j|}$ for $i\neq j$, with feature correlation $c=0.5$. Define $\beta^*\in\mathbb R^p$ by
\begin{equation*}
    \beta_i^*:=\left\{
    \begin{array}{ccl}
        1   &   & i \le 20,\ i \equiv 0, -1 (\mathrm{mod}\ 3),\\
        0   &   & \mathrm{otherwise}.
    \end{array} \right.
\end{equation*}
Then $n$ linear measurements are generated by
$$y = X \beta^* + \varepsilon,$$
where $\varepsilon\in \mathbb R^n$ is generated i.i.d. from $\Nm(0, 1)$.

The linear transformation $D\in\R^{m\times p}$ is specified in the following way. Consider the graph $G = (V, E)$, where $V$ denotes the vertex set $V = \{1, 2, \cdots, p\}$, and $E$ denotes the edge set. There is an edge connecting two vertices $i\neq j$ if and only if 
\begin{align*}
    |i-j|\le 5 (\mathrm{mod}\ p).
\end{align*}
Through this construction, each vertex in $V$ is connected to 10 neighbouring vertices. Therefore, $|E| = 10*p/2 = 500$. Take $D$ to be the graph difference operator on $G$, then $m = 500>p = 100$. Then we generate $\gamma^* = D\beta^*$.

For Split Knockoffs, we take $\widehat{\beta}(\lambda)$ as a fixed cross validation optimal estimator $\widehat{\beta}_{\hat\nu, \hat{\lambda}}$ 
in the Split LASSO path with dataset $\D_1=(X_1, y_1)$. The dataset $\D = (X, y)$ is randomly split into two parts $\D_1 = (X_1, y_1)$ and $\D_2 = (X_2, y_2)$ with $n_1$ and $n_2$ samples respectively, where $n_1 = 150$ and $n_2 = n-n_1 = 600 = m+p$. The performance of Split Knockoffs is presented in Figure \ref{fig: simulation where m close to n}.

\begin{figure}[!ht]
\centering
\subfigure[$\Ws$]{
\begin{minipage}[t]{0.3\textwidth}
\centering
\includegraphics[width=\textwidth]{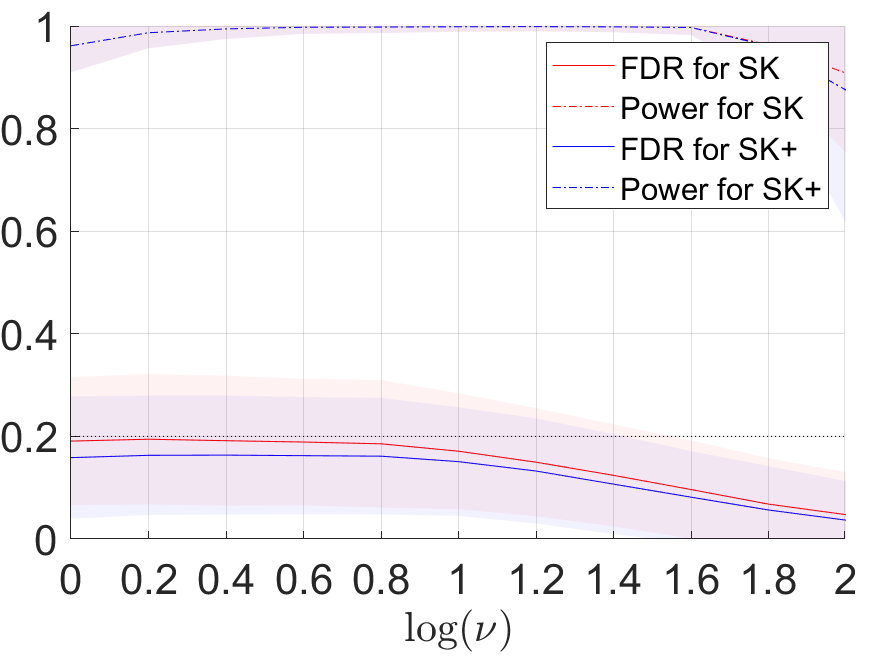}
\end{minipage}%
}%
\subfigure[$\Wst$]{
\begin{minipage}[t]{0.3\textwidth}
\centering
\includegraphics[width=\textwidth]{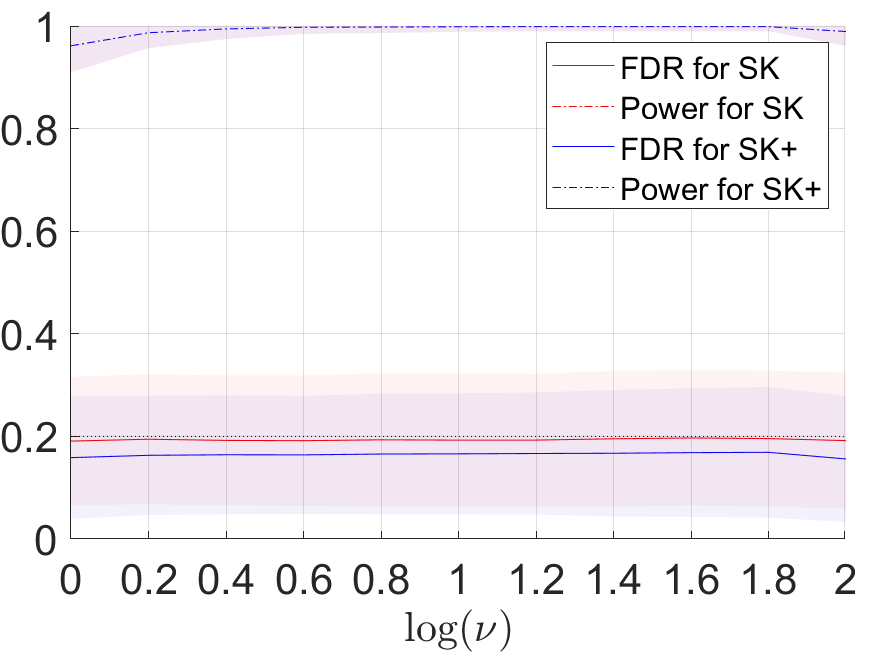}
\end{minipage}%
}%
\subfigure[$\Wbc$]{
\begin{minipage}[t]{0.3\textwidth}
\centering
\includegraphics[width=\textwidth]{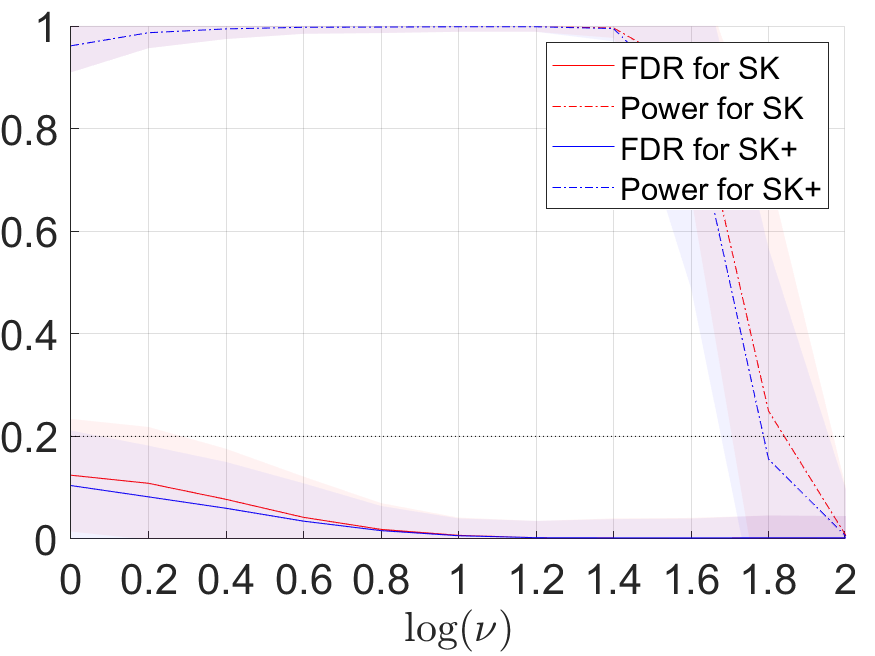}
\end{minipage}%
}%

\caption{The performance of Split Knockoffs: FDR and Power for $q=0.2$. $\widehat{\beta}(\lambda)$ is taken as a fixed cross validation optimal estimator $\widehat{\beta}_{\hat\nu, \hat{\lambda}}$. The curves in the figures represent the average performance of Split Knockoffs in FDR and Power in 200 simulation instances, while the shaded areas represent the 80\% confidence intervals truncated to the range $[0, 1]$.}
\label{fig: simulation where m close to n}
\end{figure}

As presented in Figure \ref{fig: simulation where m close to n}, the performance of Split Knockoffs follows the same trend as that of the other simulation experiments in Section \ref{Sec: simulation results}. The FDR is controlled universally for $\Ws$, $\Wst$ and $\Wbc$, where $\Wst$ exhibits the least conservative FDR control as suggested by Proposition \ref{prop: inequality of w statistics}. Meanwhile, the selection power of Split Knockoffs presents a first increase then decrease trend as suggested by Proposition \ref{thm: sign consistency front}, where $\Wst$ exhibits the best selection power, as predicted by Proposition \ref{prop: relations}.

\subsection{Simulation Experiments for Split Knockoffs in High Dimensional Settings}

\label{sec: hd_simulation}

In this section, we will show the results of the simulation experiments on Split Knockoffs  in the high dimensional setting in a similar setting as in Section \ref{sec: simulation_settings}.
In model \eqref{eq: model}, we generate $X\in \mathbb R^{n\times p}$ ($n=400$ and $p=1000$) i.i.d. from $\Nm(0_p, \Sigma)$, where $\Sigma_{i,i}=1$ and $\Sigma_{i,j}=c^{|i-j|}$ for $i\neq j$, with feature correlation $c=0.5$. Define $\beta^*\in\mathbb R^p$ by
\begin{equation*}
    \beta_i^*:=\left\{
    \begin{array}{ccl}
        1   &   & i \le 20,\ i \equiv 0, -1 (\mathrm{mod}\ 3),\\
        0   &   & \mathrm{otherwise},
    \end{array} \right.
\end{equation*}
in the same way as in Section \ref{sec: simulation_settings}.
Then we generate $n$ linear measurements,
$y = X \beta^* + \varepsilon,$
where $\varepsilon\in \mathbb R^n$ is generated i.i.d. from $\Nm(0, 1)$. 
For transformational sparsity, we should specify the linear transformation $D$ such that $\gamma^*=D\beta^*$, where $\gamma^*$ is sparse. We choose three types of transformation $D_1$, $D_2$, $D_3$ in the same way as in Section \ref{sec: simulation_settings}.

\begin{figure}[!ht]
\centering
\subfigure[$\Ws$ in $D_1$]{
\begin{minipage}[t]{0.3\textwidth}
\centering
\includegraphics[width=\textwidth]{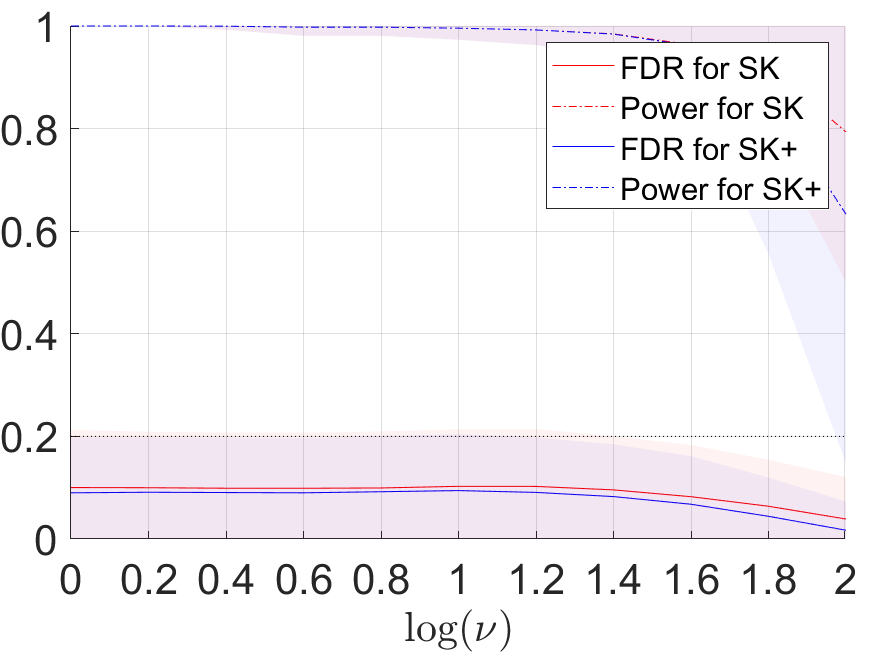}
\end{minipage}%
}%
\subfigure[$\Ws$ in $D_2$]{
\begin{minipage}[t]{0.3\textwidth}
\centering
\includegraphics[width=\textwidth]{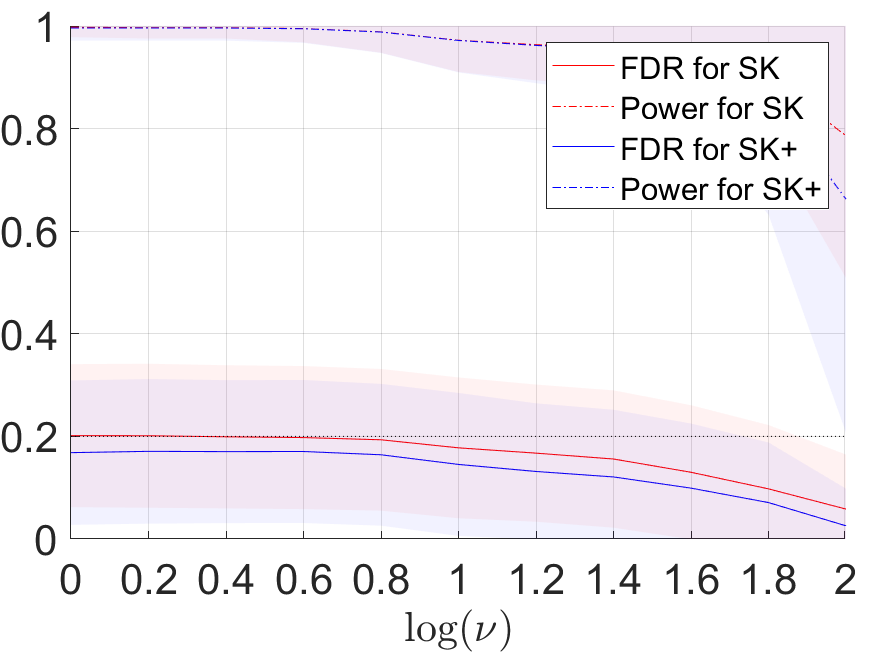}
\end{minipage}%
}%
\subfigure[$\Ws$ in $D_3$]{
\begin{minipage}[t]{0.3\textwidth}
\centering
\includegraphics[width=\textwidth]{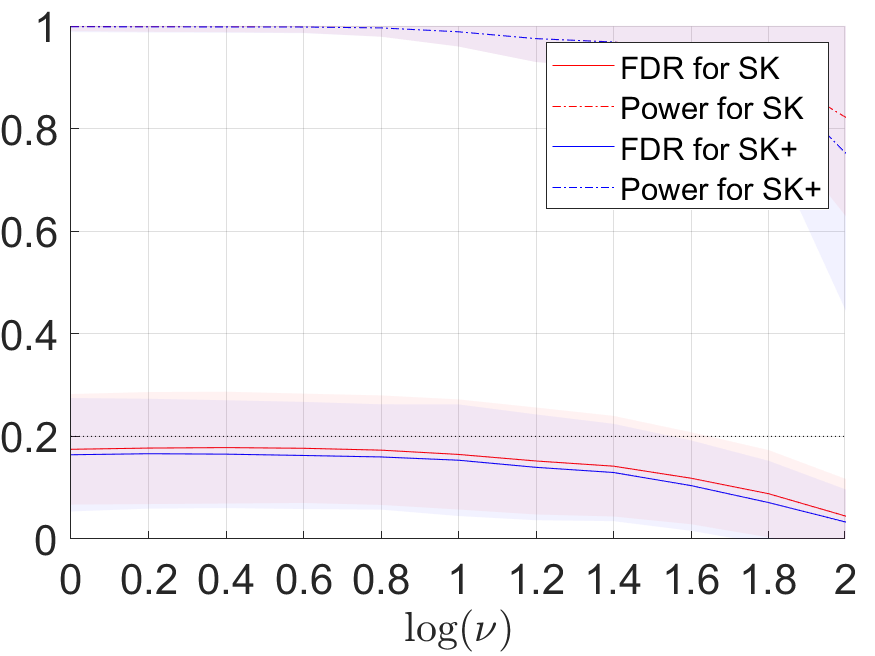}
\end{minipage}%
}%

\centering
\subfigure[$\Wst$ in $D_1$]{
\begin{minipage}[t]{0.3\textwidth}
\centering
\includegraphics[width=\textwidth]{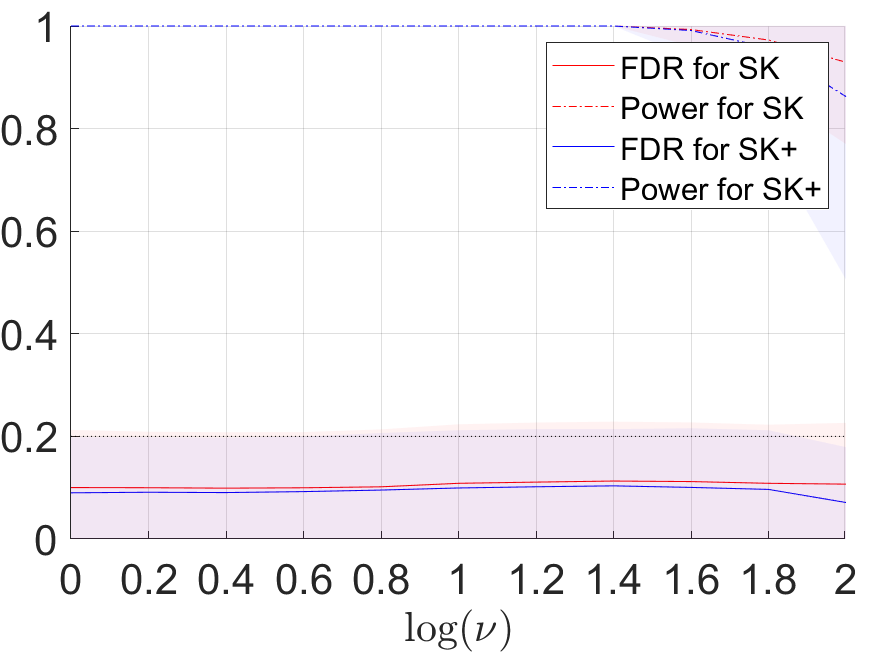}
\end{minipage}%
}%
\subfigure[$\Wst$ in $D_2$]{
\begin{minipage}[t]{0.3\textwidth}
\centering
\includegraphics[width=\textwidth]{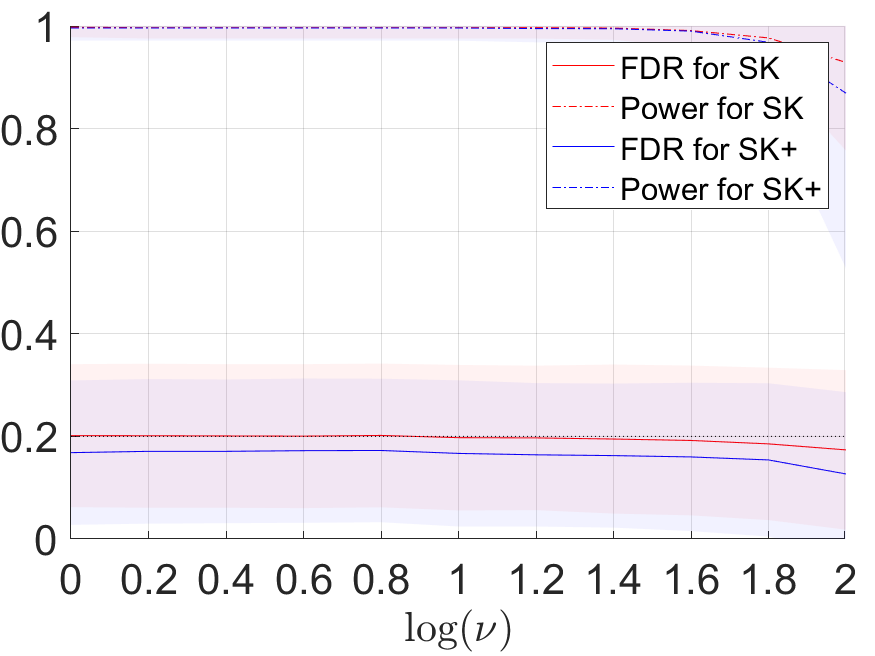}
\end{minipage}%
}%
\subfigure[$\Wst$ in $D_3$]{
\begin{minipage}[t]{0.3\textwidth}
\centering
\includegraphics[width=\textwidth]{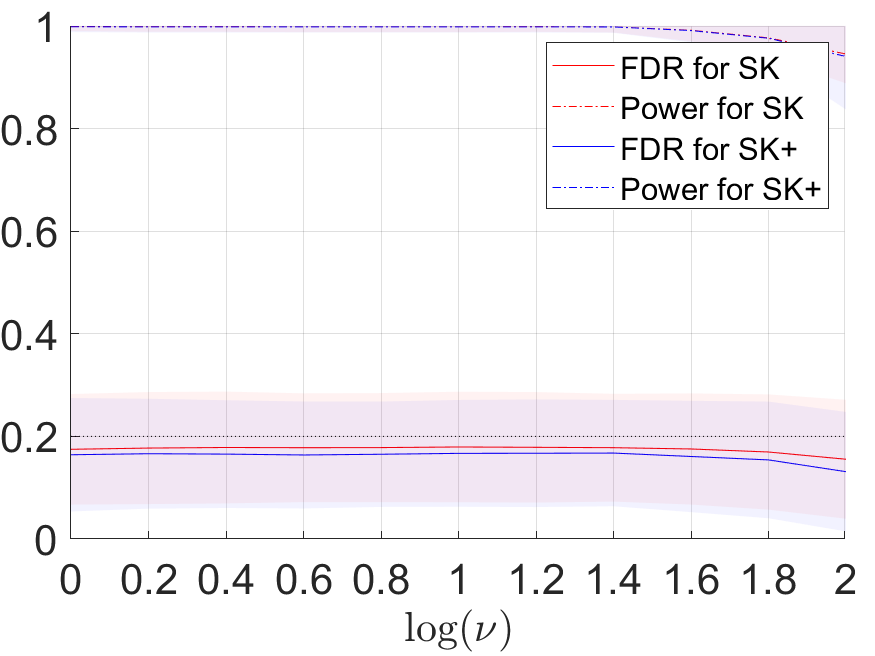}
\end{minipage}%
}%

\centering
\subfigure[$\Wbc$ in $D_1$]{
\begin{minipage}[t]{0.3\textwidth}
\centering
\includegraphics[width=\textwidth]{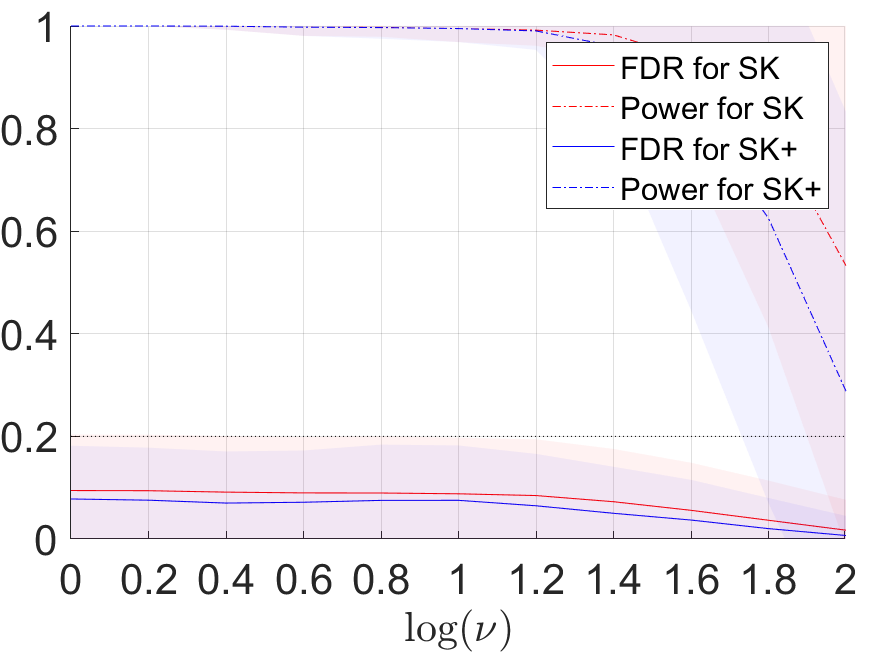}
\end{minipage}%
}%
\subfigure[$\Wbc$ in $D_2$]{
\begin{minipage}[t]{0.3\textwidth}
\centering
\includegraphics[width=\textwidth]{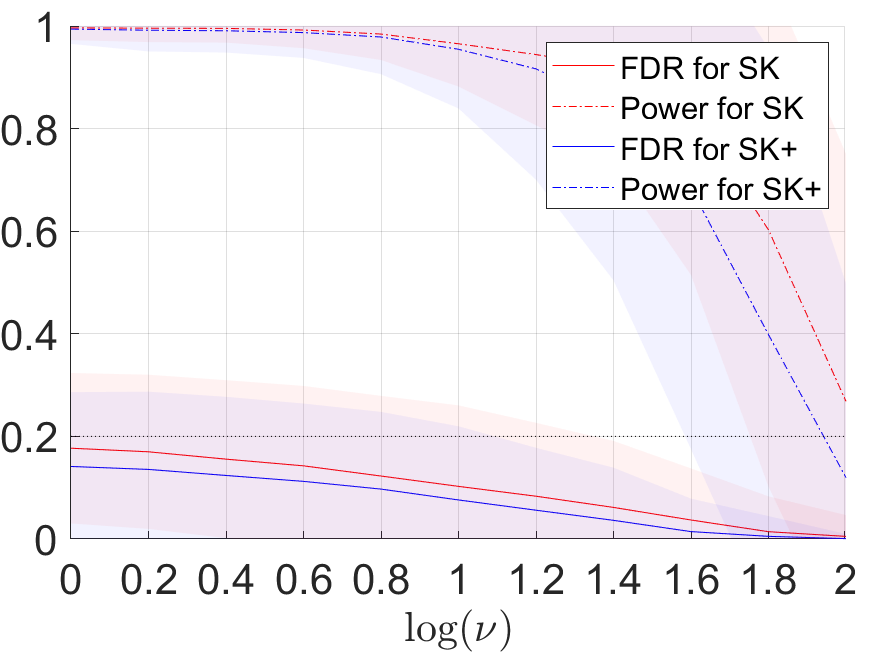}
\end{minipage}%
}%
\subfigure[$\Wbc$ in $D_3$]{
\begin{minipage}[t]{0.3\textwidth}
\centering
\includegraphics[width=\textwidth]{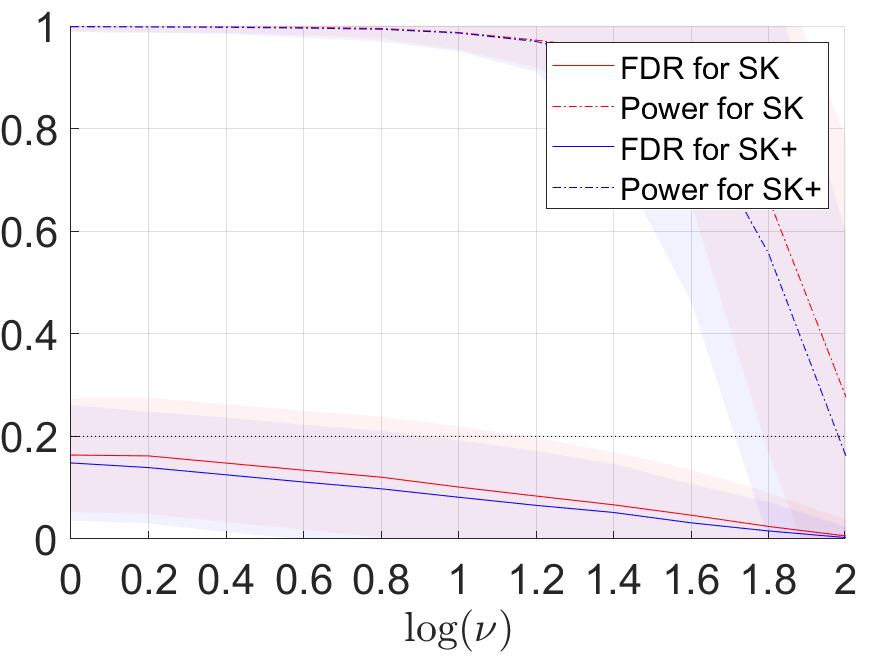}
\end{minipage}%
}%

\caption{The performance of Split Knockoffs in high dimensional settings: FDR and Power for $q=0.2$. $\widehat{\beta}(\lambda)$ is taken as a fixed cross validation optimal estimator $\widehat{\beta}_{\hat\nu, \hat{\lambda}}$. The curves in the figures represent the average performance of Split Knockoffs in FDR and Power in 200 simulation instances, while the shaded areas represent the 80\% confidence intervals truncated to the range $[0, 1]$.}
\label{fig: simulation with beta hat hd}
\end{figure}

In simulation experiments, we use \url{glmnet} package \citep{friedman2010regularization, simon2011regularization} to compute regularization paths for Split LASSO, etc.
For the data splitting, we randomly split the dataset $\D = (X, y)$ into two parts $\D_1 = (X_1, y_1)$ and $\D_2 = (X_2, y_2)$ with $n_1$ and $n_2$ samples respectively, where $n_1 = 100$ and $n_2 = 300$.

The estimated support sets $\hat{S}_\beta$, $\hat{S}_\gamma$ in this experiment are generated from $\D_1$ in the following two steps. First, to estimate a super support set $\hat{S}_\beta$, we perform the standard LASSO on a proper $\lambda>0$ (chosen from cross validation in the experiments),
\begin{align*}
    \min_{\beta} \frac{1}{2n}\|y_1 - X_1\beta\|_2^2 + \lambda\|\beta\|_1.
\end{align*}
Second, to estimate support set $\hat{S}_\gamma$, we perform the Split LASSO on the reduced model with respect to a proper $\lambda>0$ (chosen from cross validation in the experiments),
\begin{align*}
    \min_{\beta,\gamma} \frac{1}{2n}\left\|y_1 - X_{\hat{S}_\beta}\beta\right\|_2^2 + \frac{1}{2\nu}\|D\beta - \gamma\|_2^2 + \lambda\|\gamma\|_1,
\end{align*}
where $X_{\hat{S}_\beta}$ is the submatrix of $X_1$ consisting of the columns in $\hat{S}_\beta$. 

With the estimated support set $\hat{S}_\beta$ and $\hat{S}_\gamma$, we generate the intercept $\widehat{\beta}(\lambda) = \widehat\beta_{\hat\lambda, \hat\nu}$ as an optimal estimator with minimal cross validation loss (with respect to $\lambda$ and $\nu$) on the following Split LASSO regularization path
\begin{align*}
    \min_{\beta,\gamma} \frac{1}{2n}\left\|y_1 - X_{\hat{S}_\beta}\beta\right\|_2^2 + \frac{1}{2\nu}\|D_{\hat S_\beta, \hat S_\gamma}\beta - \gamma\|_2^2 + \lambda\|\gamma\|_1,\ \ \ \ \mbox{for $\lambda>0$,}
\end{align*}
where $D_{\hat S_\beta, \hat S_\gamma}$ is the submatrix of $D$, consisting of the columns in $\hat S_\beta$ and rows in $\hat S_\gamma$. With these choices of $\hat{S}_\beta$, $\hat{S}_\gamma$ and $\widehat{\beta}(\lambda)$, we proceed with the rest steps of Split Knockoffs following the instructions in Section \ref{sec: hd}.

As shown in Figure \ref{fig: simulation with beta hat hd}, all three versions of the Split Knockoffs achieve desired performance in FDR control in the high dimensional settings. It is worth to mention that similar with the performance of Split Knockoffs in the case $n\ge m+p$ in Section \ref{Sec: simulation results}, Split Knockoffs with $\Wbc$ statistics exhibits stricter FDR control, at the cost of the selection power --- especially when $\nu$ is large --- compared with $\Ws$ and $\Wst$, while $\Wst$ exhibits the highest selection power with the most adaptive FDR control with respect to the target. Such an observation 
is explained by Proposition \ref{prop: relations} that $\Wbc$ is the most conservative statistics, while $\Wst$ is the most aggressive statistics in terms of feature selection.

Moreover, we show in Table \ref{table: hd} that, for the calculation of the feature and knockoff significance, the cross validation optimal choice of $\nu$ --- $\hat\nu$ --- can still achieve high power and desired FDR in Split Knockoffs in high dimensional settings. Specifically, we present the performance of Split Knockoffs with all three versions of $W$ statistics, under the above choice of $\widehat{\beta}(\lambda)$ being taken as a fixed cross validation optimal estimator $\widehat{\beta}_{\hat\nu, \hat{\lambda}}$ on the Split LASSO paths, while $\nu$ is chosen to be $\hat\nu$. For such a choice, $\Wbc$ operates slightly stricter FDR control at the cost of slightly lower selection power compared with $\Ws$ and $\Wst$.

\begin{table}[!ht]
    \caption{FDR and Power for Split Knockoffs in high dimensional settings ($q = 0.2$). The intercept $\widehat{\beta}(\lambda)$ for Split Knockoffs is taken as a fixed cross validation optimal estimator $\widehat{\beta}_{\hat\nu, \hat{\lambda}}$, with the $\nu$ for calculating the feature and knockoff significance taken as $\hat\nu$. In this table, we present the average performance of Split Knockoffs in FDR and Power, together with the standard deviations in 200 simulation instances. For shorthand notations, we use "SK(+)" to refer to "Split Knockoff(+)".}
    \centering
    \resizebox{\textwidth}{!}{
    \begin{tabular}{c|ccc|ccc}
    \hline
        Performance & SK with $\Ws$ & SK with $\Wst$ & SK with $\Wbc$ & SK+ with $\Ws$ & SK+ with $\Wst$ & SK+ with $\Wbc$ \\
        \hline
        FDR in $D_1$  & 0.1003 &  0.1003 &  0.0945 &   0.0900 &  0.0900  & 0.0780  \\
        ~ & $\pm$0.0874 & $\pm$0.0874 & $\pm$0.0849 & $\pm$0.0840 & $\pm$0.0840 & $\pm$0.0807 \\
        Power in $D_1$  & 1.0000 & 1.0000 & 1.0000 & 1.0000 & 1.0000 & 1.0000 \\
        ~ & $\pm$0.0000 & $\pm$0.0000 & $\pm$0.0000 & $\pm$0.0000 & $\pm$0.0000 & $\pm$0.0000\\
        \hline
        FDR in $D_2$  & 0.2015 &  0.2015 & 0.1770  & 0.1683 &  0.1683 & 0.1414 \\
        ~ & $\pm$0.1084 & $\pm$0.1084 & $\pm$0.1141 & $\pm$0.1095  & $\pm$0.1095  & $\pm$0.1127 \\
        Power in $D_2$  & 0.9975 &  0.9975  & 0.9964& 0.9961 &  0.9961  & 0.9939 \\
        ~ & $\pm$0.0150 & $\pm$0.0150 & $\pm$0.0186 & $\pm$0.0192 & $\pm$0.0192 & $\pm$0.0224 \\
        \hline
        FDR in $D_3$  & 0.1749 & 0.1749 &  0.1637 & 0.1642  & 0.1642  & 0.1482  \\
        ~ & $\pm$0.0838 & $\pm$0.0838 & $\pm$0.0866 & $\pm$0.0860 & $\pm$0.0860 & $\pm$0.0876 \\
        Power in $D_3$  & 0.9991  & 0.9991 &  0.9991 & 0.9985 & 0.9985 &  0.9985 \\
        ~ & $\pm$0.0069 & $\pm$0.0069 & $\pm$0.0069 & $\pm$0.0082 & $\pm$0.0082 & $\pm$0.0082 \\
    \hline
      \end{tabular}
      }
      \label{table: hd}
\end{table}

\subsection{Simulation Experiments on Data Splitting Fraction}

\label{sec: split fraction}

In this section, we will use simulation experiments to study how the data splitting fraction can affect the performance of Split Knockoffs, and give an practical guideline on how to choose the data splitting fraction. We succeed all the simulation settings on the dataset $(X, y)$ and transformational sparsity $D_1$, $D_2$, $D_3$ from Section \ref{sec: simulation_settings}, and take the regression parameter $\beta^*\in\mathbb R^p$ as
\begin{equation}
    \beta_i^*:=\left\{
    \begin{array}{ccl}
        0.5   &   & i \le 20,\ i \equiv 0, -1 (\mathrm{mod}\ 3),\\
        0   &   & \mathrm{otherwise}.
    \end{array} \right.\label{eq: snr = 0.5}
\end{equation}
The signal strength used in this section was lowered compared with that of Section \ref{sec: simulation_settings} in order to increase the variation and amplify the effects of data splitting fractions on the performance of Split Knockoffs. Then we conduct simulation experiments with respect to different data splitting fractions in the range from 0.1 to 0.8 with a step size 0.1. The performance of Split Knockoffs in the FDR and selection power with respect to different data splitting fractions is shown in Figure \ref{fig: split size 0.5}.

\begin{figure}[!ht]
\centering
\subfigure[$\Ws$ in $D_1$]{
\begin{minipage}[t]{0.3\textwidth}
\centering
\includegraphics[width=\textwidth]{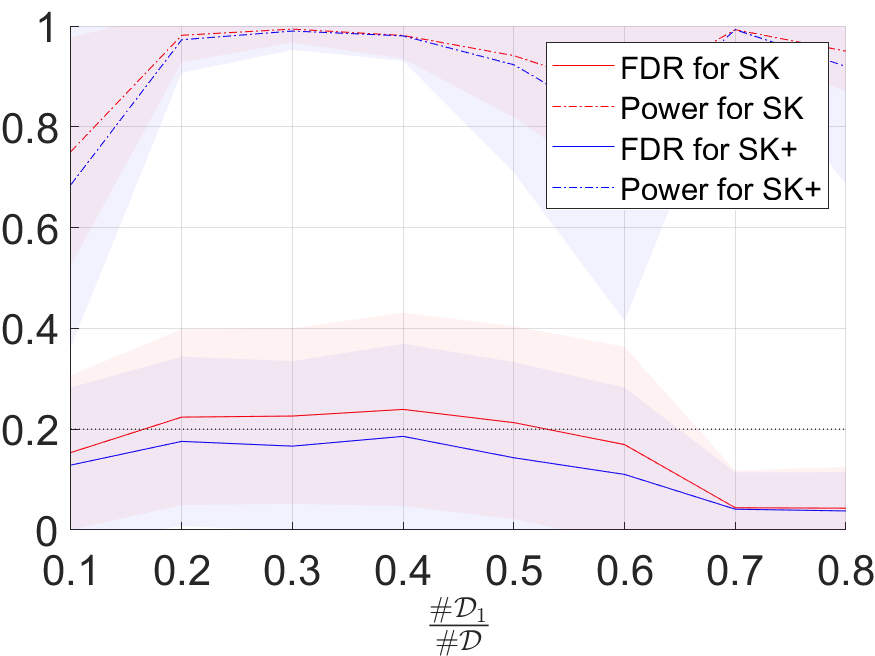}
\end{minipage}%
}%
\subfigure[$\Ws$ in $D_2$]{
\begin{minipage}[t]{0.3\textwidth}
\centering
\includegraphics[width=\textwidth]{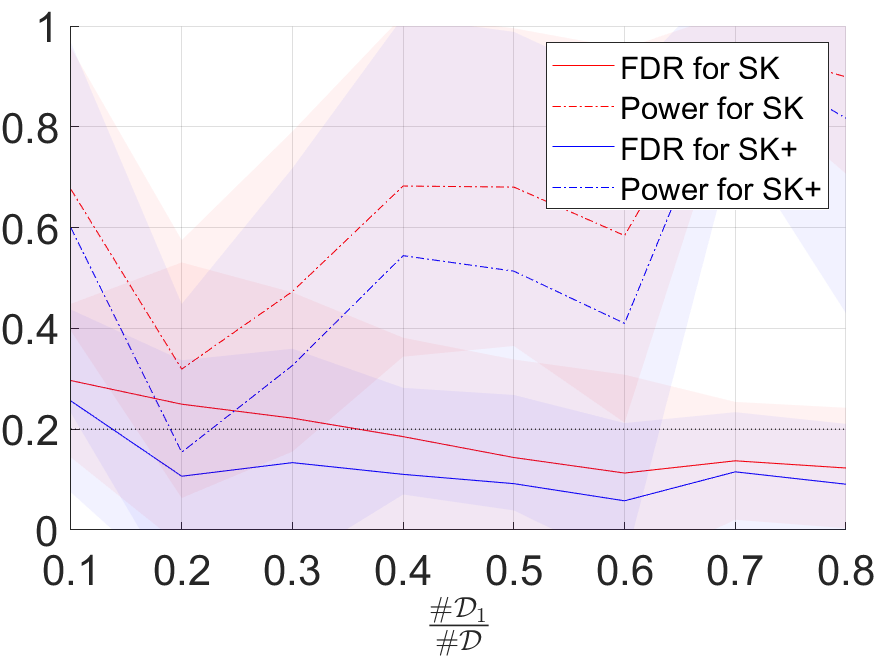}
\end{minipage}%
}%
\subfigure[$\Ws$ in $D_3$]{
\begin{minipage}[t]{0.3\textwidth}
\centering
\includegraphics[width=\textwidth]{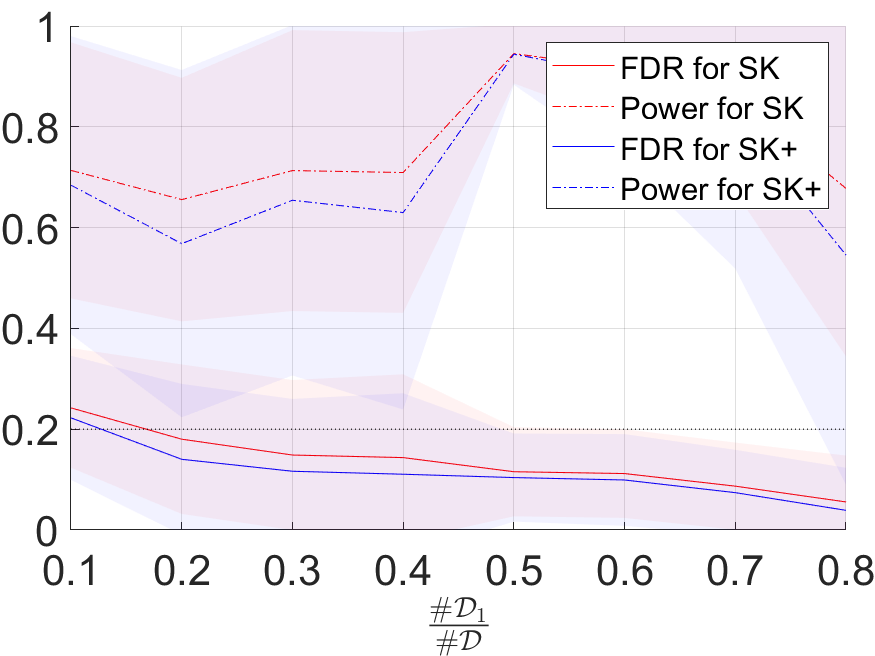}
\end{minipage}%
}%

\centering
\subfigure[$\Wst$ in $D_1$]{
\begin{minipage}[t]{0.3\textwidth}
\centering
\includegraphics[width=\textwidth]{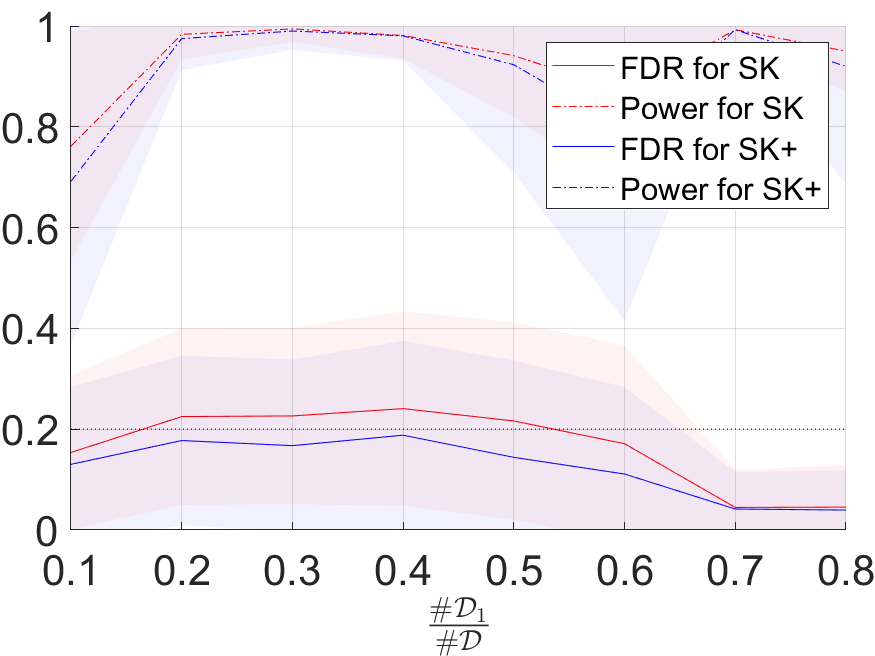}
\end{minipage}%
}%
\subfigure[$\Wst$ in $D_2$]{
\begin{minipage}[t]{0.3\textwidth}
\centering
\includegraphics[width=\textwidth]{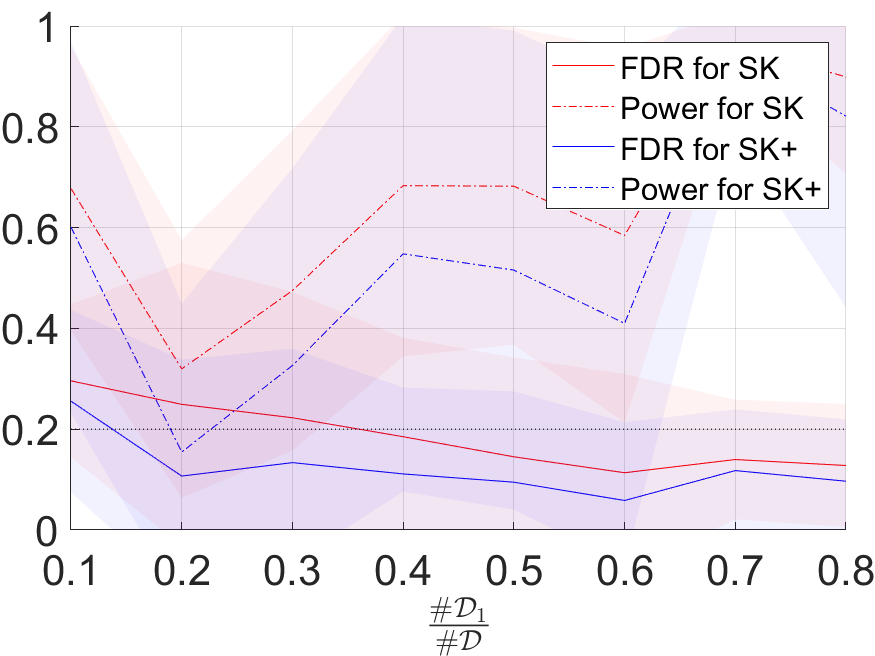}
\end{minipage}%
}%
\subfigure[$\Wst$ in $D_3$]{
\begin{minipage}[t]{0.3\textwidth}
\centering
\includegraphics[width=\textwidth]{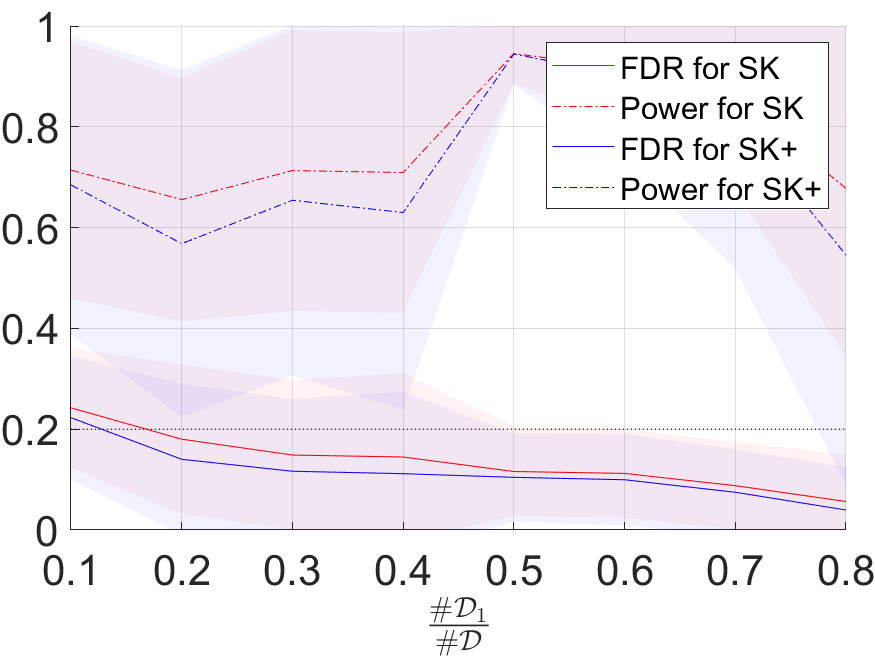}
\end{minipage}%
}%

\centering
\subfigure[$\Wbc$ in $D_1$]{
\begin{minipage}[t]{0.3\textwidth}
\centering
\includegraphics[width=\textwidth]{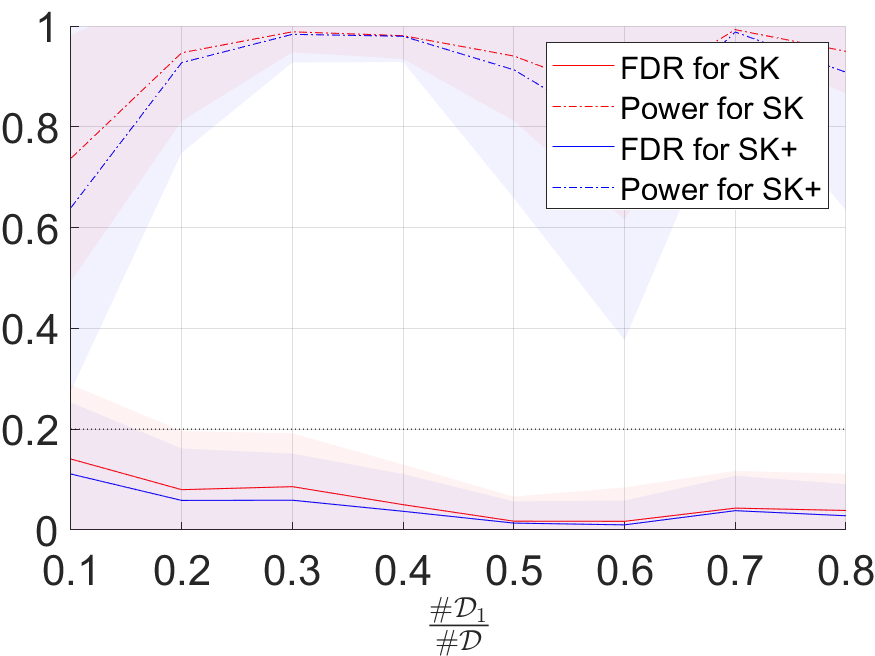}
\end{minipage}%
}%
\subfigure[$\Wbc$ in $D_2$]{
\begin{minipage}[t]{0.3\textwidth}
\centering
\includegraphics[width=\textwidth]{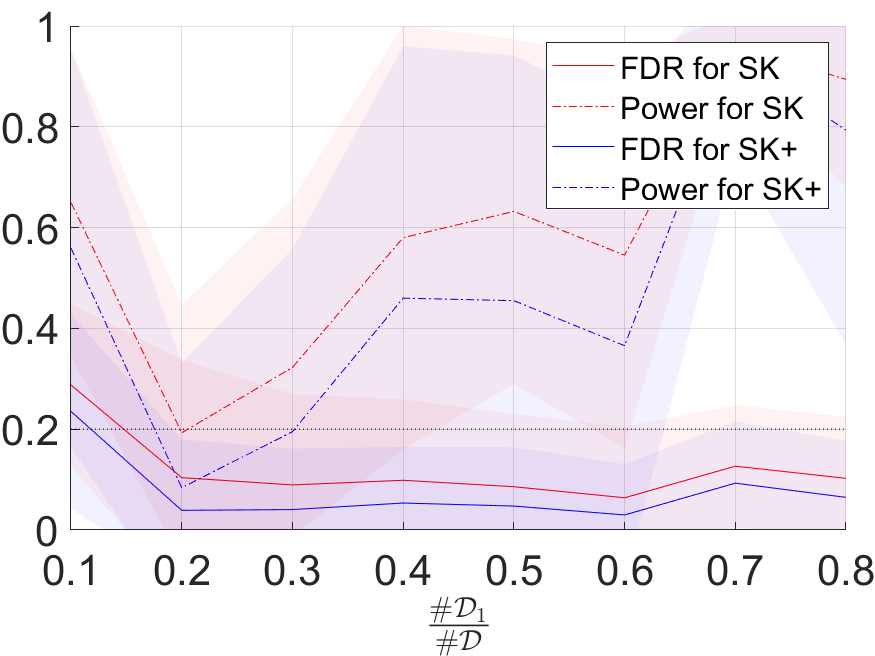}
\end{minipage}%
}%
\subfigure[$\Wbc$ in $D_3$]{
\begin{minipage}[t]{0.3\textwidth}
\centering
\includegraphics[width=\textwidth]{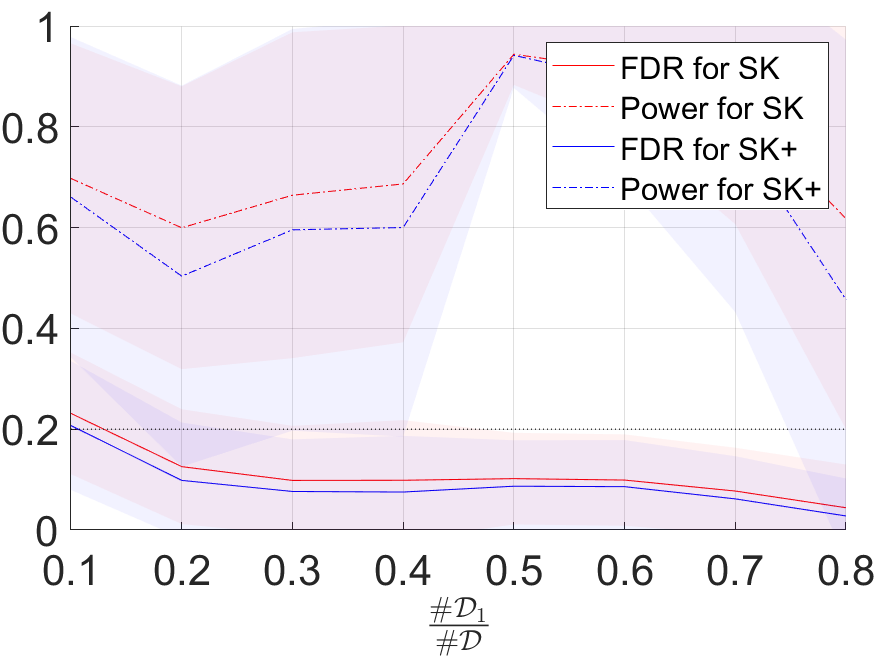}
\end{minipage}%
}%

\caption{The performance of Split Knockoffs under different data splitting fractions: FDR and Power for $q=0.2$. \textbf{The signal strength is taken to be 0.5 as in Equation \eqref{eq: snr = 0.5} for this experiment instead of being 1 in Section \ref{sec: simulation_settings}.} The $x$-axis represents the percentage of samples in $\D_1$ among $\D$. The curves in the figures represent the average performance of Split Knockoffs in FDR and Power in 200 simulation instances, while the shaded areas represent the 80\% confidence intervals truncated to the range $[0, 1]$.  The intercept $\widehat{\beta}(\lambda)$ for Split Knockoffs is taken as a fixed cross validation optimal estimator $\widehat{\beta}_{\hat\nu, \hat{\lambda}}$, with the $\nu$ for calculating the feature and knockoff significance taken as $\hat\nu$.}
\label{fig: split size 0.5}
\end{figure}

In particular, since we extend Split Knockoffs to high dimensional settings in Section \ref{sec: hd}, our method is still applicable for the case where the data splitting fraction of $\D_1$ is large and $n_2\ge m+ p$ is dissatisfied. As one can observe in Figure \ref{fig: split size 0.5}, the FDR of Split Knockoffs is not sensitive to the data splitting fraction --- in all cases, Split Knockoffs achieves desired FDR despite its potential theoretical inflation when the data splitting fraction of $\D_1$ is large. The main effect of the data splitting fraction lies on the power of Split Knockoffs. In Figure \ref{fig: split size 0.5}, the optimal selection power is commonly achieved when the data splitting fraction is balanced for $\D_1$ and $\D_2$. Therefore, a practical guideline for selecting the data splitting fraction should be: make the data splitting fraction balanced when the sample size permits, and favor the sample size of $\D_1$ a little bit for better selection power when the sample size is limited.

\subsection{Simulation Experiments on Stability of Data Splitting}

\begin{figure}[!ht]
\centering
\subfigure[$\Ws$ in $D_1$]{
\begin{minipage}[t]{0.3\textwidth}
\centering
\includegraphics[width=\textwidth]{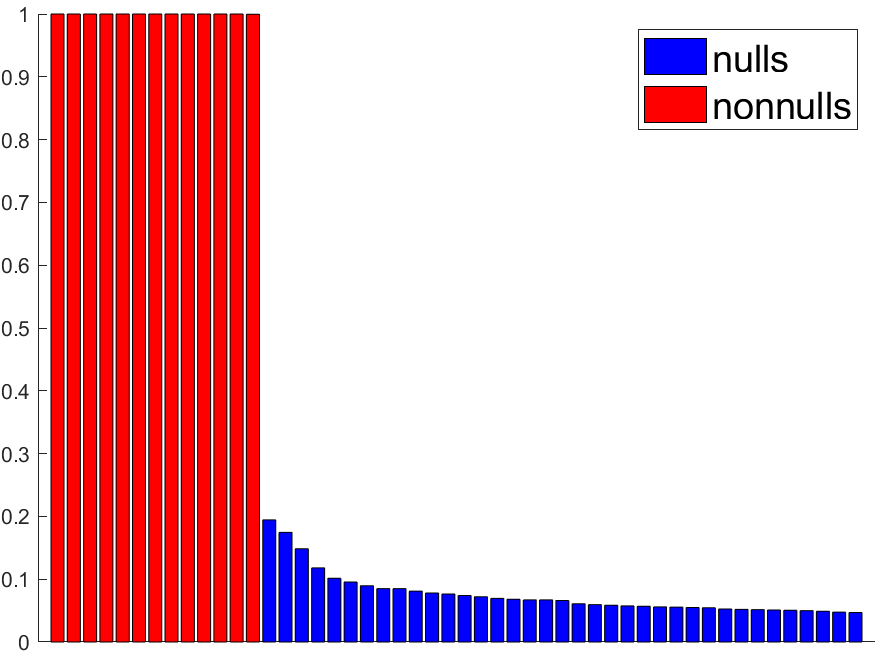}
\end{minipage}%
}%
\subfigure[$\Ws$ in $D_2$]{
\begin{minipage}[t]{0.3\textwidth}
\centering
\includegraphics[width=\textwidth]{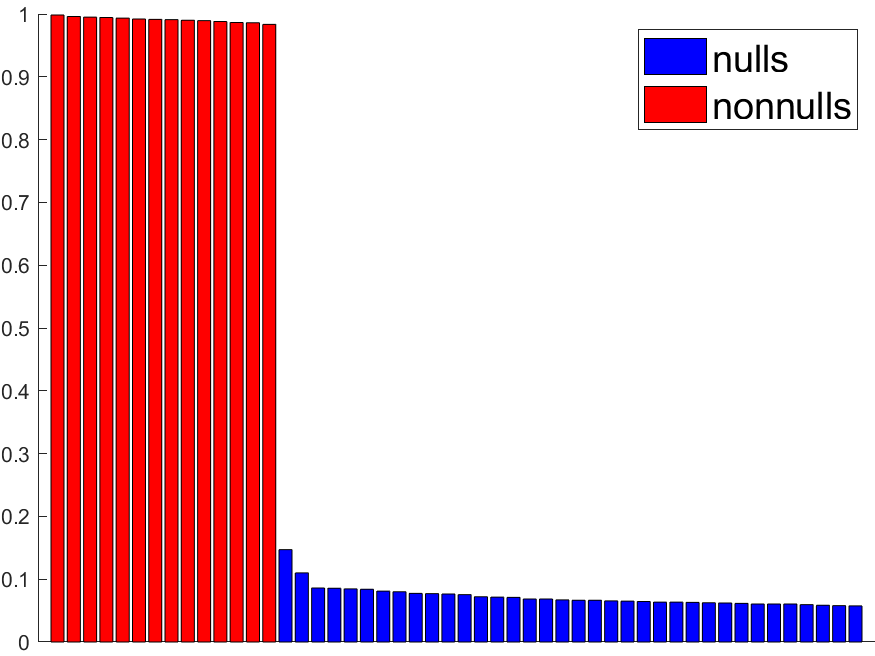}
\end{minipage}%
}%
\subfigure[$\Ws$ in $D_3$]{
\begin{minipage}[t]{0.3\textwidth}
\centering
\includegraphics[width=\textwidth]{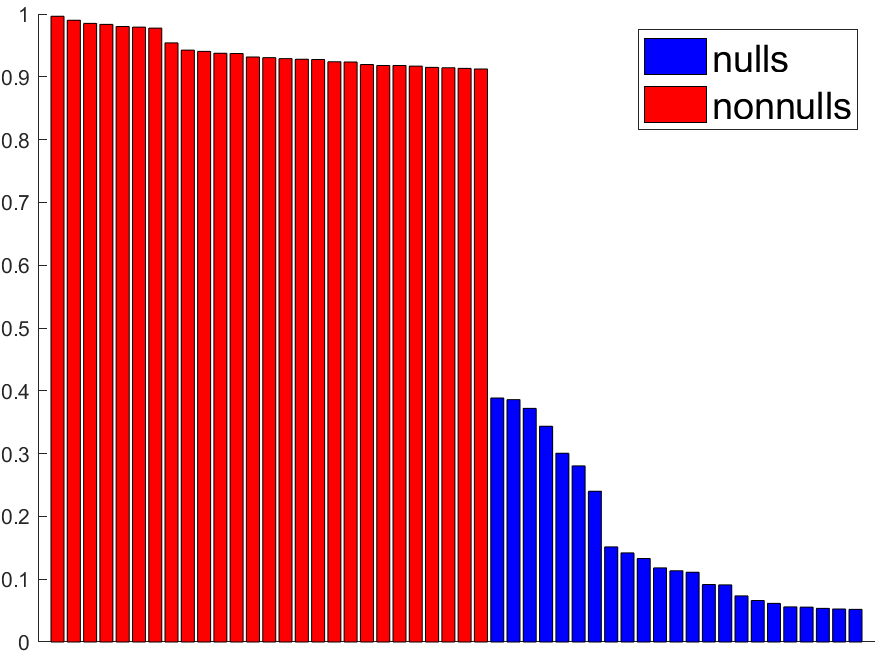}
\end{minipage}%
}%

\centering
\subfigure[$\Wst$ in $D_1$]{
\begin{minipage}[t]{0.3\textwidth}
\centering
\includegraphics[width=\textwidth]{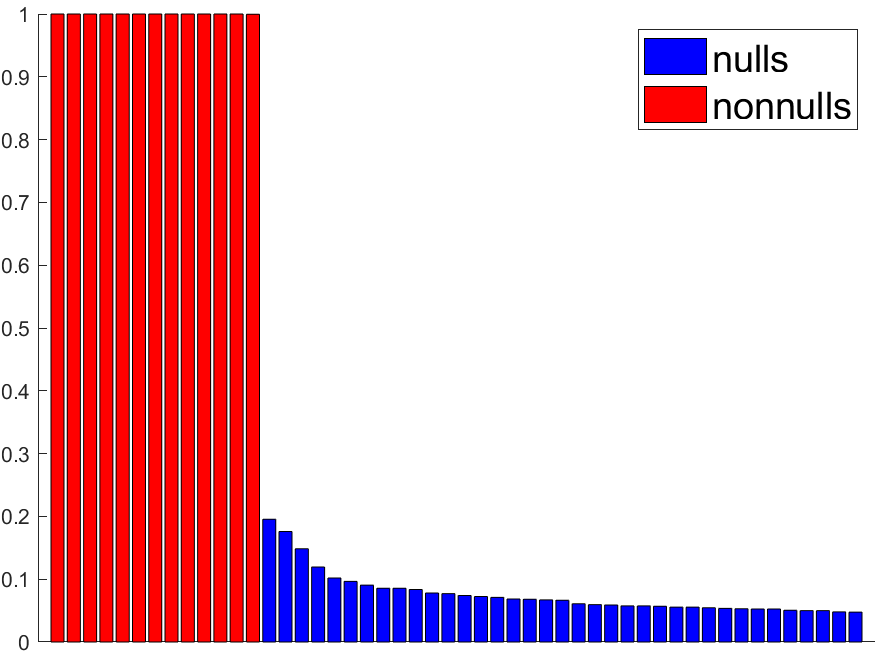}
\end{minipage}%
}%
\subfigure[$\Wst$ in $D_2$]{
\begin{minipage}[t]{0.3\textwidth}
\centering
\includegraphics[width=\textwidth]{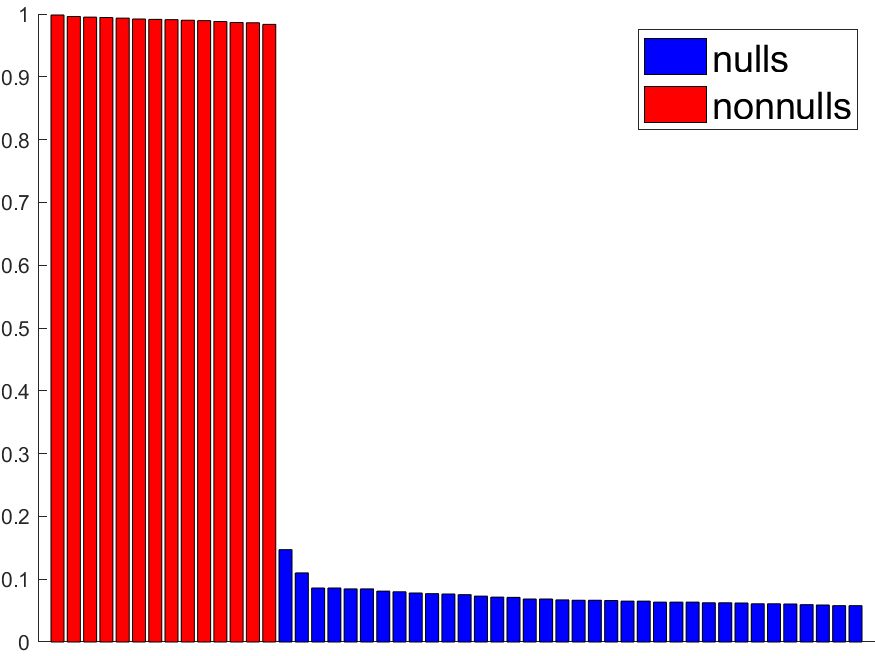}
\end{minipage}%
}%
\subfigure[$\Wst$ in $D_3$]{
\begin{minipage}[t]{0.3\textwidth}
\centering
\includegraphics[width=\textwidth]{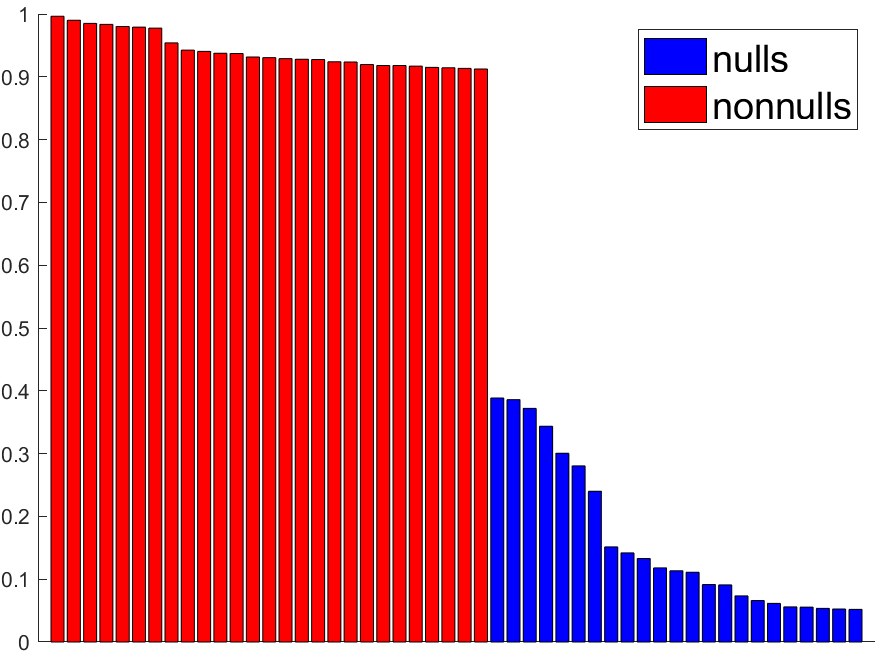}
\end{minipage}%
}%

\centering
\subfigure[$\Wbc$ in $D_1$]{
\begin{minipage}[t]{0.3\textwidth}
\centering
\includegraphics[width=\textwidth]{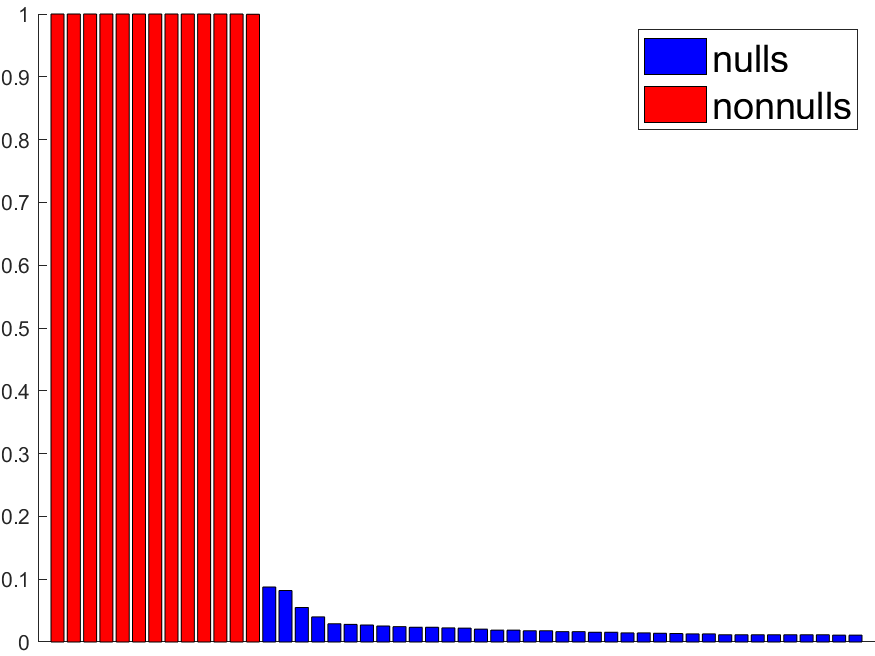}
\end{minipage}%
}%
\subfigure[$\Wbc$ in $D_2$]{
\begin{minipage}[t]{0.3\textwidth}
\centering
\includegraphics[width=\textwidth]{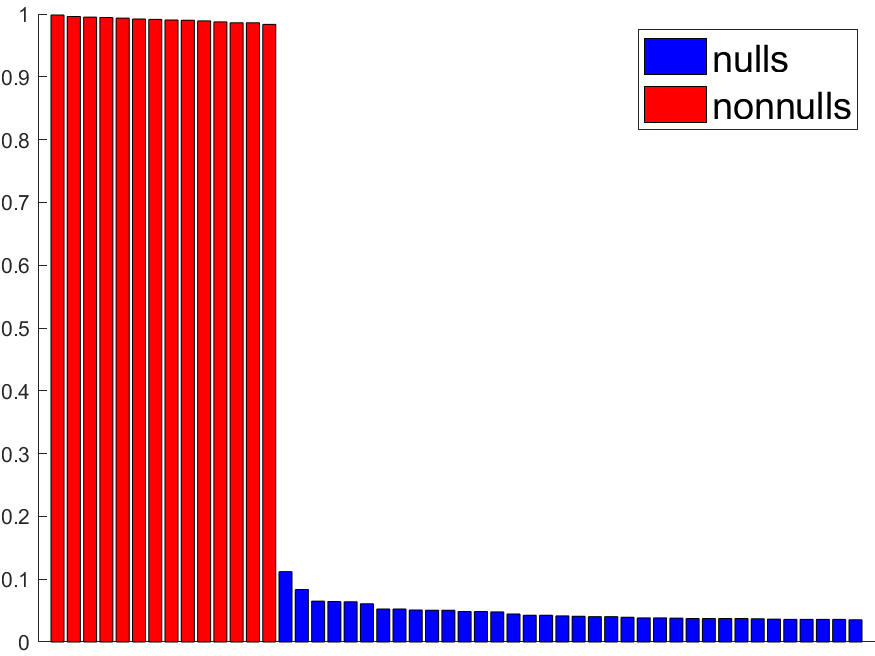}
\end{minipage}%
}%
\subfigure[$\Wbc$ in $D_3$]{
\begin{minipage}[t]{0.3\textwidth}
\centering
\includegraphics[width=\textwidth]{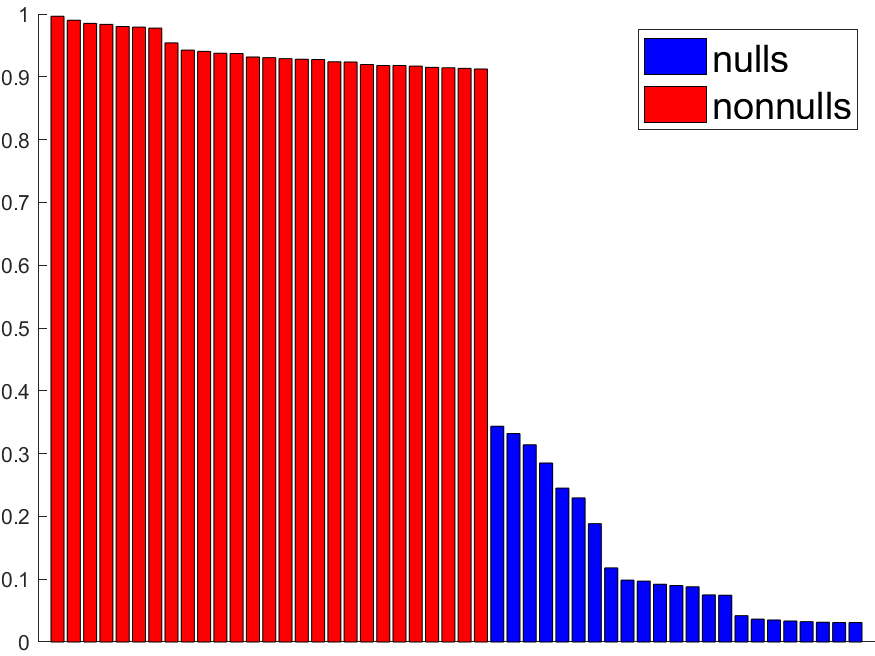}
\end{minipage}%
}%

\caption{Top 50 most frequently selected features of Split Knockoffs  in simulation experiments among 100 different data splits. Each bar in the figure represents the selection frequency of a particular feature. The intercept $\widehat{\beta}(\lambda)$ for Split Knockoffs is taken as a fixed cross validation optimal estimator $\widehat{\beta}_{\hat\nu, \hat{\lambda}}$, with the $\nu$ for calculating the feature and knockoff significance taken as $\hat\nu$. }
\label{fig: robust}
\end{figure}

In order to handle the transformational sparsity, Split Knockoffs introduce a random data splitting scheme in Section \ref{sec:splitknockoff}. Consequently, the random data splitting will lead to the random estimated support sets. 
In this section, we will use simulation experiments to study the robustness on features in the estimated support sets selected by Split Knockoffs in random data splitting. 
We succeed all the simulation settings on the dataset $(X, y)$ and transformational sparsity $D_1$, $D_2$, $D_3$ from Section \ref{sec: simulation_settings}.  
In the experiment, we randomly generate the dataset $(X, y)$ for 20 times. For each generation, we randomly perform 100 different data splits and conduct Split Knockoffs with each data split. The estimated support sets of Split Knockoffs are recorded for the $20\times 100$ tests.

In Figure \ref{fig: robust}, we present the selection frequencies of the features selected by Split Knockoffs in the $20\times 100$ tests.
For Split Knockoffs conducted in this section, we take $\widehat{\beta}(\lambda)$ as a fixed cross validation optimal estimator $\widehat{\beta}_{\hat\nu, \hat{\lambda}}$, and take the parameter $\nu$ for calculation of the feature and knockoff significance as $\hat\nu$.

As shown in Figure \ref{fig: robust}, in all cases, the nonnull features will have relatively stable and high frequencies to be selected, while the null features occur randomly and have much lower selection frequencies. Such a result suggests that the nonnull features can be robustly selected in the selection set of Split Knockoffs  with respect to random data splitting.

\subsection{Simulation Experiments on the Signal Strength}

\label{sec: snr change}
In this section, we implement simulation experiments to compare the performance of Split Knockoffs and Knockoffs when the signal strength varies. We succeed all simulation settings in Section \ref{sec: simulation_settings}, except that we take $\beta^*\in\mathbb R^p$ as
\begin{equation*}
    \beta_i^*:=\left\{
    \begin{array}{ccl}
        A   &   & i \le 20,\ i \equiv 0, -1 (\mathrm{mod}\ 3),\\
        0   &   & \mathrm{otherwise},
    \end{array} \right.
\end{equation*}
where $\log(A)$ varies from -0.5 to 0.5. The performance comparisons between Split Knockoffs and Knockoffs when the signal strength varies are presented in Figure \ref{fig: snr change main}. 

\begin{figure}[!ht]
\centering
\subfigure[(Split) Knockoff in $D_1$]{
\begin{minipage}[t]{0.3\textwidth}
\centering
\includegraphics[width=\textwidth]{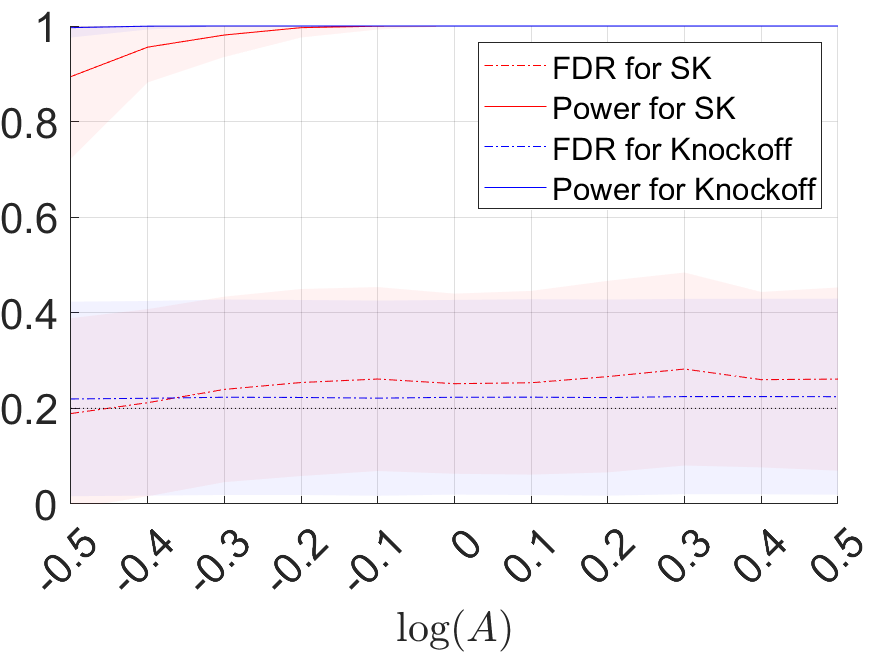}
\end{minipage}%
}%
\subfigure[(Split) Knockoff in $D_2$]{
\begin{minipage}[t]{0.3\textwidth}
\centering
\includegraphics[width=\textwidth]{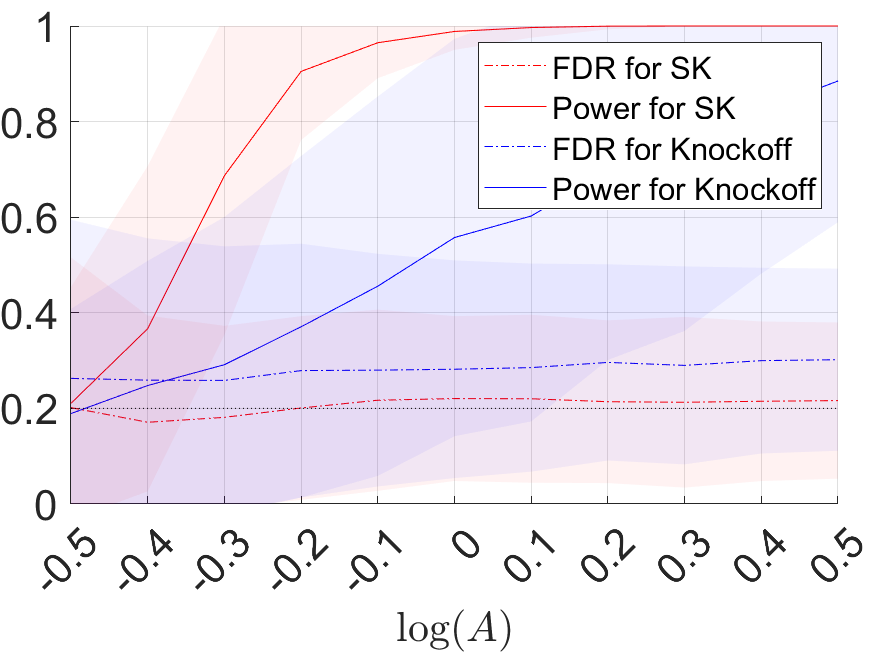}
\end{minipage}%
}%
\subfigure[Split Knockoff in $D_3$]{
\begin{minipage}[t]{0.3\textwidth}
\centering
\includegraphics[width=\textwidth]{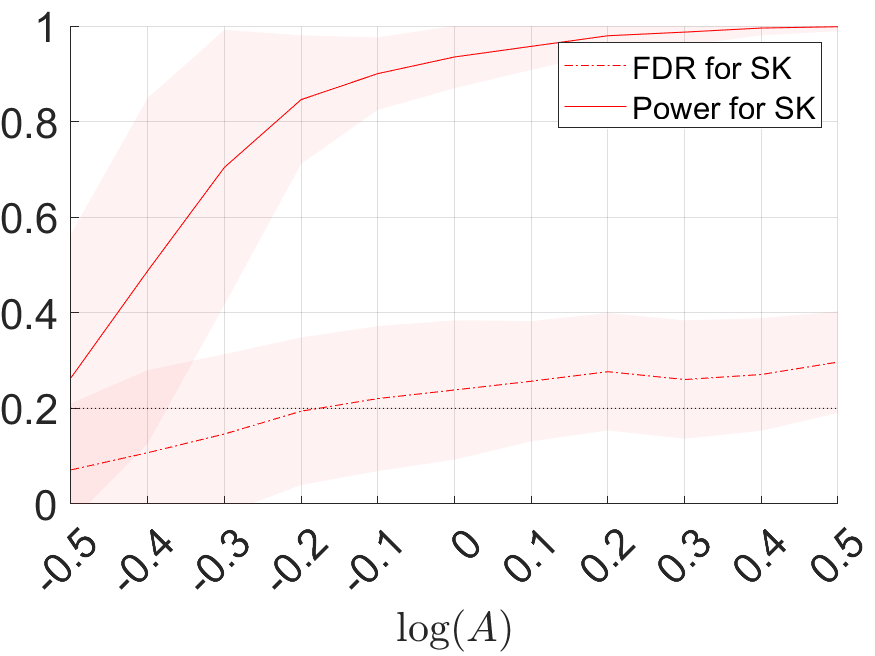}
\end{minipage}%
}%

\centering
\subfigure[(Split) Knockoff+ in $D_1$]{
\begin{minipage}[t]{0.3\textwidth}
\centering
\includegraphics[width=\textwidth]{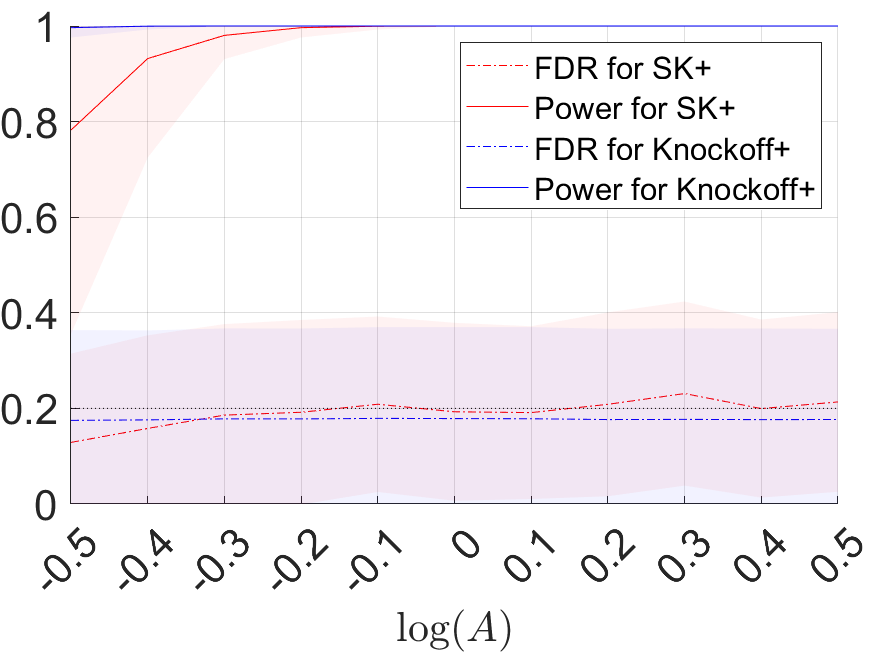}
\end{minipage}%
}%
\subfigure[(Split) Knockoff+ in $D_2$]{
\begin{minipage}[t]{0.3\textwidth}
\centering
\includegraphics[width=\textwidth]{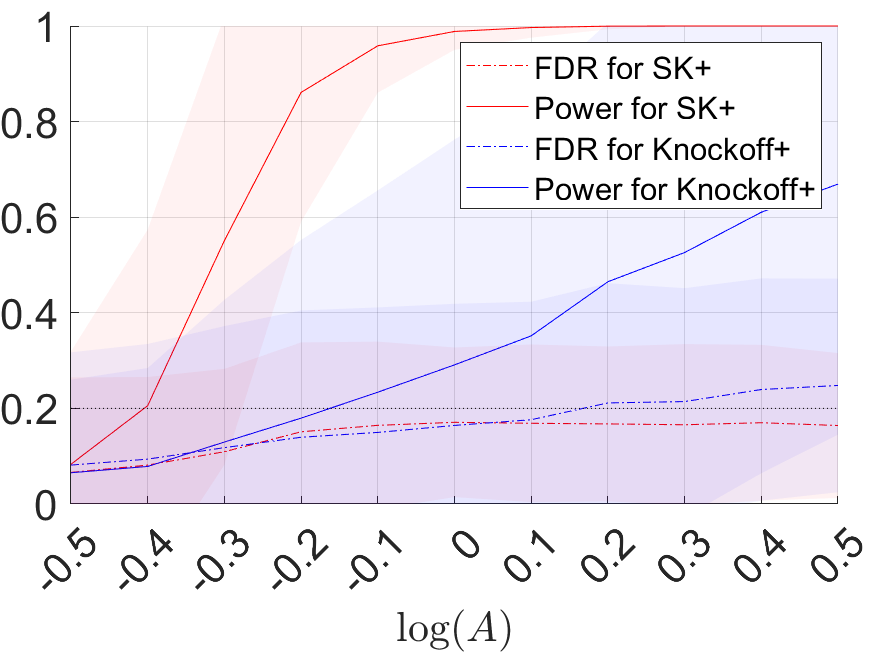}
\end{minipage}%
}%
\subfigure[Split Knockoff+ in $D_3$]{
\begin{minipage}[t]{0.3\textwidth}
\centering
\includegraphics[width=\textwidth]{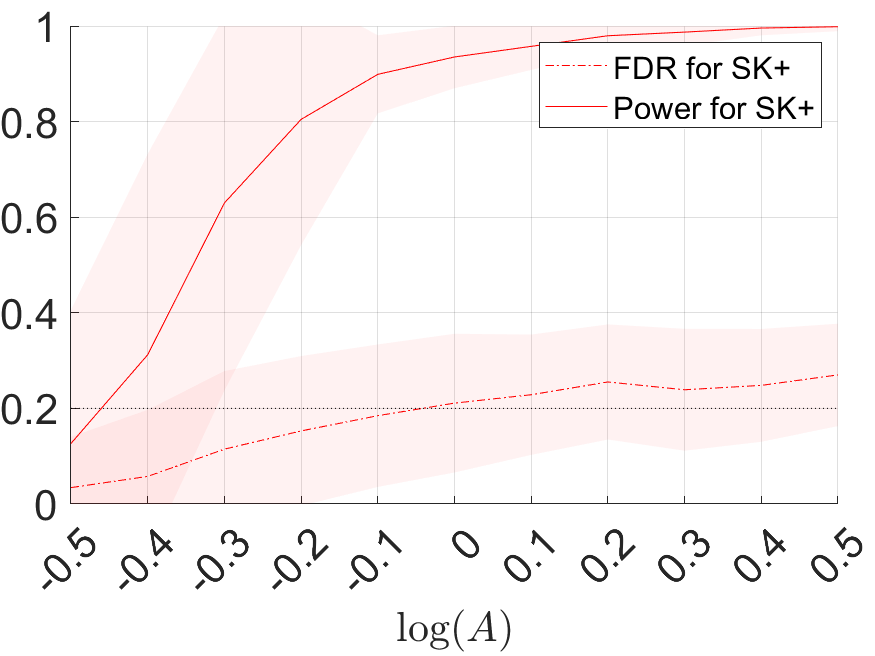}
\end{minipage}%
}%

\caption{The performance of Split Knockoffs and Knockoffs under different signal noise ratios: FDR and Power for $q=0.2$. The curves in the figures represent the average performance in FDR and Power in 200 simulation instances, while the shaded areas represent the 80\% confidence intervals truncated to the range $[0, 1]$.  The intercept $\widehat{\beta}(\lambda)$ for Split Knockoffs is taken as a fixed cross validation optimal estimator $\widehat{\beta}_{\hat\nu, \hat{\lambda}}$, with the $\nu$ for calculating the feature and knockoff significance taken as $\hat\nu$. We take the $W$ statistics of Split Knockoffs as $\Wst$ in this figure.}
\label{fig: snr change main}
\end{figure}

As presented in Figure \ref{fig: snr change main}, both Split Knockoffs and Knockoffs achieve desired FDR control when applicable. 

As for the selection power, in the case that the transformation is the identity matrix ($D_1$), Split Knockoffs exhibit lower selection power compared with Knockoffs when the signal strength is weak. In this case, the potential power improvement from the weaker $\nu$-incoherence condition in Proposition \ref{thm: sign consistency front} is overwhelmed by the dissatisfaction of the minimal signal strength condition \eqref{eq: min snr}. This leads to a loss in the selection power of Split Knockoffs.

On the other hand, when the linear transformation is nontrivial such as $D_2$, the power improvement brought by the $\nu$-incoherence condition in Proposition \ref{thm: sign consistency front} dominants the minimal signal strength condition \eqref{eq: min snr}. This enables Split Knockoffs to achieve higher selection power compared with Knockoffs.

\subsection{Analysis on Computational Cost}

In this section, we will give an analysis on the computational cost of Split Knockoffs. Among the procedures of Split Knockoffs, the computation of the two regularization paths in Equation \eqref{eq:gamma} and \eqref{eq:t_gamma} takes the most computational time. In particular, the computational cost on regularization paths varies for different choices of $\widehat{\beta}(\lambda)$. In the case that $\widehat{\beta}(\lambda)$ is a fixed point, the computational cost will be twice the cost of the generalized LASSO in the \url{glmnet} package \citep{friedman2010regularization, simon2011regularization}. However, in the case that $\widehat{\beta}(\lambda)$ is a solution path that changes with respect to $\lambda$, an additional cost of the generalized LASSO is spent; in this regard, the computational cost will be three times the cost of the generalized LASSO.

\section{Supplementary Material for Alzheimer's Disease Experiments}
\label{sec: freq}

In this section, we provide more details regarding the experiments on Alzheimer's Disease. In particular, we provide supplementary material for experiments in Alzheimer's Disease in Section \ref{Sec: applications}, where we plot the frequencies of the most frequently selected regions by Split Knockoffs in multiple data splits. 
To be specific, we show the selection results of Split Knockoff under 100 different random sample splits of the dataset in Alzheimer's Disease. We take the $W$ statistics for Split Knockoff as $\Wst$, with $\widehat{\beta}(\lambda)$ being taken as a cross validation optimal estimator $\widehat{\beta}_{\hat\nu, \hat{\lambda}}$, and $\nu$ being taken as $\hat\nu$. Throughout the random splitting, important regions and connections should be selected with high frequencies, while the false discoveries in each selection should have relatively low frequencies to be selected. We plot the top 10 most frequently selected regions/connections by Split Knockoff in 100 random data splits in Figure \ref{fig: ad regions} and Figure \ref{fig: ad connections}, with abbreviations of each region marked in the figures. A comparison table between the full region names and their abbreviations can be found in Table \ref{tab:name of region}. 

\begin{figure}[ht]
\centering
\includegraphics[width=0.75\textwidth]{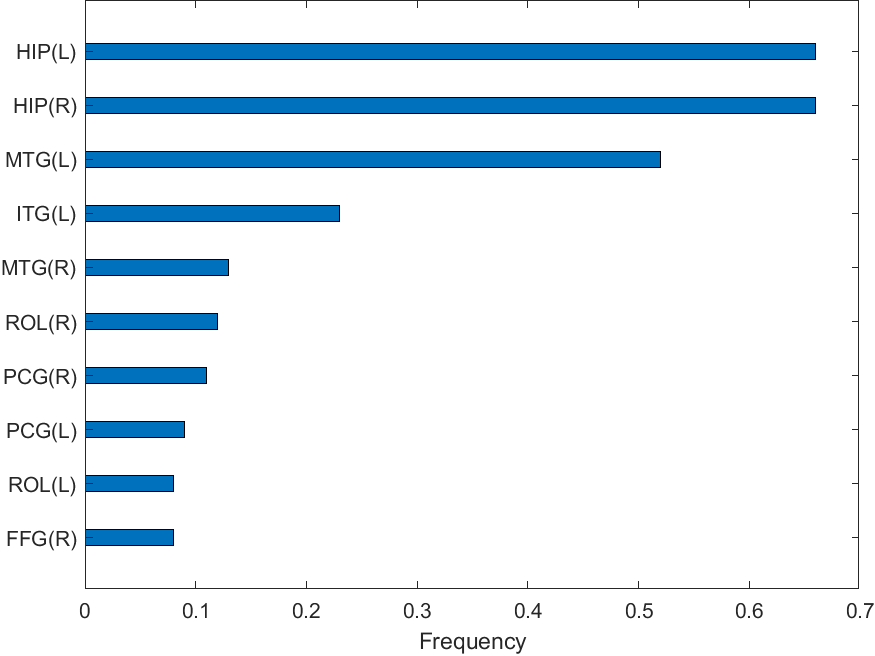}
\caption{Top 10 most frequently selected lesion regions by Split Knockoff in 100 random data splits. We take the $W$ statistics in Split Knockoff as $\Wst$, with $\widehat{\beta}(\lambda)$ being taken as a fixed cross validation optimal estimator $\widehat{\beta}_{\hat\nu, \hat{\lambda}}$, and $\nu$ being taken as $\hat\nu$.}
\label{fig: ad regions}
\end{figure}

\begin{figure}[ht]
\centering
\includegraphics[width=0.75\textwidth]{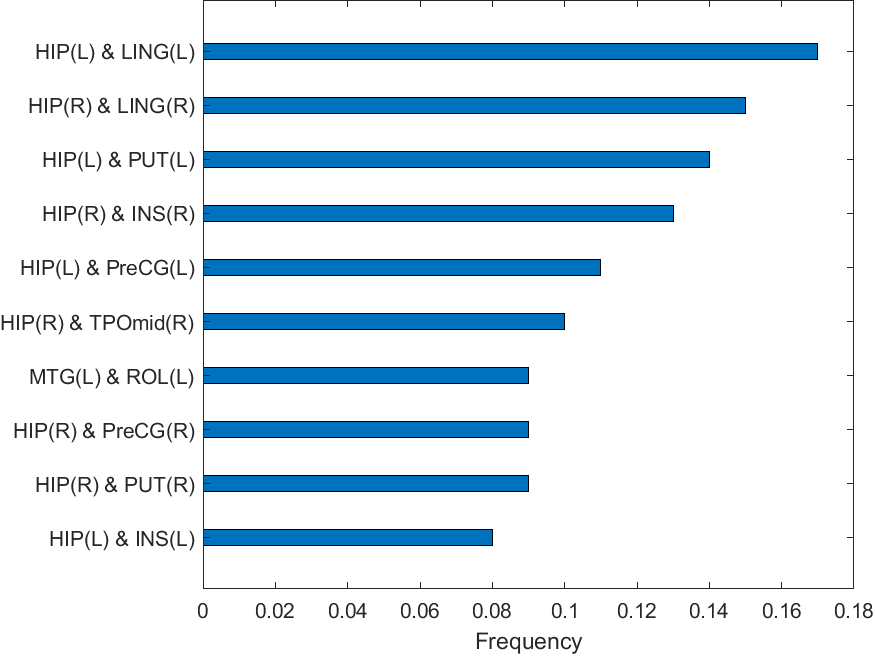}
\caption{Top 10 most frequently selected connections of adjacent regions by Split Knockoff in 100 random data splits.  We choose the $W$ statistics for Split Knockoff as $\Wst$, with $\widehat{\beta}(\lambda)$ being taken as a fixed cross validation optimal estimator $\widehat{\beta}_{\hat\nu, \hat{\lambda}}$, and $\nu$ being taken as $\hat\nu$.}
\label{fig: ad connections}
\end{figure}

\begin{table}[!ht]
    \caption{Names and Abbreviations for Cerebrum Brain Anatomical Regions}
    \label{tab:name of region}
    \centering
    \footnotesize{
    \begin{tabular}{|c|c|}
    \hline
Region Name & Abbreviation\\
\hline
Precental gyrus & PreCG\\
 Superior frontal gyrus,  dorsolateral  & SFGdor\\
 Superior frontal gyrus,  orbital part  & ORBsup\\
Middle frontal gyrus & MFG\\
 Middle frontal gyrus,  orbital part  & ORBmid\\
 Inferior frontal gyrus,  opercular part  & IFGoperc\\
 Inferior frontal gyrus,  triangular part  & IFGtriang\\
 Inferior frontal gyrus,  orbital part  & ORBinf\\
Rolandic operculum & ROL\\
Supplementary motor area & SMA\\
Olfactory cortex & OLF\\
 Superior frontal gyrus,  medial  & SFGmed\\
 Superior frontal gyrus,  medial orbital  & ORBsupmed\\
Gyrus rectus & REC\\
Insula & INS\\
Anterior cingulate and paracingulate gyri & ACG\\
Median cingulate and paracingulate gyri & MCG\\
Posterior cingulate gyrus & PCG\\
Hippocampus & HIP\\
Parahippocampal gyrus & PHG\\
Amygdala & AMYG\\
Calcarine fissure and surrounding cortex & CAL\\
Cuneus & CUN\\
Lingual gyrus & LING\\
Superior occipital gyrus & SOG\\
Middle occipital gyrus & MOG\\
Inferior occipital gyrus & IOG\\
Fusiform gyrus & FFG\\
Postcentral gyrus & PoCG\\
Superior parietal gyrus & SPG\\
 Inferior parietal,  but supramarginal and angular gyri  & IPL\\
Supramarginal gyrus & SMG\\
Angular gyrus & ANG\\
Precuneus & PCUN\\
Paracentral lobule & PCL\\
Caudate nucleus & CAU\\
Lenticular nucleus putamen & PUT\\
 Lenticular nucleus,  pallidum  & PAL\\
Thalamus & THA\\
Heschl gyrus & HES\\
Superior temporal gyrus & STG\\
Temporal pole: superior temporal gyrus & TPOsup\\
Middle temporal gyrus & MTG\\
Temporal pole: middle temporal gyrus & TPOmid\\
Inferior temporal gyrus & ITG\\
\hline
    \end{tabular}
    }
\end{table}

\end{document}